\title{Tractable Graph Structures in EFX Orientation}
\author{Václav Blažej}
\author{Sushmita Gupta}
\author{M.S. Ramanujan}
\author{Peter Strulo}
\begin{abstract}
Since its introduction, envy-freeness up to any good (EFX) has become a fundamental solution concept in fair division of indivisible goods. Its existence, however, remains elusive---even for four agents with additive utility functions, it is unknown whether an EFX allocation always exists. Unsurprisingly, researchers have explored restricted settings to delineate tractable and intractable cases.
Christadolou, Fiat et al.~[EC'23] introduced the notion of EFX-orientation, where the agents form the vertices of a graph and the items correspond to edges, and an agent values only the items that are incident to it. The goal is to allocate items to one of the adjacent agents while satisfying the EFX condition.
This graph based setting has received considerable attention and has led to a growing body of work.

Building on the work of Zeng and Mehta'24, which established a sharp complexity threshold based on the structure of the underlying graph---polynomial-time solvability for bipartite graphs and NP-hardness for graphs with chromatic number at least three---we further explore the algorithmic landscape of EFX-orientation by exploiting the graph structure using parameterized graph algorithms.

Specifically, we show that bipartiteness is a surprisingly stringent condition for tractability: EFX orientation is NP-complete even when the valuations are symmetric, binary and the graph is at most two edge-removals away from being bipartite. Moreover, introducing a single non-binary value makes the problem NP-hard even when the graph is only one edge removal away from being bipartite. We further perform a parameterized analysis to examine structures of the underlying graph that enable tractability. In particular, we show that the problem is solvable in linear time on graphs whose treewidth is bounded by a constant. Furthermore, we also show that the complexity of an instance is closely tied to the sizes of acyclic connected components on its one-valued edges. 
\end{abstract}
\declaretheorem{theorem}
\theoremstyle{acmplain}
\newtheorem{lemma}[theorem]{Lemma}
\newtheorem{proposition}[theorem]{Proposition}
\newtheorem{observation}[theorem]{Observation}
\Crefname{observation}{Observation}{Observations}
\Crefname{reduction}{Reduction rule}{Reduction rules}
\newtheorem{claim}[theorem]{\textit{Claim}}
\Crefname{claim}{Claim}{Claims}
\Crefname{clm}{Claim}{Claims}
\Crefname{clm2}{Claim}{Claims}
\theoremstyle{acmdefinition}
\newtheorem{definition}[theorem]{Definition}
\newcommand{\app}{$\!{\rm(\clubsuit)}$\xspace}
\newcommand{\EFX}{\textsc{EFX}\xspace}
\newcommand{\EFXO}{\textsc{EFX-orientation}\xspace}
\newcommand{\EFXp}{\textsc{EFX}$^{+}$\xspace}
\newcommand{\PMSAT}{\textsc{Planar Monotone 3-SAT}\xspace}
\newcommand{\OG}{\overrightarrow{G}}
\renewcommand{\OE}{\overrightarrow{E}}
\newcommand{\sm}{\setminus\!}
\newcommand{\smallcore}{\mathsf{small\mbox{-}core}}
\newcommand{\taucore}{\tau\mathsf{\mbox{-}core}}
\newcommand{\bigcore}{\mathsf{big\mbox{-}core}}
  \newcommand{\D}{\mathcal D}
    \newcommand{\tpp}{\mathrm{top}}
    \newcommand{\btt}{\mathrm{bot}}
    \newcommand{\no}{\mathrm{no}}
\newcommand{\defprob}[3]{
    \begin{tcolorbox}[left=6pt,right=6pt]
        \begin{minipage}{0.99\textwidth}
            \begin{tabular*}{\textwidth}{@{\extracolsep{\fill}}ll} #1   \\ \end{tabular*}
            {\bf{Input:}} #2  \\
            {\bf{Question:}} #3
        \end{minipage}
    \end{tcolorbox}
}
\begin{document}

\begin{titlepage}

\maketitle
\vspace{1cm}
\setcounter{tocdepth}{2} 

\end{titlepage}


\section{Introduction}

The fair division of resources is a classical problem with deep roots in economics, game theory, and computer science. While fair division of divisible goods—such as land, time, or money—allows for precise proportional splits, the problem becomes significantly more challenging when dealing with indivisible goods, like houses, tasks, or family heirlooms in an inheritance. These goods cannot be split without losing their value, making it difficult to guarantee fairness.

A key challenge in fair division of indivisible goods is that traditional fairness concepts, like proportionality and envy-freeness, may not be attainable even in the simplest scenarios, such as one more agent than items. To address these challenges, researchers have introduced relaxed fairness criteria such as {\it envy-free up to one good} (EF1) \cite{Budish11}, where the goal is to allocation of $m$ indivisible goods among $n$ agents $A_1, \ldots, A_n$ in a manner that for every pair of agents $i$ and $j$, there exists an item $x \in A_i$ such that agent
$j$ does not value $A_i\sm \{x\}$ higher than $A_j$, its own assignment. 
Such allocations are known to always exist, and can be found by the {\it round-robin} protocol~\cite{Caragiannis-EFX/journal} or {\it envy-cycle elimination} protocol~\cite{Lipton04}. In certain scenarios, an EF1 solution can manifest in an absurd unrealistic manner, as demonstrated by the following example: two siblings fighting over their inheritance are not likely to be satisfied if the older gets the family home and grandfather's watch and the younger sibling gets the car, even though the latter is said to be EF1-happy because the removal of the home would make the younger like their bequest better than of the sibling's. Strengthening the condition Caragiannis et al.~\shortcite{Caragiannis-EFX/journal} introduced {\it envy-freeness up to any good} (EFX), where the goal is to create an allocation $A_1, \ldots, A_n$ where for every pair of agents $i$ and $j$, we have that for {\it any} item $x \in A_i$, agent
$j$ values $A_j$ at least as much as $A_i\sm \{x\}$.\footnote{The precise definition in \cite{Caragiannis-EFX/journal} was that of {\it envy-freeness up to any positively-valued good} ~(\EFXp): For any item 
 $x\in A_i$ that agent $j$ values as strictly positive, we have that $j$ values $A_j$ at least as much as $A_i\sm \{x\}$.} In the short time since its introduction~\cite{Caragiannis-EFX/journal}, there has not been a shortage of work directed at solving the EFX problem. In particular, \cite{PR20} show that EFX allocations exist for two agents for general valuations or when they share the same ordinal ranking of goods. A significant breakthrough was attained by Chaudhury et al.~\shortcite{Chaudhury24} which shows that EFX always exists for three
agents when the valuations are additive, and it was later extended to more
general valuations but at the cost of leaving some items unassigned (i.e {\it charity}) Berger et al. \shortcite{BergerCohenFeldmanFiat22} and Akrami et al.~\shortcite{AkramiAlonChaudhury/EC23}.

Due to \cite{Mahara24/journal} it is also known that EFX allocations exist for arbitrary allocations if the number of goods is at most 3
more than the number of agents, that is $m \leq n+3$. Beyond this, the problem has proven to be highly elusive to pin-down even for four agents and additive valuations.  Procaccia in his oft-cited article ``Fair Division’s Most Enigmatic Question"~\cite{Procaccia/article/ACM} has called this a ``major open question" in fair division. 
In light of this problem's difficulty, many papers have explored the problem under various restrictions and one of these is the focus of our work. 

\medskip
\noindent{\bf Our setting. }In this article, we investigate the problem of allocating $m$ indivisible goods among $n$ agents in the graphical setting as introduced by Christodoulou et al.~\shortcite{christodoulou_fair_2023}. The agents are represented as vertices of a graph and the items are represented by the edges. The agents view the items (i.e edges) they are not incident to as {\it irrelevant}, and thus value them as 0. We study the EFX allocation question for a class of valuation functions associated with a graph. Christodoulou et al.~\shortcite{christodoulou_fair_2023} considered both possibilities: an allocation that allows irrelevant items to be allocated as well as that which does not (calling it an \EFX~{\it orientation}). They showed that while an \EFX allocation always exists and can be computed in polynomial time for general valuation functions, computing an \EFX orientation (formally the problem is called {\EFXO}) is NP-complete even when valuations are symmetric, that is agents that share an edge value it equally.

Since \cite{christodoulou_fair_2023} this setting has been subsequently explored by several works that have extended the \EFX allocation question to related settings. 
Most related to our work are that of Zeng et al.~\shortcite{MehtaZeng2024structureefxorientationsgraphs}, Afshinmehr et al.~\shortcite{Afshinmehr/EFX-A-O-bipartite}, and Bhaskar et al.~\shortcite{Bhaskar24/EFX-A-multigraph}. Zeng et al.~\shortcite{MehtaZeng2024structureefxorientationsgraphs} study the problem of {\it strong \EFX orientation}, where the goal is to decide if a given graph will have an \EFX orientation irrespective of the valuation function. They establish a non-obvious connection with the chromatic number of the graph, and in the process define a complete characterization of strong EFX orientation when restricted to binary valuations. Specifically, they demonstrate the centrality of bipartiteness in the tractability of the EFX problem. Other papers, such as \cite{Bhaskar24/EFX-A-multigraph}, \cite{Afshinmehr/EFX-A-O-bipartite} have since focused on bipartite graphs or multi-graphs to find positive results.
Afshinmehr et al.~\shortcite{Afshinmehr/EFX-A-O-bipartite} study \EFX allocation and orientation questions when restricted to bipartite multi-graphs with additive valuations and show that the former is always guaranteed to exist but the latter may not even for simple instances of multi-trees. They show that that orientation question is NP-hard even when the number of agents is a constant and study parameters such as the maximum number of edges between two vertices and the diameter of the graph. 
Bhaskar et al.~\shortcite{Bhaskar24/EFX-A-multigraph} study EFX allocation in bipartite multi-graphs, multi-tree, and multi-graphs with bounded girth for cancellable monotone valuations.

Note that the aforementioned graphical setting should not be confused with the one studied by Payan et al. \shortcite{PayanSV23} where like us, the vertices are agents, however the graph is used to relax fairness notions such that a fairness condition (such as EF, EF1, EFX) only need to be satisfied between pairs of vertices that share an edge. Some additional related works can be found in the \Cref{additional-rw}. 

\medskip
\noindent{\bf Our contributions.} Our first result starts from the work of \cite{MehtaZeng2024structureefxorientationsgraphs} and probes deeper within the structural aspects of the underlying graph in delineating the tractability landscape of the EFX orientation problem.
In bipartite graphs, \EFXO can be solved even with general additive valuations \cite[Lemma 4.1]{MehtaZeng2024structureefxorientationsgraphs}. However, the complexity of the problem remains unclear, even under {\em binary} additive valuations, when the graph is ``close'' to bipartite—i.e., it can be made bipartite by removing only a few edges.  The minimum number of edge removals required to make a graph bipartite is known as the \emph{min-uncut} number, a term derived from the classic \textsc{Max-Cut} problem. Specifically, removing the edges that do not cross a max-cut—the so-called ``uncut'' edges—yields a minimum edge set whose deletion results in a bipartite graph. 

For readers familiar with parameterized complexity, this value is also referred to as the \emph{edge odd cycle transversal (edge-OCT)} number of the graph since making the graph bipartite by deleting minimum number of edges is equivalent to finding a minimum set of edges intersecting every odd cycle in the graph. This approach of analyzing a problem's complexity based on the input's ``distance from 
triviality'' is a well-established methodology in the area of parameterized complexity. It is specifically designed to delineate the boundaries of tractability for various problems that are NP-hard in general~\cite{AgrawalR22,GuoHN04}. The key idea is to first establish tractability for a class of "trivial" instances and then examine how complexity evolves as we gradually move away from this class—for instance, graphs that are ``almost'' bipartite.

\begin{restatable}{theorem}{tractableEOCTone}\label{cor:tractableEOCTone}
    Every \EFXO instance with binary, symmetric and additive valuations where the input graph has min-uncut number 1, has a solution that can be computed in linear time.
    \end{restatable}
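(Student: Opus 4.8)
The plan is to reduce the \EFX condition, for binary symmetric valuations on a simple graph, to a purely combinatorial property of the orientation, use near-bipartiteness to satisfy it everywhere outside a bounded neighbourhood of the one ``bad'' edge, and patch that neighbourhood by hand. Call an edge a \emph{$1$-edge} if both of its (equal) endpoint values equal $1$, and a \emph{$0$-edge} otherwise; under a fixed orientation say a vertex is \emph{saturated} if it is assigned at least one $1$-edge, and \emph{heavy} if it is assigned at least two edges. Since the only edge of a bundle $A_i$ that agent $j$ can value positively is $ij$ itself, agent $j$ envies $i$ beyond the \EFX slack exactly when $ij$ is a $1$-edge lying in $A_i$ alongside at least one further edge while $j$ holds no $1$-edge; hence an orientation is \EFX if and only if every $1$-edge assigned to a heavy vertex has its other endpoint saturated, and it suffices to construct one such orientation.

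Write $G = H + e^{\star}$ with $H = G - e^{\star}$ bipartite; fix a bipartition $(A,B)$ of $H$ and write $e^{\star}=uv$. As $G$ is non-bipartite, $u$ and $v$ lie on the same side, say $u,v\in A$; and since \EFX is a per-component property and every component of $G$ other than the one containing $e^{\star}$ is bipartite, we may assume $G$ is connected. I would first build a \emph{base orientation}: orient every edge of $H$ towards its $B$-endpoint, and then, for each $a\in A$ incident in $H$ to a $1$-edge, re-route one such $1$-edge back towards $a$. After this no vertex of $A$ is heavy, and whenever $b\in B$ is heavy each $1$-edge assigned to $b$ points to a vertex $a\in A$ that — being incident to that very edge — was made saturated; so the \EFX condition already holds everywhere, the only vertices possibly still needing attention being $u$, $v$, and the few vertices in their vicinity, since the assignment around $e^{\star}$ is not yet fixed. (With the patching step below omitted, this is already a linear-time \EFX orientation of any bipartite graph.)

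Step two assigns $e^{\star}$ and repairs the bounded neighbourhood of $u$ and $v$: each of $u,v$ receives at most its re-routed $1$-edge together with possibly $e^{\star}$, hence at most two edges. I would case-analyse whether $e^{\star}$ is a $1$-edge and whether each of $u,v$ is incident in $H$ to a $1$-edge, assigning $e^{\star}$ to the endpoint that can absorb it without becoming a heavy vertex that points a $1$-edge at an unsaturated vertex, and, in the remaining case where both $u$ and $v$ are incident to $1$-edges of $H$, additionally re-routing inside the $1$-edge component(s) meeting $u$ and $v$ so that the neighbour whose $1$-edge is kept by the recipient of $e^{\star}$ is itself saturated. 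I expect the main obstacle to be proving that a consistent set of such choices always exists. This is exactly where the hypothesis ``min-uncut $1$'' rather than $2$ is used: because $e^{\star}$ is the \emph{only} edge joining two vertices on the same side of the bipartition, the local configurations that would obstruct the repair cannot arise — matching the paper's accompanying result that \EFXO is already NP-complete at min-uncut $2$.

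Finally, everything runs in linear time: one BFS $2$-colours $H$ and extracts the $1$-edge subgraph together with its components, the base orientation and its re-routing are greedy, and step two re-routes only within the $1$-edge components touching $e^{\star}$, again in a single pass.
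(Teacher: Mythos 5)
Your overall strategy --- fix a bipartition $(A,B)$ of $G-e^\star$, orient every edge of $G-e^\star$ towards its $B$-endpoint, re-route one $1$-edge back to each $A$-vertex that has one, and then patch the neighbourhood of $e^\star$ --- is close in spirit to the paper's proof of \Cref{thm:Edge-OCT1-general-condition}, and your reformulation of \EFX (no heavy vertex holds a $1$-edge whose other endpoint is unsaturated) is a correct restatement of \Cref{obs:strong_envy_conditions}. However, the patching step does not go through, and it fails exactly where you flagged uncertainty. The missing idea, which the paper isolates in \Cref{lem:corollaryEOCT}, is that the side of the bipartition containing the endpoints of $e^\star$ must be chosen according to $w(e^\star)$: if $w(e^\star)=1$ those endpoints go on the lightly-loaded side (your $A$), but if $w(e^\star)=0$ they must go on the heavily-loaded side (your $B$). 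You always place $u,v\in A$, and when $w(e^\star)=0$ this cannot in general be repaired by edits confined to the $1$-edge components touching $u$ and $v$.

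Concretely, take $V(G)=\{u,v,c,a,b,w,w'\}$ with $0$-edges $uv=e^\star$, $uc$, $vc$, $aw$, $bw'$ and $1$-edges $ua$, $vb$. Then $G-e^\star$ is bipartite with sides $X=\{u,v,w,w'\}$ and $Y=\{a,b,c\}$, the triangle $u,v,c$ is an odd cycle, and the min-uncut number is exactly $1$. With your choice $A=X$, $B=Y$, the base orientation sends $aw\to a$ and $bw'\to b$, and the $1$-edge components of $u$ and $v$ are the single edges $ua$ and $vb$, so your only freedom is the orientation of these two edges and the placement of $e^\star$. Every choice produces strong envy: if $e^\star\to u$ and $ua\to u$, then $a$ (unsaturated, holding only $aw$) strongly envies $u$ (heavy, holding $ua$ and $e^\star$); if instead $ua\to a$, then $u$ (unsaturated) strongly envies $a$ (now heavy, holding $ua$ and $aw$); and sending $e^\star$ to $v$ is symmetric with $b$. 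Every \EFX orientation of this instance must orient $aw$ to $w$ and $bw'$ to $w'$, i.e.\ must load the $0$-edges onto the $X$-side --- precisely the swap the paper performs when $w(e^\star)=0$, which ensures the premise of \Cref{thm:Edge-OCT1-general-condition} that $G[A]$ contain only $1$-edges and $G[B]$ only $0$-edges. Adding that case distinction would close the gap.
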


We then prove a complementary lower bound.

\begin{restatable}{theorem}{edgeocttwo}\label{thm:edgeoct2}
    \EFXO is NP-hard in both the following cases.
    \begin{itemize}\item Instances with binary, symmetric and additive valuations where the input graph has min-uncut number equal to 2. 
\item   Instances with symmetric and additive valuations on graphs with min-uncut number equal to 1 when all edges except one have binary valuations.
\end{itemize}
\end{restatable}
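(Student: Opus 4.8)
The plan is to prove both statements by polynomial-time reductions from an NP-hard restriction of satisfiability, with a monotone/planar variant of 3-SAT (\PMSAT) a natural choice because its positive/negative, planar structure is convenient for organising the gadgets while keeping the constructed graph nearly bipartite. The first step is to reformulate the EFX condition for binary, symmetric valuations as a purely combinatorial orientation requirement: orienting every edge towards the agent that receives it, an orientation is EFX iff for every $1$-valued edge $\{i,j\}$ directed towards $i$ we have that, whenever $i$'s bundle has size at least two, $j$'s bundle contains at least one $1$-valued (hence $j$-incident) edge. Thus $0$-valued edges matter only through the ``$|A_i|\ge 2$'' trigger, and the whole problem becomes a constraint-satisfaction question about in-degrees in the subgraph of $1$-edges; this is the language in which I would design all gadgets.

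The backbone of the construction would be a bipartite graph encoding the formula: variable gadgets are paths (or even cycles) whose orientation propagates a truth value, clause gadgets are small bipartite trees that read the incident literal gadgets, and the pieces are wired so that every consistent truth assignment induces a valid local orientation of the backbone --- this is exactly the slack guaranteed by the bipartite tractability result \cite[Lemma 4.1]{MehtaZeng2024structureefxorientationsgraphs}, and it is what lets us push all of the instance's ``tension'' onto a constant number of extra edges. For the first statement I would add two special edges, each closing an odd cycle that runs through the clause gadgets and whose EFX constraint can be satisfied only when the corresponding part of the formula is satisfied, so that the constructed \EFXO instance has a solution iff the formula is satisfiable. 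Deleting these two edges returns the bipartite backbone, so the min-uncut number is at most $2$, and exhibiting two edge-disjoint odd cycles in the construction shows it is exactly $2$; the count cannot be lowered to one for binary symmetric valuations by \Cref{cor:tractableEOCTone}, so the two results are complementary and tight.

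For the second statement I would retain a single special edge --- hence a single odd cycle, which, after checking the rest of the graph is bipartite, makes the min-uncut number exactly $1$ --- and promote one edge, conveniently the special edge itself or a well-placed backbone edge, to a non-binary value $c\ge 2$. A symmetric value-$c$ edge $\{i,j\}$ directed towards $i$ forces, whenever $i$ holds at least two items, the inequality $v_j(A_j)\ge c$; when $j$'s remaining edges are ordinary $1$-edges this makes $j$ collect at least two items as well, which in turn triggers a cascade of further constraints that no binary edge can initiate. I would tune $c$ and the local wiring so that this single inequality replays the combined effect of the odd cycle removed from the first construction, again producing satisfiability-equivalence while leaving every other edge binary and symmetric; the min-uncut bound is argued as before.

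The step I expect to be the main obstacle is the gadget engineering that makes all requirements hold simultaneously: a faithful encoding of arbitrary instances of the source problem; every gadget bipartite so the backbone never blocks a solution; all conflicts concentrated on the two (resp.\ one) special edge(s); and, in the first case, strictly binary symmetric valuations throughout. Verifying that the forced propagation of orientations through the backbone never creates a spurious conflict (soundness) nor a spurious satisfying configuration (completeness) is where the bulk of the case analysis lies, and arranging for the lone non-binary edge in the second reduction to emulate an entire odd-cycle gadget --- rather than something strictly weaker --- is the most delicate point.
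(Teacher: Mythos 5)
Your proposal sketches a strategy (bipartite backbone for the formula, plus a constant number of odd-cycle-closing special edges carrying all the tension) that is in the right spirit but differs from the paper in several concrete ways and, more importantly, leaves the load-bearing mechanism unaddressed.

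On the differences: the paper reduces from \textsc{Multicolored Independent Set} rather than \PMSAT, builds the backbone out of four-vertex subpaths per MIS-vertex plus auxiliary $P_5$'s $Q_u$, and keeps the entire ``odd-cycle'' part confined to a single constant-size gadget on four extra vertices $\{a_1,a_2,b_1,b_2\}$ that is connected to the rest only by $0$-edges. Both results in the theorem are obtained from the \emph{same} backbone by swapping out this 4-vertex gadget for one of three variants, rather than by building two separate reductions. And for the non-binary edge, the paper uses a value $m$ with $0<m<1$, not $c\ge 2$: the point is precisely that the recipient of the $m$-edge ends up with a bundle worth $m<1$ and will therefore strongly envy any neighbor holding a $1$-edge plus one more item. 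Your proposed $c\ge 2$ mechanism (``forces $v_j(A_j)\ge c$, hence $j$ collects at least two items'') is a genuinely different trigger, and it is not clear it fires: if $j$ receives only the $c$-edge, \Cref{obs:strong_envy_conditions} requires $j$ to hold a second item before $i$ can strongly envy $j$, so the inequality you want need never be activated.

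The real gap, however, is that you do not explain how a \emph{local} EFX constraint on one or two odd-cycle edges can encode a \emph{global} satisfiability condition. An edge's EFX constraint only speaks about its two endpoints' bundles, and a bipartite backbone --- as you yourself note --- can always be oriented EFX in isolation. The paper solves exactly this propagation problem: the gadget forces a global property (``goodness'': every vertex with $0$-edges to both $a_1,a_2$ or both $b_1,b_2$ must receive one of them), and \Cref{lem:reductionindependence} shows how goodness propagates along the auxiliary paths $Q_u$ so that choosing $u_2$ or $u_4$ as a root of one color-path forbids $v_2,v_4$ as roots of any neighboring color-path. Your write-up calls this propagation ``the main obstacle'' and ``the most delicate point'' and then stops; since this is the entire content of the correctness argument (both soundness and completeness hinge on it), the proposal as written does not constitute a proof.
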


Our next set of results proves another dichotomy, this time focused in a more refined way, on the structure defined by the edges valued as 1. Crucially, we prove that the presence of a $P_5$, an induced path on 5 vertices, in $G_1$ (the subgraph of the input graph $G$ that only contains the edges valued 1) plays a fundamental role in the intractability of \EFXO. We refer the reader to \Cref{sec:prelims} for a formal definition of all terms involved. 

\begin{restatable}{theorem}{threesatnpc}\label{thm:3satnpc}
    \EFXO with binary, symmetric and additive valuations is NP-complete even if each connected component of $G_1$ is $P_2$, $P_3$, or $P_5$, graph $G_0$ has maximum degree 1, and the graph $G$ is planar.
\end{restatable}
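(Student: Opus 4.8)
The plan is to show \EFXO is NP\mbox{-}complete by proving membership in NP and then hardness via a polynomial-time reduction from the NP-hard \PMSAT problem. Membership is immediate: given a candidate orientation, the EFX condition only has to be checked between \emph{adjacent} agents, since if $i$ and $j$ are non-adjacent then no item of $A_i$ is valued by $j$, so $v_j(A_i\setminus\{x\})=0\le v_j(A_j)$ automatically; there are $O(|E(G)|)$ adjacent pairs and each check is trivial. Before building gadgets I would record a clean combinatorial reformulation of EFX-orientations for binary, symmetric instances, which is what makes gadget reasoning possible. Writing $p(v)$ for the number of $G_1$-edges oriented to $v$ (this equals the value of $v$'s bundle), a pair-by-pair analysis shows that a $G_0$-edge never creates an EFX constraint between its endpoints, and that an orientation is EFX if and only if for every $G_1$-edge $e=uv$ oriented to $u$, either $u$ receives no item besides $e$, or $p(v)\ge 1$. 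Intuitively, an agent may hold two or more items (one possibly a worthless $G_0$-edge) only if each of its incoming $G_1$-edges comes from a neighbour with positive value. Hence, although $G_0$-edges are worthless, they behave like \emph{wires}: forcing a $G_0$-edge into a vertex can forbid that vertex from being in the ``holds exactly its one $G_1$-edge'' state. Note also that two adjacent agents in different $G_1$-components can only be joined by a $G_0$-edge, so the reformulation shows there is no EFX interaction across gadget boundaries beyond this already-tracked effect on bundle sizes; gadget analyses therefore compose.

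\textbf{Variable gadgets, wires, and fan-out.} For each variable $x$ I would use an even cycle alternating $P_2$-components of $G_1$ with $G_0$-edges. By the reformulation each $P_2$ must point to one endpoint, which then cannot also receive a $G_0$-edge; propagating this constraint around the cycle forces it into exactly one of two global states, which we read as $x$ true or $x$ false. From the cycle I route, to each clause containing $x$, a \emph{literal wire}: a path again alternating $P_2$- and $P_3$-components of $G_1$ with $G_0$-edges, along which the variable's forced state propagates, with $P_3$-components acting as splitters so a variable can serve several clauses and (for negative occurrences, which in a \PMSAT instance lie on the opposite side of the variable line) as inverters. All $G_1$-components introduced here are $P_2$ or $P_3$, and every vertex meets at most one $G_0$-edge, so $G_0$ stays a matching.

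\textbf{Clause gadgets.} Each clause becomes a $P_5$ $v_1v_2v_3v_4v_5$ of $G_1$, with its three incoming literal wires attached via $G_0$-edges to $v_2,v_3,v_4$, tuned so that a wire carrying \emph{false} forces its $G_0$-edge into the corresponding vertex, while a wire carrying \emph{true} imposes nothing. Enumerating, via the reformulation, the EFX-orientations of a $P_5$ shows the set of vertices forced into the ``holds exactly its one $G_1$-edge'' state is always one of $\{v_2\}$, $\{v_4\}$, $\{v_1,v_3\}$, $\{v_3,v_5\}$, $\{v_2,v_4\}$; hence if all three of $v_2,v_3,v_4$ have a $G_0$-edge forced in, no orientation of the $P_5$ survives, whereas if at least one of them remains free some orientation does, and by the composition remark it extends to an EFX-orientation of the whole instance. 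Thus the $P_5$ is globally orientable precisely when the clause is satisfied, which yields both directions of correctness of the reduction.

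\textbf{Planarity and the main obstacle.} Finally I would use the rectilinear planar drawing of a \PMSAT instance — variables on a line, positive clauses above and negative clauses below, with pairwise non-crossing legs — placing each variable cycle along its variable, each clause $P_5$ near its clause, and each literal wire along the corresponding leg, so that $G$ is planar; since $G_0$ has maximum degree $1$ and every $G_1$-component is $P_2$, $P_3$, or $P_5$, the structural claims of the theorem hold. I expect the crux to be engineering the wires, and especially the fan-out/inversion gadgets, so that the ``saturation'' signal is genuinely \emph{forced} all the way from the variable cycle to $v_2,v_3,v_4$ of each clause, while simultaneously keeping every $G_1$-component in $\{P_2,P_3,P_5\}$, keeping $G_0$ a matching (no vertex meeting two $G_0$-edges), and avoiding wire crossings — these requirements conflict, since the obvious ways to increase fan-out or re-route a wire tend to create larger $G_1$-components or doubly-loaded vertices.
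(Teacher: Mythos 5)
Your overall strategy (membership in NP, reduction from \PMSAT, a $P_5$ clause gadget whose orientability captures satisfiability, planarity inherited from the rectilinear drawing) is in the same family as the paper's, and your enumeration of the five possible root-neighborhoods of a bare $P_5$ and the resulting ``orientable iff at least one of $v_2,v_3,v_4$ is free'' criterion is correct. However, your route to the degree-one condition on $G_0$ is genuinely different, and this is precisely where the gap lies.

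The paper's proof is modular. It first proves an intermediate reduction (\Cref{lem:efxosmallnpc}) that makes \emph{no} promise about $0$-degrees: each variable is a single $P_2$, each clause is a $P_5$ with two \emph{internal} $0$-edges ($c_{j,2}g_{j,2}$ and $g_{j,3}c_{j,3}$) that already rule out the two non-literal roots, and every occurrence contributes one direct $0$-edge from a clause vertex to a variable endpoint. This makes the correspondence with satisfying assignments essentially a one-line application of \Cref{prop:structure}, at the cost of the variable vertices having $0$-degree equal to their number of occurrences. Only afterwards is the separate, reusable \Cref{red:zero_degrees} applied exhaustively: it replaces each high-$0$-degree vertex with a binary tree whose edges are subdivided into a $1$-edge followed by a $0$-edge, thereby introducing the $P_3$ components, turning $G_0$ into a matching, and (with the leaves placed in the cyclic order of the planar embedding) preserving planarity. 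You instead try to enforce the matching property \emph{inside} the construction from the start, via variable cycles, $P_2$/$P_3$ wires and splitters.

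That is the genuine gap, and it is not merely a matter of filling in routine details. In your variable cycle, every vertex already carries exactly one cycle $0$-edge, so a wire cannot attach anywhere without immediately creating a vertex of $0$-degree two. To create a tap point you must insert a $P_3$, but then one has to check in which direction a signal actually propagates: if a $P_3$ $a\mbox{-}b\mbox{-}c$ has its incoming cycle $0$-edge at $a$ and the outgoing one at $c$, then when the upstream side forces $a\notin R$ the only permitted roots are $a$ or $c$, both of which leave $c\notin R$, so the outgoing $0$-edge imposes \emph{no} constraint and the cycle state does not propagate through the tap. Alternative placements of the three $0$-edges on the $P_3$ run into symmetric difficulties. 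So the claim that ``the variable's forced state propagates'' along the wire, and that a ``saturation'' signal is genuinely forced all the way into $v_2,v_3,v_4$, is exactly the part you flag as the crux, and as written it does not go through. Until the wire, splitter, and inverter gadgets are specified concretely and their two-sided forcing verified, the reduction is incomplete. The paper's two-step decomposition avoids this entirely, which is why it is worth adopting: prove the reduction without the degree bound, then apply the generic $0$-degree-reduction lemma.
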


This is a surprising connection between the existence of small induced paths in the input and the algorithmic complexity of the problem. To demonstrate that these paths (especially the $P_5$s) are not simply an artifact of the methodologies behind our hardness proof, we give an efficient algorithm for all instances where this type of a substructure is absent.

\begin{restatable}{theorem}{tractablepfive} \label{thm:tractable_on_pfive_free}
    \EFXO with binary, symmetric and additive valuations can be solved in linear time when every connected component of $G_1$ is either $P_5$-free or contains a cycle. 
\end{restatable}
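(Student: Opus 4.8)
The plan is to reduce \EFXO on such instances, first to a purely local condition on an orientation of $G_1$, and then to a $2$-satisfiability instance that is solvable in linear time.

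\textbf{Step 1: a local characterization of EFX orientations.} Since the valuations are binary and symmetric, every edge is a \emph{$0$-edge} or a \emph{$1$-edge}, and each agent's value for a bundle equals the number of incident $1$-edges in it. Because an agent $j$ is incident to at most one edge of any other agent $i$'s bundle (namely $\{i,j\}$), a direct calculation from the definition shows that an orientation violates \EFX exactly when some $1$-edge $\{i,j\}$ is oriented towards $i$, the vertex $i$ receives at least two edges in total, and $j$ receives no $1$-edge. Call a vertex \emph{$1$-poor} if it receives no $1$-edge, and \emph{fragile} if it receives exactly one $1$-edge whose tail is $1$-poor. Then an orientation is \EFX iff (a) every vertex that receives at least two $1$-edges has all of its $1$-in-neighbours non-$1$-poor, and (b) no $0$-edge is oriented towards a fragile vertex. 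I would then observe that, given any orientation of $G_1$ satisfying (a) whose set $F$ of fragile vertices is independent in $G_0$, orienting every $0$-edge away from $F$ creates no new violation (the only way a new violation could appear is at the head $w$ of a newly oriented $0$-edge, but a non-fragile $w$ either receives no $1$-edge, or receives exactly one $1$-edge with non-$1$-poor tail, or receives $\ge 2$ $1$-edges and is covered by (a)); and conversely (b) forces $F$ to be $G_0$-independent. So it suffices to decide whether $G_1$ has an orientation satisfying (a) with $G_0$-independent fragile set, and to output such an orientation.

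\textbf{Step 2: orienting each component of $G_1$.} The components of $G_1$ can be oriented independently. If a component contains a cycle, orient one cycle cyclically, orient the rest of a spanning unicyclic subgraph away from that cycle, and orient the remaining edges arbitrarily; then every vertex of the component receives a $1$-edge, so it is never $1$-poor and never fragile, and (a) holds vacuously for that component. If a component is $P_5$-free and acyclic, it is a tree of diameter at most $3$, i.e.\ an isolated vertex, a single edge ($P_2$), a star, or a double star. In each such case (a) forces every vertex to receive at most one $1$-edge (if a centre received two $1$-edges, one would come from a leaf, which would then be $1$-poor), so (a) is again vacuous, and a short case analysis lists all attainable fragile sets: an isolated vertex gives $\emptyset$; a $P_2$ on $\{u,v\}$ gives $\{u\}$ or $\{v\}$; a star with centre $c$ and leaf set $L$ gives $\{c\}$ or $L$; a double star with centres $c_1,c_2$ and leaf sets $X,Y$ gives $\{c_1\}$, $\{c_2\}$, $\{c_2\}\cup X$, or $\{c_1\}\cup Y$. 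Since deleting a vertex from a $G_0$-independent set keeps it $G_0$-independent and $\{c_1\}\subseteq\{c_1\}\cup Y$, $\{c_2\}\subseteq\{c_2\}\cup X$, we may restrict double stars to the two options $\{c_1\}$ and $\{c_2\}$.

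\textbf{Step 3: a $2$-SAT instance, and reconstruction.} Introduce one Boolean variable per non-trivial acyclic component: for a $P_2$ on $\{u,v\}$ it selects which endpoint is fragile, for a star it selects ``centre fragile'' versus ``all leaves fragile'', and for a double star it selects which centre is fragile. Then for every vertex the predicate ``this vertex is fragile'' is either identically false (vertices of cyclic components, isolated vertices of $G_1$, and leaves of double stars) or a single literal. The requirement that the fragile set be $G_0$-independent becomes, over all $0$-edges $\{p,q\}$, the conjunction of the clauses $\neg(\text{$p$ fragile})\vee\neg(\text{$q$ fragile})$, each of which is a $2$-clause (or is trivially true). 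This $2$-SAT formula has size $O(|V(G)|+|E(G)|)$ and is solved in linear time; from a satisfying assignment we recover an \EFX orientation by orienting each $G_1$-component according to its chosen fragile set and orienting every $0$-edge away from the fragile set, and if the formula is unsatisfiable then, by Steps 1--2, no \EFX orientation exists. Computing the components of $G_1$, classifying them, building the formula, solving it, and reconstructing the orientation are all linear-time, which gives the claimed bound. The technical heart of the argument is Step 1 together with the case analysis in Step 2 --- establishing that \EFX is equivalent to (a)+(b), that orienting $0$-edges away from fragile vertices is always safe, and that the listed fragile sets are exactly those attainable (in particular that (a) already forces in-degree at most $1$ in $G_1$ throughout every $P_5$-free acyclic component); once these structural facts are in hand the reduction to $2$-SAT and the linear-time implementation are routine.
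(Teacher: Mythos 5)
Your proposal is correct and converges on the same target --- a linear-time reduction to 2-SAT --- but by a somewhat more self-contained route than the paper. The paper first preprocesses the instance (\Cref{lem:simple_reductions} and \Cref{cor:alwaysnice}) to restrict attention to ``nice'' orientations, which are determined by a choice of root in each tree of $G_1$, and then invokes a separate core-merging lemma (\Cref{lem:core_procedure}) to collapse sibling leaves so that every surviving tree becomes a $P_2$, $P_3$, or $P_4$ with exactly two non-dominated roots; those two choices feed directly into 2-SAT. You instead re-derive the envy characterization from scratch: your ``1-poor / fragile'' dichotomy is exactly the content of \Cref{obs:strong_envy_conditions} combined with the fact that, once a tree is rooted, the vertices that must receive no 0-edge are precisely $N_1(\text{root})$ (\Cref{prop:structure}); you handle cyclic components in place rather than deleting them, and then classify $P_5$-free acyclic components by hand as isolated vertices, edges, stars, and double stars, enumerating their attainable fragile sets and using a domination argument to pare the double-star options down to two. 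The net effect is the same 2-variable-per-component, one-clause-per-0-edge formula. What your version buys is independence from the core lemma (which the paper factors out because it is reused in \Cref{ss:cores} and \Cref{ss:bigcores}), at the cost of a somewhat longer explicit case analysis for stars and double stars. All the steps you describe are linear-time, so the claimed bound holds.
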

We then conduct a parameterized analysis of the problem and identify parameters of the input graph that when bounded, determine the tractability of the instance.  
We note that having seen (\Cref{thm:3satnpc}) that $G_1$ containing arbitrarily many copies of $P_5$ leads to NP-hardness while $G_1$ being $P_5$-free leads to efficient solvability (\Cref{thm:tractable_on_pfive_free}), we arrive at a natural question: What precisely constitutes a ``complex'' component (like a $P_5$) and how precisely does their {\em number} influence tractability of \EFXO? 

Towards this, we introduce a novel notion -- that of a ``core'' of an {\EFXO} instance--that is crucial in deciding whether an instance has a solution or not, \Cref{lem:core_procedure}. This result is used to design the linear-time algorithm claimed in \Cref{thm:tractable_on_pfive_free}. Moreover, we note that both the number and the size of the cores, defined formally in \Cref{obs:core}, carve out the separation between tractable (\Cref{lem:coreLemma}) and intractable instances (\Cref{thm:wone_complex_components}) of \EFXO. We refer the reader to \Cref{sec:parameterizations} for precise statements and proofs.

Adding to this we also show that the problem is tractable on graphs of small treewidth.

\begin{restatable}{theorem}{fulltreewidth}\label{thm:fulltreewidth}
    \EFXO with binary, symmetric and additive valuations can be solved in linear time on instances $(G,w)$, where the graph $G$ has constant treewidth. 
\end{restatable}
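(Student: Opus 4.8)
The plan is to first reduce the EFX condition, in the binary/symmetric/additive orientation setting, to a purely \emph{local} constraint on orientations, and then solve the resulting constrained-orientation problem on bounded-treewidth graphs, either by invoking Courcelle's theorem or by an explicit dynamic program over a tree decomposition. Concretely, fix an instance $(G,w)$ and an orientation, i.e.\ an assignment of each edge to one of its two endpoints, and write $A_v$ for the bundle of agent~$v$. Additive binary valuations make $v$'s utility for any bundle equal to the number of weight-$1$ edges it contains, and since every agent values only the edges incident to it, in the \emph{orientation} setting the only edge that can witness envy between two agents $u$ and $v$ is the single edge $uv$ (in particular, no long-range envy is possible). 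Unwinding the EFX definition yields the characterisation: the orientation is EFX if and only if, for every edge $e=uv$ with $w(e)=1$ that is assigned to $v$, either $A_v=\{e\}$ or $u$ is assigned at least one weight-$1$ edge (and symmetrically with $u,v$ exchanged); the case $A_v=\{e\}$ is precisely the one in which deleting $e$ from $A_v$ leaves the empty bundle, so no good can witness envy. Hence \EFXO becomes: can the edges of $G$ be oriented so that this rule holds at every weight-$1$ edge?

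The quickest way to finish is Courcelle's theorem. Encode an orientation of $G$ as an edge subset $F$ of the graph $G'$ obtained by subdividing every edge of $G$ once: each subdivision vertex $m_e$ is marked (and additionally marked if $w(e)=1$), and ``$e=uv$ is assigned to $u$'' is read off as ``$m_e u\in F$''. That $F$ encodes a valid orientation (exactly one of the two edges at each $m_e$ lies in $F$), the degree tests ``$|A_v|\ge 2$'' and ``$v$ is assigned a weight-$1$ edge'', and the local rule above are all expressible by a single existential edge-set quantifier followed by a first-order formula, i.e.\ by an $\exists\mathrm{MSO}_2$ sentence over $G'$ with these unary predicates. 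Subdividing does not raise the treewidth above $\max(\mathrm{tw}(G),2)$, so $G'$ still has constant treewidth; a tree decomposition of constant width can be computed in linear time, and the linear-time model-checking algorithm for $\mathrm{MSO}_2$ on bounded-treewidth graphs then decides the sentence—and extracts a witnessing $F$, hence an EFX orientation—in linear time.

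For an explicit algorithm with manageable constants, the same characterisation is handled by a direct dynamic program over a nice tree decomposition of width $k=O(1)$. For each vertex $v$ in the current bag the state stores a \emph{guessed} final type $(g(v),c(v))\in\{0,1\}\times\{0,1,{\geq}2\}$ (does $v$'s final bundle contain a weight-$1$ edge, and what is its final size class) together with the \emph{progress so far} $(\hat g(v),\hat c(v))$, the same two quantities restricted to the already-processed part of the tree. Introduce-edge nodes branch over the two assignments of the new edge, increment the receiver's progress counters, and check the local rule against the \emph{guessed} final type $g$ of the non-receiving endpoint (and the guessed $c$ of the receiver for the $A_v=\{e\}$ case), discarding any branch that overshoots a guess; forget nodes keep only states whose progress has caught up with the guess and then drop the vertex; join nodes combine children by adding progress counters (capped at ${\geq}2$) and requiring guesses to agree. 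Since $k$ is constant each bag carries $O(1)$ states and each transition takes $O(1)$ time, so over the $O(n)$ bags the computation—together with the routine back-tracking that outputs an actual orientation—runs in linear time.

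The main obstacle is the reduction step: one must argue carefully that long-range envy cannot arise in the orientation setting and pin down the exact local rule, especially the $A_v=\{e\}$ corner case; after that, both completions are standard graph algorithmics. The one point of care in the algorithmic part is that the rule at a weight-$1$ edge constrains the \emph{eventual} bundle of that edge's non-receiving endpoint, which may be determined only after the endpoint has left every bag—which is exactly why each vertex's final type must be guessed (equivalently, why an existential set quantifier appears in the MSO encoding).
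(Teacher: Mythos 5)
Your proof is correct, but it follows a different route from the paper's. You work directly with orientations: you re-derive (essentially) the paper's Observation~3.1 --- strong envy from $u$ to $v$ can occur only across a weight-$1$ edge $uv$ oriented toward $v$ when $u$ receives no $1$-edge and $v$ receives something besides $uv$ --- and then encode ``\emph{there exists an orientation with no such bad local pattern}'' as an $\exists\mathrm{MSO}$ sentence (via the subdivision trick to name orientations by an edge set), or equivalently as a bounded-treewidth DP whose bag states store per-vertex guessed/realized type information. The paper instead first runs its preprocessing (\Cref{lem:simple_reductions}) so that $G_1$ becomes a forest, invokes \Cref{cor:alwaysnice} to restrict attention to nice orientations, and reduces the question to the existence of a set $X$ containing exactly one ``root'' per tree of $G_1$ whose $1$-neighborhood $\bigcup_{x\in X} N_1(x)$ is independent in $G_0$; that property is then handed to Courcelle's theorem. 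Both are sound and both give linear time. Your version is more self-contained --- it does not need the preprocessing lemmas or the nice-orientation lemma, and quantifies over orientations rather than root sets --- at the cost of a slightly heavier MSO encoding (an edge-set variable plus the subdivision); the paper's version is lighter on the MSO side (a single vertex-set variable) because the structural work is front-loaded into \Cref{lem:simple_reductions} and \Cref{cor:alwaysnice}, which are reused elsewhere in the paper. As a small remark, the version of Courcelle's theorem quoted in the paper already permits quantification over edge subsets, so the subdivision is not strictly necessary; and your explicit DP, while not worked out in full, is the standard pattern for bounded-treewidth model checking and contains no gap in the ideas.
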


Towards this result, we combine our insights into the problem along with a classic model-checking result in the literature on graph algorithms and parameterized complexity. Such model-checking based algorithmic arguments are, to the best of our knowledge, novel to this line of research and provide a direct and powerful way to obtain algorithms on structurally restricted instances. 

The work of \cite{Deligkas/EFX-EF1-orientations} indicates that the tractability result in \Cref{thm:fulltreewidth} is essentially the best possible for graphs of constant treewidth. More precisely, the requirement of our valuations being binary, symmetric and additive is unavoidable as they show that just dropping the first requirement alone (i.e., binary valuations) makes the problem NP-hard on instances with constant treewidth (in fact, they prove this for a stronger parameter -- vertex cover). 

We conclude the discussion of our contributions by noting that binary valuation represents the most basic--approval--preference model; and our problem setting captures the scenario where every item is either valuable (to two agents) or not to anybody. The additive binary setting is of standalone importance in fair division, and is often the first stepping stone to further analysis as exhibited by several papers~\cite{Garg_Murhekar_Qin_2022,Hadi_Fair_2020,garg2023computing,AZIZ20191,Aziz_2015_Fair_Ordinal} to name a few, that consider valuation functions with one or two non-zero values for various solution concepts such as EF1+ Pareto optimality and Nash social welfare.

\section{Preliminaries}\label{sec:prelims}

\newcommand{\EFEX}[0]{EFX allocation}

Let $[k] = \{1,\dots,k\}$ where $k$ is an integer.
We denote the \emph{agents} as $[n]$ and \emph{items} as $[m]$.
Though both agents and items are referred to by integers, it will be clear which one is meant from the context.
A~\emph{bundle} $b$ is a subset of items $b \subseteq [m]$.
For each agent $i \in [n]$ we have a \emph{valuation function} $w_i \colon 2^{[m]} \to \mathbb R_0^+$ which assignes a value to each bundle.
A valuation function is called \emph{binary} if it maps bundles to either 0 or 1, and is called \emph{additive} if $w_i(S)=\sum_{s \in S}w_i(\{s\})$ for every $S \subseteq [m]$.
A~\emph{partial allocation} is a tuple $X = (X_1,\dots,X_n)$ where $X_i \subseteq [m]$ and for every pair of distinct agents $i,j\in [n]$ we have $X_i \cap X_j = \emptyset$.
An allocation $X$ is called \emph{complete} if $\bigcup_{i \in [n]} X_i = [m]$. Unless we specify that an allocation is partial, it will always refer to a complete allocation.   In an allocation $X=(X_1,\dots,X_n)$ an agent $i$ \emph{envies} $j$ if $w_i(X_j) > w_i(X_i)$.
    Agent $i$ \emph{strongly envies} $j$ if there exists $g \in X_j$ such that $w_i(X_j \setminus \{g\}) > w_i(X_i)$.
    An allocation is called EFX if no agent strongly envies any other agent. In the computational problem {\EFEX}, the input instance is  $\mathcal I=([n],[m],\{w_i\}_{i \in [n]})$ and the goal is to produce an EFX-allocation (whenever one can be shown to exist).

Now, we define fair division instances that are representable by a simple undirected graph.

\noindent{\it Graph instance.} An {\EFEX} instance $\mathcal I=([n],[m]$,$\{w_i\}_{i \in [n]})$ is represented by a graph $G=(V,E)$ if $V=[n]$, $E=[m]$, such that for each vertex $i \in [n]$ and edge $g \in [m]$, if $g$ is incident to $i$, then $w_i(g) \ge 0$, otherwise $w_i(g)=0$. Such {\EFEX} instances are called {\em graph instances}. Moreover,  since each vertex of  $G$  corresponds uniquely to an agent, and each edge corresponds uniquely to an item, we sometimes use the terms agent (item) and vertex (respectively, edge)  interchangeably. 
    

\noindent{\it Symmetric graph instance.}
Consider an {\EFEX} instance $\mathcal I=([n]$,$[m]$, $\{w_i\}_{i \in [n]})$ represented by a graph $G$. For agents $i$ and $j$ that are adjacent in $G$, we refer to the item corresponding to the edge between them by $ij$. The graph instance is called {\em symmetric} if for every $ij \in E$ we have $w_i(ij) = w_j(ij)$. 

In this paper, all  graph instances we work with are symmetric unless otherwise specified. It is clear that graph instances represent {\EFEX} instances where vertices correspond to agents and each agent is incident to the edges corresponding to the set of items they are willing to take into their bundle, and so, each item is only considered by two agents. An allocation that follows this rule is called an {\em orientation} as each item $ij$ can be oriented towards the agent to which it is allocated. The formal definition follows.


\noindent{\it Graph orientation.} Consider an {\EFEX} instance $\mathcal I=([n],[m]$,$\{w_i\}_{i \in [n]})$ represented by a graph $G=(V,E)$.  Consider an allocation of the items to agents, where each allocated item $ij\in E$ is given to either $i$ or $j$. This allocation can be represented by the directed graph $\OG=(V,\OE)$, where the edge $ij\in E$ is directed towards $j$ (denoted $(i,j) \in \OE$) when the item is assigned to $j$, and directed towards $i$ when the item is assigned to $i$. An orientation obtained from an EFX-allocation is called an {\em EFX-orientation}. 



From now on we focus only on allocations in the context of graph orientations, i.e., allocations where every item $ij \in E$ is allocated to either $i$ or $j$.
With this assumption, an allocation can be derived from an orientation in the natural way.
For a symmetric graph instance, we may denote the valuation function as $w \colon E \to \mathbb R_0^+$ because both agents incident to an item value the item the same, and all other agents value it 0.
That is, from a graph orientation we can retrieve agent bundles for agent $i \in [n]$ by simply collecting all edges that are oriented towards $i$.
The central computational problem of interest to us in this work is the following.  


\defprob{\EFXO}{
    Symmetric graph instance $\mathcal I=([n],[m],w)$ represented by a graph $G=(V,E)$
    where  $w \colon E \to \mathbb R_0^+$ is the valuation function.
}{
    Is there a graph orientation $\OG$ of $G$ that corresponds to an EFX-allocation for $\mathcal I$?
}

An EFX-orientation for $\mathcal I$ is called a {\em solution} to the instance $\mathcal I$ of {\EFXO} and instances that have a solution are called {\em yes-instances}. Even though the \EFXO problem appears quite restrictive, we will see that it still is a challenging problem even with simple valuation functions. We focus on \EFXO with an {\em additive binary symmetric} valuation function $w \colon E \to \mathbb \{0, 1\}$. This variant was studied in \cite{christodoulou_fair_2023} where the authors determined that solving \EFXO is an NP-complete problem. In what follows, we show that our search for tractable instances reveals that the problem, \EFXO with additive binary symmetric valuation, is computationally hard even on restricted instances and the search takes us towards instances that have significant restrictions in terms of the underlying graph structure. We use ideas and techniques from parameterized algorithms that have proven to be very powerful in the domain of graph algorithms to exploit the graph structure of a given instance to detect tractable cases and design algorithms wherever possible.

Since $[n]$ and $[m]$ are implied by $G$ in the above definition, we henceforth consider {\EFXO} instances as just a pair  $\mathcal I=(G,w)$. 
In an \EFXO instance we say that an edge is a $c$-{\em edge} if both agents at its endpoints value it at $c$; we use this terminology to refer to 1-edges (or 1-items) and 0-edges (or 0-items) as we focus on symmetric instances with binary valuation functions.
For an instance of \EFXO $(G, w)$, we refer to the subgraph of $G$ induced by the 1-edges as $G_1 = (V,\{uv \mid w(uv)=1, uv \in E\})$ and to the subgraph of $G$ induced by the 0-edges as $G_0 = (V,\{uv \mid w(uv)=0, uv \in E\})$.
Similarly we define $\OG_0$ and $\OG_1$ as directed graphs obtained from an orientation $\OG$ of $G$.
For an agent $u$ and for each $i\in \{0,1\}$, we use the term {\em $i$-neighborhood} to refer to the neighborhood of $u$ within $G_i$, that is, those vertices of $G$ that share an $i$-edge with $u$. The $i$-neighborhood of $u$ is also denoted by $N_i(u)$. Moreover, the {\em $i$-degree} of $u$ refers to $|N_i(u)|$, equivalently, to the number of $i$-edges incident to $u$. 
By {\em induced subgraph} of a graph $G$, we refer to the subgraph induced by a subset $X$ of the vertex set of $G$ and it is denoted by $G[X]$. 
By $P_i$, we denote the path on $i$ vertices. When we say that a graph $G$ contains a $P_i$, we mean that there is an induced subgraph of $G$ that is isomorphic to $P_i$. If a graph does not contain a $P_i$, we say that it is $P_i$-free.  For a graph $G=(V,E)$ and vertex subset $X\subseteq V$, we say that $X$ is an {\em independent set} in $G$ if no edge in $E$ has both endpoints in $X$; equivalently, the subgraph of $G$ induced by $X$ is edgeless.

\section{Basic Properties of EFX-orientations}
In this section, we establish several facts that are used in our work. 

\subsection{Simplifying a given instance}

\begin{restatable}{observation}{FirstObservation}\label{obs:strong_envy_conditions}\app
    Let ${\mathcal I}=(G,w)$ be an instance of {\EFXO}. 
    In an orientation of $G$, an agent $u$ strongly envies an agent $v$ if and only if:
    \begin{enumerate}
        \item\label{envy:one_edge} Edge $(u,v) \in E(\OG_1)$; and 
        \item\label{envy:value_one} agent $u$ gets no items they value as 1; and
        \item\label{envy:other_item} agent $v$ gets another item besides $uv$.
    \end{enumerate}
\end{restatable}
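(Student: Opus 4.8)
The plan is to prove both directions of the equivalence by directly unfolding the definitions of strong envy in a graph orientation with binary, symmetric, additive valuations, and then observing that under these restrictions the three bundle-level conditions collapse into the three combinatorial conditions in the statement.

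\textbf{Proof strategy.} First I would recall that in a symmetric binary graph instance, for any agent $u$ and any bundle $S$, the value $w_u(S)$ equals the number of $1$-edges of $G$ that are incident to $u$ and belong to $S$; in particular, when $S = X_v$ for a neighbour $v$, the only item in $X_v$ that $u$ can possibly value positively is the edge $uv$ itself (if it exists and is a $1$-edge), since every other item in $X_v$ is either not incident to $u$ or is a $0$-edge. This single observation is the engine of the whole proof.

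For the ``if'' direction, suppose conditions \eqref{envy:one_edge}--\eqref{envy:other_item} hold. Since $(u,v)\in E(\OG_1)$, the edge $uv$ is a $1$-edge oriented towards $v$, so $uv\in X_v$ and $w_u(\{uv\})=1$. By condition \eqref{envy:other_item} there is some $g\in X_v$ with $g\ne uv$, and we take the witness item to be exactly that extra item $g$. Then $X_v\setminus\{g\}$ still contains $uv$, so $w_u(X_v\setminus\{g\})\ge w_u(\{uv\})=1$. On the other hand, by condition \eqref{envy:value_one}, $w_u(X_u)=0$. Hence $w_u(X_v\setminus\{g\}) \ge 1 > 0 = w_u(X_u)$, which is precisely the definition of $u$ strongly envying $v$.

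For the ``only if'' direction, assume $u$ strongly envies $v$, so there is $g\in X_v$ with $w_u(X_v\setminus\{g\}) > w_u(X_u) \ge 0$; in particular $w_u(X_v\setminus\{g\})\ge 1$. Since $X_v\setminus\{g\}$ has positive value to $u$ and the only item in $X_v$ of positive value to $u$ is the edge $uv$ (when it is a $1$-edge oriented to $v$), we must have $uv\in X_v\setminus\{g\}$, $uv$ a $1$-edge, and $g\ne uv$. The first two facts give condition \eqref{envy:one_edge} (the edge $uv$ exists, is a $1$-edge, and is oriented towards $v$, i.e.\ $(u,v)\in E(\OG_1)$), and $g\ne uv$ with $g\in X_v$ gives condition \eqref{envy:other_item}. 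Finally, $w_u(X_v\setminus\{g\})\le w_u(X_v)\le w_u(\{uv\}) = 1$ combined with $w_u(X_v\setminus\{g\}) > w_u(X_u)$ forces $w_u(X_u) = 0$, i.e.\ $u$ receives no item it values at $1$, which is condition \eqref{envy:value_one}.

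\textbf{Main obstacle.} There is no deep obstacle here; the only subtlety to state carefully is the ``unique positively-valued item'' fact for a neighbour's bundle under binary symmetric valuations, and the observation that additivity lets us pass freely between $w_u(X_v\setminus\{g\})$, $w_u(\{uv\})$, and $w_u(X_u)$ by counting incident $1$-edges. Once that is in place, both implications are short case-free verifications. The one place to be slightly careful is handling the degenerate possibility that $g$ is chosen with $g\notin X_v$ or that $X_v$ is empty — but these are immediately ruled out since the definition of strong envy requires $g\in X_v$, and $w_u(X_v\setminus\{g\})\ge 1$ forces $X_v$ to contain the $1$-edge $uv$ in addition to $g$.
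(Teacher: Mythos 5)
Your proof is correct and follows essentially the same approach as the paper's: both directions rest on the observation that under binary symmetric valuations the only item in $X_v$ that $u$ can value positively is the edge $uv$ itself, so every $w_u$-value in play is $0$ or $1$. The only cosmetic difference is that you derive the three conditions from strong envy directly, whereas the paper argues that direction by showing each condition's failure rules out strong envy; the key observations are identical.
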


We next describe a series of preprocessing steps designed to simplify \EFXO instances while ensuring that the solution to the original instance remains unchanged. Specifically, these steps guarantee that the original instance admits an EFX-orientation if and only if the preprocessed instance does. Furthermore, any EFX-orientation obtained for the preprocessed instance can be efficiently transformed into a corresponding EFX-orientation for the original instance.

\begin{lemma}\app \label{lem:simple_reductions}
Let ${\mathcal I}=(G,w)$ be an instance of {\EFXO}.  Let ${\mathcal I'}=(G',w')$ be the \EFXO instance resulting from $\mathcal I$ by performing either of the following operations. 
    \begin{enumerate}
        \item\label{it:isolated}
            Let $u$ be vertex whose 1-degree is 0.
            Let $G'=G-u$, and $w'$ be defined as $w$ restricted to $G'$.
        \item\label{it:component}
            Suppose $H$ is a connected component of $G_1$ that contains a cycle $C$.
            Let $G'$ be the graph resulting from deleting the vertices of $H$ from $G$ and $w'$ is defined as $w$ restricted to $G'$.
    \end{enumerate}

    Then, ${\mathcal I}$ has an EFX-orientation  if and  only if ${\mathcal I}'$ has an EFX-orientation. Moreover, given an EFX-orientation for ${\mathcal I}'$, one for ${\mathcal I}$ can be computed in linear time. 
\end{lemma}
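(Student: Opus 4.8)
The plan is to handle the two operations separately, in each case using Observation~\ref{obs:strong_envy_conditions} to control which pairs of agents can possibly witness strong envy. For operation~\eqref{it:isolated}, let $u$ have $1$-degree $0$. In the forward direction, if $\OG$ is an EFX-orientation of $\mathcal I$, I claim its restriction $\OG'$ to $G'=G-u$ is an EFX-orientation of $\mathcal I'$: any strong-envy pair in $\OG'$ would also be a strong-envy pair in $\OG$ (removing $u$ removes only $0$-edges, which by Observation~\ref{obs:strong_envy_conditions}\eqref{envy:one_edge} are never the envied edge, and does not increase anyone's bundle value), contradiction. Conversely, given an EFX-orientation $\OG'$ of $\mathcal I'$, extend it by orienting every edge incident to $u$ arbitrarily (say, all towards $u$). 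Since all edges at $u$ are $0$-edges, by Observation~\ref{obs:strong_envy_conditions}\eqref{envy:one_edge} no agent can strongly envy $u$ and $u$ can strongly envy no one (condition~\eqref{envy:one_edge} fails in both directions); and for pairs not involving $u$ the orientation and bundle values are unchanged from $\OG'$, so no new strong envy appears. The construction is clearly linear time.

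For operation~\eqref{it:component}, let $H$ be a connected component of $G_1$ containing a cycle $C$, and $G' = G - V(H)$. The key structural fact I would isolate first is that $H$, being a connected graph with a cycle, admits an orientation of its edges in which every vertex has in-degree at least $1$ (equivalently, every vertex receives at least one $1$-item): root a spanning structure at the cycle, orient the cycle cyclically so each cycle vertex gets an incoming edge, and orient every non-cycle edge of $H$ away from the cycle (towards the leaves) along the spanning tree, so each non-cycle vertex gets the incoming tree-edge from its parent. Call this the \emph{saturating orientation} of $H$. Now in the forward direction, if $\OG$ is an EFX-orientation of $\mathcal I$ then its restriction to $G'$ is one for $\mathcal I'$ by the same argument as before (deleting vertices cannot create strong envy among the remaining agents, since bundles only shrink and no new $1$-edges appear). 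Conversely, given an EFX-orientation $\OG'$ of $\mathcal I'$, build $\OG$ by: (i) keeping $\OG'$ on $G'$; (ii) using the saturating orientation on $E(H)$; and (iii) orienting every remaining edge of $G$ — these are $0$-edges with at least one endpoint in $V(H)$ — towards its endpoint in $V(H)$ (if both endpoints are in $V(H)$, orient arbitrarily).

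It remains to verify $\OG$ is EFX. By Observation~\ref{obs:strong_envy_conditions}, a strong-envy pair $(x,y)$ needs $(x,y)\in E(\OG_1)$, so $xy$ is a $1$-edge and hence lies inside a single component of $G_1$; thus either both $x,y\in V(H)$ or both $x,y\notin V(H)$. If both are in $V(H)$: by construction $x$ received an incoming edge under the saturating orientation, i.e.\ $x$ gets a $1$-item, so condition~\eqref{envy:value_one} of Observation~\ref{obs:strong_envy_conditions} fails and $x$ does not strongly envy $y$. If both are outside $V(H)$: step~(iii) oriented all $0$-edges between $V(G')$ and $V(H)$ into $V(H)$, so no agent in $V(G')$ gained any item relative to $\OG'$, and the $1$-items they hold are exactly as in $\OG'$; hence each agent's bundle value in $\OG$ is at most its value in $\OG'$ while its set of $1$-items is unchanged, so any strong envy in $\OG$ between two $V(G')$-agents would already be present in $\OG'$, a contradiction. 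Hence $\OG$ is an EFX-orientation, and it is computed from $\OG'$ in linear time. I expect the main obstacle to be the correct bookkeeping in step~(iii) together with pinning down the saturating-orientation claim for $H$ — specifically making sure that after giving $H$-vertices extra $0$-items we still have not broken EFX, which is exactly why we route those $0$-edges into $H$ rather than out of it; vertices of $H$ are already "protected" by owning a $1$-item.
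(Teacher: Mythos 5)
Your proposal is correct and follows essentially the same approach as the paper's proof: both directions of both items use Observation~\ref{obs:strong_envy_conditions} to rule out strong envy, and for item~\eqref{it:component} you construct the same "saturating" orientation (cyclic around $C$, tree edges away from $C$) and route all new $0$-edges into $V(H)$ exactly as the paper does.
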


When neither of the operations described in \Cref{lem:simple_reductions} can be applied on an instance ${\mathcal I}=(G,w)$,  we call ${\mathcal I}$ a {\em preprocessed instance} and for ease of presentation, when $\mathcal I$ is clear from the context, we simply say that $G$ is {\em a preprocessed graph}.
\paragraph{Remark:}
Due to \Cref{lem:simple_reductions}, when designing algorithms for {\EFXO}, we may assume from now on that the input instance is already preprocessed. Moreover, it is straightforward to see that the transformation from an arbitrary instance to a preprocessed instance can be done in linear time since this just requires one to identify the vertices of degree 0 in $G_1$ along with those connected components that are not trees. 

\subsection{Nice EFX-orientations}\label{ss:rooting-trees-in-G1}
Due to our assumption regarding preprocessed instances, it follows that in any instance ${\mathcal I}=(G,w)$, the graph $G_1$ is a forest with trees $T_1,\dots,T_t$ and none of these is an isolated vertex (i.e., a vertex with no edges incident to i).
Moreover, since each tree in $G_1$ has fewer edges than vertices we know that in any orientation at least one agent is allocated no 1-items. If we were to think of one such agent from each tree as the ``root'' of the tree, then one approach to finding an EFX-orientation is to choose an appropriate \emph{root} vertex in each $T_i$ and orient all 1-edges of each tree away from their roots.
Naturally, the roots cannot be chosen arbitrarily. However, it can be argued that whether a given rooting of all the trees in $G_1$ is ``correct'', i.e., there is a corresponding EFX-orientation,  is fully determined by the existence (or the lack thereof) of 0-edges between the neighbors of the roots because the only way a strong envy could manifest is from a root towards its neighbors contained in its tree, as per \cref{envy:other_item} of \Cref{obs:strong_envy_conditions}.
These observations lie behind the following characterization of Zeng and Mehta~\shortcite[Lemma 3.1]{MehtaZeng2024structureefxorientationsgraphs}, which is fundamental to our work.

\begin{proposition}\label{prop:structure}{\rm \cite{MehtaZeng2024structureefxorientationsgraphs}}
    A preprocessed graph $G$ has an EFX-orientation if and only if for each $i \in [t]$, the tree $T_i$ of $G_1$ can be rooted at $r_i \in V(T_i)$ in such a way that the union $R$ of the 1-neighborhoods of the roots in $G_1$, that is, $R = \bigcup_{i \in [t]} N_1(r_i)$, is an independent set in $G_0$.
\end{proposition}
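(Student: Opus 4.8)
The plan is to prove both directions using the structural characterization of strong envy from \Cref{obs:strong_envy_conditions}, together with the fact (noted in the text preceding the proposition) that in a preprocessed instance $G_1$ is a forest whose trees each have more vertices than edges. First I would fix some notation: for each tree $T_i$ of $G_1$ and each choice of root $r_i$, let $\sigma$ denote the orientation of $G_1$ in which every $1$-edge of $T_i$ is directed away from $r_i$ (so every non-root vertex of $T_i$ receives exactly its parent edge, hence gets a $1$-item, while $r_i$ gets none). Since the $0$-edges are allocated separately, I will observe that the EFX condition is unaffected by how $0$-edges are oriented among agents who already own a $1$-item, so the only constraint from $0$-edges will concern the roots.

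\textbf{($\Leftarrow$) Sufficiency.} Suppose the trees can be rooted so that $R=\bigcup_{i\in[t]}N_1(r_i)$ is independent in $G_0$. I would construct an orientation as follows: orient each $T_i$ away from $r_i$ as above; for each $0$-edge with at least one endpoint of positive $1$-degree, direct it towards any such endpoint (this is always possible: at least one endpoint lies in some tree $T_i$ and either is a non-root or... more carefully, I direct every $0$-edge incident to $R$ towards its endpoint in $R$ — this is well-defined precisely because $R$ is $G_0$-independent, so no $0$-edge has both endpoints in $R$; remaining $0$-edges, none of which touch $R$, can be oriented arbitrarily). Now I verify EFX via \Cref{obs:strong_envy_conditions}: a strong envy from $u$ towards $v$ requires $(u,v)\in E(\OG_1)$, $u$ owns no $1$-item, and $v$ owns something besides $uv$. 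The first two conditions force $u$ to be the root $r_i$ of $v$'s tree. But then $v\in N_1(r_i)\subseteq R$, so by construction every $0$-edge at $v$ is oriented away from $v$; and $v$'s unique $1$-edge in the orientation is its parent edge $uv$. Hence $v$ owns only $uv$, contradicting the third condition. So no strong envy exists.

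\textbf{($\Rightarrow$) Necessity.} Conversely, suppose $\OG$ is an EFX-orientation. In each tree $T_i$, since there are fewer $1$-edges than vertices, some vertex of $T_i$ receives no $1$-item; pick one such vertex as $r_i$. The claim is that $R=\bigcup_i N_1(r_i)$ is $G_0$-independent. Suppose not: some $0$-edge $xy$ has $x\in N_1(r_i)$ and $y\in N_1(r_j)$ (possibly $i=j$). Consider the agent among $x,y$ who receives the edge $xy$, say $x$ receives it; then $x$ owns $xy$ in addition to — and here is the key sub-step — I need $x$ to also strongly envy $r_i$, which by \Cref{obs:strong_envy_conditions} requires that $r_i$ owns something besides the edge $r_ix$. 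I expect the main obstacle to be this last point: it is \emph{not} immediate that $r_i$ owns the edge $r_ix$ at all, nor that it owns something else. To handle it I would argue that if $r_i$ does \emph{not} own $r_ix$, then $x$ owns $r_ix$, which is a $1$-item, so $x$ is not a strong-envy source — but then I should have chosen the analysis around a different edge; the clean fix is to use a counting/parity argument on the tree $T_i$ showing that the set of vertices owning no $1$-item, oriented appropriately, can always be taken to be a single root with all $1$-edges pointing away from it, OR to invoke that Zeng--Mehta's lemma already packages this. I think the honest route is: first prove a preliminary claim that in any EFX-orientation of a preprocessed instance, each tree $T_i$ of $G_1$ is in fact oriented as an out-tree from a single root (every internal edge pointing away from the unique rootless vertex), because any other orientation would create a directed cycle or leave two sink-like vertices, forcing some non-root vertex $v$ with in-degree $0$ in $\OG_1$, i.e.\ owning no $1$-item, whose $T_i$-parent-or-child $u$ owns a $1$-item and also $uv$ plus another edge — contradiction. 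Once that claim is in hand, $r_i$ owns none of its incident $1$-edges, so it owns $r_ix$ nowhere and the strong-envy argument above goes through: $x$ owns $xy$ and $r_ix$... wait, then $x$ owns a $1$-item and cannot be the envier. So instead the contradiction must come from $r_i$: $r_i$ owns no $1$-item, $(r_i,x)\in E(\OG_1)$, and $r_i$ needs to own nothing-but... Let me reconsider — the correct contradiction is that \emph{$x$} cannot hold $xy$: if $x\in N_1(r_i)$ and $x$ owns $xy$, then $x$ owns two items $r_ix$'s edge... I will resolve this by directing attention to whether $x$ owns a $1$-item: in the out-tree orientation $x$ owns its parent edge which may or may not be $r_ix$; regardless $x$ owns exactly one $1$-item. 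Then $r_i$ (owning no $1$-item, adjacent to $x$ in $\OG_1$) strongly envies $x$ iff $x$ owns something besides the edge $r_i x$ — and $xy$ is such an item unless $xy = r_i x$, impossible since $w(xy)=0\neq 1$. This contradicts EFX, completing the proof. The write-up will need the preliminary out-tree claim stated and proved carefully; that is the step I expect to require the most attention, and it is where I would spend the bulk of the argument.
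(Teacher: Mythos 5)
This proposition is cited to Zeng and Mehta in the paper and not reproved there, so there is no internal proof to compare against; what follows assesses your argument on its own terms.

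Your overall plan is sound, and the forward ($\Leftarrow$) direction is essentially correct, but the write-up contains a sign error: you say you ``direct every $0$-edge incident to $R$ towards its endpoint in $R$,'' which would give the vertices of $R$ bundles containing $0$-items, directly contradicting the verification step where you correctly argue that ``by construction every $0$-edge at $v$ is oriented away from $v$.'' The intended (and correct) rule is to orient every $0$-edge incident to $R$ \emph{away} from its endpoint in $R$ (i.e., towards the non-$R$ endpoint); that this rule is well-defined is exactly the consequence of $R$ being $G_0$-independent. With that one word fixed, the sufficiency argument is clean and matches the structure one would expect.

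In the ($\Rightarrow$) direction you eventually reach the right contradiction, but the detour through an ``out-tree claim'' is unnecessary and is the source of most of your confusion. You do not need to know anything about how $\OG_1$ orients the tree away from $r_i$; you only need the immediate consequence of your choice of $r_i$: since $r_i$ owns no $1$-item, \emph{every} $1$-edge incident to $r_i$ is oriented away from it, so for each $x\in N_1(r_i)$ the agent $x$ owns the $1$-edge $r_i x$. Now if a $0$-edge $xy$ has $x\in N_1(r_i)$ and $y\in N_1(r_j)$ (allowing $i=j$), whichever of $x,y$ is assigned $xy$ (say $x$) then owns both $r_i x$ and $xy$, so by \Cref{obs:strong_envy_conditions} the root $r_i$ strongly envies $x$: $(r_i,x)\in E(\OG_1)$, $r_i$ has no $1$-item, and $x$ has an item other than $r_i x$. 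No counting or parity argument and no out-tree structure of the rest of $T_i$ is needed. The out-tree claim you flag as ``requiring the most attention'' is in fact the content of a separate statement in this paper (\Cref{cor:alwaysnice}), which is proved independently and is not a prerequisite for this proposition; importing it here would also be slightly circular if you then wanted to use the proposition to prove that lemma. Trim the detour, fix the one-word orientation error, and the proof is correct.
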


We want to consider the solutions to preprocessed instances that are implied by this characterization.
\begin{definition}[Nice EFX-orientation]
    We refer to EFX-orientations of preprocessed instances where every connected component in $G_1$ has exactly one vertex with no 1-edges oriented towards it as {\em nice} EFX-orientations. We call this vertex the \emph{root} of the connected component. 
\end{definition}

Note that it is possible for a preprocessed instance to have EFX-orientations that are not nice.
For instance, consider the example in \Cref{fig:very_nice}. However, we show later (\Cref{cor:alwaysnice}) that if there is an EFX-orientation, then we may as well assume that it is nice. 

\begin{figure}[ht]
    \centering
    \hfill
    \begin{subfigure}{0.48\textwidth}
        \centering
        \includegraphics[page=1,scale=1.1]{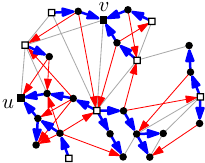}
        \subcaption{EFX-orientation}
        \label{fig:not_nice}
    \end{subfigure}
    \hfill
    \begin{subfigure}{0.48\textwidth}
        \centering
        \includegraphics[page=2,scale=1.1]{nice.pdf}
        \subcaption{Nice EFX-orientation}
        \label{fig:very_nice}
    \end{subfigure}
    \hfill
    \caption{
        EFX-orientations where (\subref{fig:not_nice}) is not nice and (\subref{fig:very_nice}) is nice.
        Thick blue are 1-items, thin red are 0-items.
        0-items are shown in color only when their orientation is pre-determined by having an endpoint in a 1-neighbor of a vertex with no 1-items.
        Note that the set of 0-items with pre-determined direction decreases when an orientation is changed to a nice EFX-orientation.
    }%
    \label{fig:nice}
\end{figure}

\begin{observation}\label{obs:niceEFXObservations}
Consider a preprocessed instance $(\mathcal{I},w)$ of {\EFXO} and let $T_1,\dots, T_t$ be the trees that comprise the connected components of $G_1$. The following hold.

\begin{enumerate}
    \item\label{obs:orient_neighbor_zeros_away}
    In every nice EFX-orientation that roots a tree $T_i$ at vertex $u$, every 0-edge incident to any vertex $v \in N_1(u)$ is oriented away from $v$.
    \item \label{obs:donot_root_when_vertex_has_zeroes}
    In every nice EFX-orientation and for every $i\in [t]$ such that a vertex $u\in V(T_i)$ has a 0-edge oriented toward it, the tree $T_i$ cannot be rooted at any vertex in $N_1(u)$. 
\end{enumerate}

\end{observation}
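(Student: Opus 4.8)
\textbf{Proof plan for \Cref{obs:niceEFXObservations}.}

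The plan is to derive both parts directly from the characterization of strong envy in \Cref{obs:strong_envy_conditions} together with the definition of a nice EFX-orientation. The key structural fact to keep in mind is that in a nice EFX-orientation of a preprocessed instance, each tree $T_i$ of $G_1$ is rooted at a unique vertex $r_i$ and all 1-edges of $T_i$ are oriented away from $r_i$; consequently every vertex of $T_i$ other than the root receives exactly one 1-item (the edge from its parent), and the root receives none.

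For part~(\ref{obs:orient_neighbor_zeros_away}), suppose $T_i$ is rooted at $u$ and let $v \in N_1(u)$. Since $v$ is a child of the root $u$ in $T_i$, vertex $v$ receives the 1-edge $uv$ and hence gets an item it values as $1$. I would argue by contradiction: if some 0-edge $(x,v)$ were oriented toward $v$, consider whether this creates a strong envy. The agent that could strongly envy $v$ is $x$ (if $(x,v)\in E(\OG_1)$) — but $xv$ is a 0-edge, so condition~(\ref{envy:one_edge}) of \Cref{obs:strong_envy_conditions} fails for $x$ envying $v$ along this edge. So no strong envy is created at $v$ directly. The actual obstruction is subtler: orienting a 0-edge toward $v$ does not in itself violate EFX. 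I would instead observe that the statement as phrased follows because in a nice EFX-orientation we are free to (and, to maintain the characterization via \Cref{prop:structure}, must) orient 0-edges so that the roots' 1-neighborhoods form an independent set in $G_0$; but more carefully, the precise claim is that a 0-edge at $v$ oriented toward $v$ would force $v$ to have two items while $v$'s parent-in-tree relationship is with $u$ — I would trace through whether $u$ strongly envies $v$: $u$ gets no 1-item (it is the root), $(u,v)\in E(\OG_1)$, and $v$ now has item $uv$ plus the 0-item, i.e. ``another item besides $uv$'', so all three conditions of \Cref{obs:strong_envy_conditions} hold and $u$ strongly envies $v$. This contradicts the orientation being EFX. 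Hence every 0-edge at $v$ must be oriented away from $v$.

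For part~(\ref{obs:donot_root_when_vertex_has_zeroes}), this is essentially the contrapositive of part~(\ref{obs:orient_neighbor_zeros_away}). Suppose in a nice EFX-orientation some vertex $u \in V(T_i)$ has a 0-edge oriented toward it, and suppose for contradiction that $T_i$ is rooted at some $r \in N_1(u)$. Then $u \in N_1(r)$, so part~(\ref{obs:orient_neighbor_zeros_away}) applied with root $r$ and neighbor $u$ says every 0-edge incident to $u$ is oriented away from $u$ — contradicting the assumption that one is oriented toward $u$. Therefore $T_i$ cannot be rooted at any vertex of $N_1(u)$.

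The main obstacle is getting part~(\ref{obs:orient_neighbor_zeros_away}) exactly right: the naive ``a 0-edge toward $v$ creates envy from its other endpoint'' argument fails, and one must instead identify the root $u$ of $T_i$ as the agent whose strong envy is triggered — it is $u$ that has no 1-item, shares the 1-edge $uv$ with $v$, and now sees $v$ holding $uv$ plus an extra (0-)item. Once that is pinned down, both parts are short applications of \Cref{obs:strong_envy_conditions}.
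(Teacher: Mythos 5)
Your proof is correct and matches the paper's intended reasoning: the paper leaves this observation unproved, but the text preceding the definition of a nice EFX-orientation already signals that strong envy can only manifest from a root toward its tree-neighbors, and that is exactly the mechanism you pin down (the root $u$, having no 1-item and sharing the 1-edge $uv$ with $v$, strongly envies $v$ once $v$ additionally receives a 0-item, by all three conditions of \Cref{obs:strong_envy_conditions}). The exploratory detour in part~(\ref{obs:orient_neighbor_zeros_away}) considering envy from the 0-edge's other endpoint is unnecessary but harmless, and your treatment of part~(\ref{obs:donot_root_when_vertex_has_zeroes}) as the contrapositive of part~(\ref{obs:orient_neighbor_zeros_away}) is also correct.
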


\begin{lemma}\label{cor:alwaysnice}\app
    If a preprocessed instance has an EFX-orientation then it has a nice EFX-orientation.
\end{lemma}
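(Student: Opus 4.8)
The plan is to start from an arbitrary EFX-orientation $\OG$ of the preprocessed instance and convert it, component by component, into a nice one. By \Cref{prop:structure} (or directly), every connected component $T_i$ of $G_1$ is a tree, so in any orientation of $G$ at least one vertex of $T_i$ has no $1$-edge oriented toward it; call such a vertex a candidate root. A nice orientation is exactly one in which each $T_i$ has precisely one candidate root and all $1$-edges of $T_i$ are directed away from it. So the goal is: given $\OG$, produce an orientation in which each $T_i$ is ``rooted'' in this tree sense, while preserving the EFX property.

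First I would fix a component $T_i$ and pick any candidate root $r_i$ of $T_i$ in $\OG$ (one exists since $T_i$ is a tree). Re-orient all $1$-edges of $T_i$ to point away from $r_i$; leave all $0$-edges and all other components untouched for now. I claim this does not create strong envy. By \Cref{obs:strong_envy_conditions}, a strong envy after the change must be along a $1$-edge $(u,v)$ where $u$ receives no $1$-item and $v$ receives at least one other item. After rooting $T_i$ away from $r_i$, the only vertex of $T_i$ with no incoming $1$-edge is $r_i$ itself, and every other vertex $u\in V(T_i)$ now has exactly one incoming $1$-edge (its parent edge), hence receives a $1$-item and cannot be the envious party along a $1$-edge of $T_i$; and $u$'s incident $1$-edges outside $T_i$ do not exist, since $T_i$ is a whole connected component of $G_1$. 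So the only possible new strong envy is from $r_i$ toward a child $v\in N_1(r_i)$. This is where the main work lies: I must argue $r_i$ does not strongly envy any such $v$, i.e. no child $v$ of $r_i$ receives an item other than $r_iv$.

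The key observation for that step: in the original EFX-orientation $\OG$, consider the (tree) structure of $1$-edges restricted to $T_i$ as oriented by $\OG$. Since $T_i$ has one fewer edge than vertex, and $r_i$ has no incoming $1$-edge, a counting/parity argument (each non-root vertex of a tree ``absorbs'' one edge) forces that $\OG$ already orients the $1$-edges of $T_i$ as an out-forest; but with a unique source $r_i$ this is exactly the rooting away from $r_i$ — in other words, re-orienting was a no-op on $T_i$, or more carefully: any candidate root $r_i$ must in fact already be the unique source and the $1$-edges already point away from it. Hence the EFX property of $\OG$ already guarantees $r_i$ strongly envies nobody. If instead $T_i$ had two or more candidate roots in $\OG$, the same counting argument shows this is impossible for a tree (two sources in an orientation of a tree with $|V|-1$ edges would leave some vertex with two incoming edges, contradiction), so in fact every EFX-orientation is already ``nice on the $1$-edges'' of each component; the content of the lemma is then essentially immediate, and I would phrase the proof as: pick $r_i$ = the unique vertex of $T_i$ with indegree $0$ in $\OG_1$, observe by the degree-sum argument that $\OG_1$ restricted to $T_i$ is the out-tree rooted at $r_i$, and conclude $\OG$ itself is already a nice EFX-orientation.

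The main obstacle is therefore not the re-orientation machinery but pinning down the counting argument cleanly: I need to show that in any orientation of a tree on $k$ vertices (hence $k-1$ edges), the number of indegree-$0$ vertices equals one plus the number of vertices with indegree $\ge 2$ (counted with multiplicity), so that ``at least one source'' together with ``tree'' does not yet force a unique source — but ``EFX'' rules out the multi-source case via \Cref{obs:strong_envy_conditions}, since a vertex that is a source gets no $1$-item and, unless it is an isolated-in-$\OG_1$ leaf, its neighbor in $T_i$ along an outgoing $1$-edge receives that edge plus (generically) more. I would handle the possible degenerate sub-case (a source that is a leaf of $T_i$ whose unique $T_i$-neighbor receives nothing else) by the re-orientation move above, which strictly decreases the number of sources in $T_i$ without introducing strong envy, and iterate until each component has exactly one source — at which point the orientation is nice. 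A short induction on $\sum_i(\#\text{sources in }T_i)$ finishes the argument.
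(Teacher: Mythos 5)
Your proposal correctly sets up the right move (pick a source in each tree of $G_1$ and re-orient the tree's $1$-edges away from it), and you correctly identify exactly where the content of the proof lies: showing that the chosen root $r_i$ does not strongly envy any $v\in N_1(r_i)$ after the re-orientation. But the argument you then give to close this gap is wrong, and the fallback you offer does not actually close it either.

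The erroneous step is the claim that, in any EFX-orientation, each tree $T_i$ already has a \emph{unique} source and is already an out-tree rooted there, so that ``re-orienting was a no-op.'' Your counting argument does not give a contradiction: a tree on $k$ vertices has $k-1$ edges, so an orientation with two indegree-$0$ vertices simply produces a vertex of indegree $2$ somewhere, which is perfectly legal. Concretely, on a $P_5$ component $a\!-\!b\!-\!c\!-\!d\!-\!e$ of $G_1$, the orientation $a\to b$, $b\to c$, $d\to c$, $e\to d$ is EFX (if no $0$-edges point into $b$ or $d$) yet has two sources $a,e$ and is not nice. This is exactly what the paper is flagging when it says (pointing at \Cref{fig:not_nice}) that preprocessed instances can admit non-nice EFX-orientations; the statement being proved is not vacuous. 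Your subsequent ``EFX rules out the multi-source case because the source's neighbor receives that edge plus (generically) more'' is precisely the assertion that fails in this example: $b$ receives only $ab$, nothing more.

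With that claim gone, you have not supplied the promised argument that re-orientation introduces no strong envy; your last paragraph asserts that the re-orientation move ``strictly decreases the number of sources without introducing strong envy'' but labels this as a ``degenerate sub-case'' and never proves it, even though it is in fact the generic case. What is missing is the short argument the paper actually uses: in the original EFX-orientation, $r_i$ is a source, so each $v\in N_1(r_i)$ already holds the $1$-item $r_iv$; since there was no strong envy and $r_i$ has no $1$-item, \Cref{obs:strong_envy_conditions}(\ref{envy:other_item}) forces $v$ to have \emph{no other} item at all. Re-orienting the tree away from $r_i$ still gives $v$ exactly one incoming $1$-edge (namely $r_iv$) and changes no $0$-edges, so $v$'s bundle remains $\{r_iv\}$ and $r_i$ cannot strongly envy $v$. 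That single observation, applied to every tree independently, finishes the proof without any iteration or case split on leaf versus non-leaf sources.
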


Due to \Cref{lem:simple_reductions} and \Cref{cor:alwaysnice}, we may henceforth assume that when we aim to solve a given instance $(G,w)$ of {\EFXO}, (i) the graph $G$ is already preprocessed, and (ii)
it is sufficient for us to find the roots of the trees $T_1,\dots,T_d$ in $G_1$ that a nice EFX-orientation picks.

\section{Dichotomy I: effect of distance from bipartite graphs}

In this section, we show that \EFXO instances with binary valuations, where the underlying graph has a min-uncut number of 1 can be solved in polynomial time.
On the other hand, we show that the problem becomes NP-hard once we step outside of either of these restrictions. Specifically, (i) keeping the binary valuations and increasing the min-uncut number to 2 makes the problem NP-hard, and (ii) allowing even a single non-binary valuation while maintaining a min-uncut number of 1 also leads to NP-hardness.
Note that since we are dealing with graph instances, the valuations are symmetric and so our hardness results hold even in the presence of  symmetric valuations.

\begin{table}[ht]
    \centering
    \caption{Overview of our results based on the min-uncut number and possible restrictions to the valuation function. Here, we say that a valuation function is {\em binary with one exception} if all but one of the edges have a binary valuation on them.}%
    \label{tab:edge_oct_cases}
    \begin{tabular}{ c c c }
        Min-uncut number & valuation function        & complexity \\
        \toprule
        0        & general                   & polynomial time~\cite{MehtaZeng2024structureefxorientationsgraphs} \\
        1        & binary                    & polynomial time,~\Cref{cor:tractableEOCTone} \\
        1        & binary with one exception & NP-hard, \Cref{thm:edgeoct2} \\
        2        & binary & NP-hard,~\Cref{thm:edgeoct2} \\ \bottomrule
    \end{tabular}
\end{table}

\subsection{Orientability and tractability for instances at distance 1 from bipartite graphs}

We show that every instance with min-uncut number 1 has an EFX-orientation and moreover, one can be computed efficiently.

\begin{theorem}
\label{thm:Edge-OCT1-general-condition}
    Consider an \EFXO instance $\mathcal I=(G,w)$, where $V(G)$ can be partitioned into sets $A$ and $B$ such that
    \begin{itemize}
        \item $G[A]$ consists only of 1-edges where each connected component contains at most one cycle, and
        \item $G[B]$ contains only 0-edges.
    \end{itemize}
    Then $\mathcal I$ is a yes-instance and given the sets $A$ and $B$, an EFX-orientation can be found in linear time.
\end{theorem}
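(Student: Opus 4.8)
The plan is to apply the characterization of Zeng and Mehta (\Cref{prop:structure}) after first reducing to a preprocessed instance. The key structural observation is that once we delete the vertices whose $1$-degree is zero, every connected component of $G_1$ lives inside $G[A]$ and is either a tree or a unicyclic graph; the unicyclic components get removed by operation~(\ref{it:component}) of \Cref{lem:simple_reductions}. So after preprocessing, the components of $G_1$ that remain are exactly those trees of $G[A]$ whose vertices all have $1$-degree at least one. Crucially, all $0$-edges of $G$ either go between $A$ and $B$, or lie inside $G[B]$ — there are no $0$-edges inside $G[A]$, since $G[A]$ consists only of $1$-edges. This is the property we will exploit.

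First I would preprocess $\mathcal I$ to $\mathcal I'=(G',w')$ in linear time using \Cref{lem:simple_reductions}; by that lemma it suffices to produce a (nice) EFX-orientation for $\mathcal I'$ and lift it back. Let $T_1,\dots,T_t$ be the trees of $G'_1$. For each $T_i$, pick an arbitrary vertex $r_i\in V(T_i)$ as root — any choice works. I then claim $R=\bigcup_{i}N_1(r_i)$ is an independent set in $G'_0$. Indeed, each $r_i$ has all of $V(T_i)\subseteq A$ (trees of $G_1$ are subgraphs of $G[A]$ after we delete the stray vertices, which by hypothesis have no incident $1$-edges elsewhere), hence $N_1(r_i)\subseteq A$, so $R\subseteq A$. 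But $G'_0$ — the $0$-edge subgraph — has no edge with both endpoints in $A$, because every edge inside $G[A]$ is a $1$-edge by assumption. Therefore $R$ spans no $0$-edge, i.e.\ $R$ is independent in $G'_0$. By \Cref{prop:structure}, $\mathcal I'$ has an EFX-orientation (obtained by orienting each $T_i$ away from $r_i$ and orienting the remaining $0$-edges arbitrarily subject to the rule forced at vertices of $R$, as in the proof of that proposition). Lifting back through \Cref{lem:simple_reductions} yields an EFX-orientation of $\mathcal I$.

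Every step is linear: identifying the stray vertices and unicyclic components is linear (this is exactly the remark after \Cref{lem:simple_reductions}); computing the trees $T_i$ and picking roots is linear; the orientation of the $0$-edges forced by \Cref{prop:structure}/\Cref{obs:niceEFXObservations}(\ref{obs:orient_neighbor_zeros_away}) is linear; and the lifting is linear by \Cref{lem:simple_reductions}. Hence the total running time is linear given $A$ and $B$. The only mildly delicate point — and the one I would write out carefully — is the verification that after preprocessing the trees of $G_1$ genuinely sit inside $A$: a vertex of $B$ could a priori be an endpoint of a $1$-edge only to a vertex of $A$ (since $G[B]$ has no $1$-edges), but then that very $1$-edge witnesses membership issues for our partition claim. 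This is handled by noting that if $b\in B$ is incident to a $1$-edge $ab$ with $a\in A$, then $b$ belongs to a $G_1$-component, and we simply observe that such $b$ still has $N_1(b)\subseteq A$, so including the endpoint-structure in $A$ vs $B$ does not matter: what matters is only that $R\subseteq N_1(\text{roots})$ contains no $0$-edge, and any $0$-edge incident to such a $b$ leads to a vertex outside $A\cup\{$such $b$'s$\}$... so the cleanest route is to root each tree $T_i$ at a vertex of $A\cap V(T_i)$ whenever possible and argue $N_1$ of an $A$-vertex stays in $A$. I would make this choice explicit to keep the independence argument airtight.
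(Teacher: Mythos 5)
There is a genuine gap in your argument, and it is in exactly the place you flagged as ``mildly delicate.''

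Your central claim is that $N_1(r_i)\subseteq A$ for a suitable root $r_i$, from which you deduce that $R=\bigcup_i N_1(r_i)$ carries no $0$-edge. You rest this on two sub-claims: that the trees of $G_1$ (after preprocessing) ``genuinely sit inside $A$,'' and that ``$N_1$ of an $A$-vertex stays in $A$.'' Neither holds. The hypothesis only constrains $G[A]$ (all $1$-edges, at most one cycle per component) and $G[B]$ (all $0$-edges). It says nothing about the edges of $G$ between $A$ and $B$, which may be $1$-edges. Hence a tree component of $G_1$ can contain both $A$- and $B$-vertices, and for $a\in A$ the set $N_1(a)$ can very well contain $B$-vertices. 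Concretely, take $A=\{a_1,a_2\}$, $B=\{b_1,b_2\}$, $1$-edges $a_1a_2$, $a_1b_1$, $a_1b_2$, and the $0$-edge $b_1b_2$. The single tree of $G_1$ is a star centered at $a_1$. Rooting at $a_1$ gives $N_1(a_1)=\{a_2,b_1,b_2\}$, which is \emph{not} independent in $G_0$ because of $b_1b_2$. So your statement that ``any choice works'' is false: a nice orientation rooted at $a_1$ would force both $b_1$ and $b_2$ to receive no $0$-items, impossible since one of them must get the edge $b_1b_2$.

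The paper's proof sidesteps this by \emph{not} attempting to produce a nice orientation of the trees of $G_1$, and in particular by not using \Cref{prop:structure} at all. Instead it orients only the components of $G[A]$ (not of $G_1$), giving each $A$-vertex at most one item, which by \Cref{obs:strong_envy_conditions}(3) makes $A$-vertices immune to strong envy. It then lets a root in $G[A]$ grab a single crossing $1$-edge to $B$ (Step 2) so that any $A$-root that could strongly envy a $B$-vertex instead has a value-one bundle, and pushes everything else towards $B$. Strong envy among $B$-vertices is impossible because $G[B]$ has only $0$-edges. If you want to salvage a \Cref{prop:structure}-style argument, you would need to choose roots much more carefully (e.g., prove that every tree of $G_1$ admits some root whose $1$-neighbours carry no $0$-edge, which is not automatic from the hypothesis), or separately handle the $B$-vertices that land in $R$. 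As written, the independence claim fails.
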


\begin{proof}
    We give a constructive proof of the existence of EFX-orientations for such instances, which also implies the claimed algorithm, see \Cref{fig:orient}.
    \begin{description}
        \item[Step 1:]
            We begin orienting each connected component in $G[A]$.
            For each component that contains a cycle, we orient the 1-edges of this cycle around it, making it a directed cycle. The remaining 1-edges of this component are oriented away from the cycle -- such an orientation assigns exactly one edge to every agent. Since there is exactly one cycle in this component, we are done.
            For each component that contains no cycle (is a tree) we choose an arbitrary vertex as the root and orient all 1-edges in $G[A]$ away from the root. This completes the orientation of all edges with both endpoints in $A$.
        \item[Step 2:]
            For each vertex $u$ selected as a root in the previous step, check if there is a vertex $v\in B$ such that $w(uv)=1$ and if so, orient the edge $uv$ towards $u$. 
        \item[Step 3:]
            Orient every edge in $G[B]$ arbitrarily and every remaining edge (which must have an endpoint each in $A$ and $B$) towards its endpoint in $B$.
    \end{description}
    \begin{figure}[ht]
        \centering
        \includegraphics[scale=1.1]{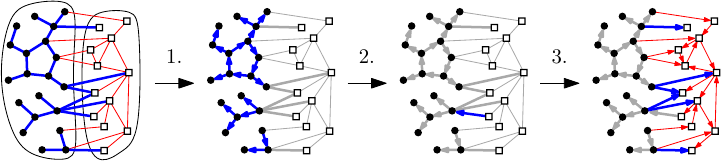}
        \caption{
            \EFXO created during the proof of \Cref{thm:Edge-OCT1-general-condition}.
            Initial graph with two parts $A$ and $B$, and result after application of Step 1, 2, and 3;
            the edges that were oriented by the previous step are in color.
        }%
        \label{fig:orient}
    \end{figure}

    We now argue that the result is an EFX-orientation.
    Since every vertex $w\in A$ gets at most one item (i.e., at most one edge oriented towards it), it cannot be strongly envied, by \Cref{obs:strong_envy_conditions} (3). By \Cref{obs:strong_envy_conditions} (1), no vertex in $B$ can strongly envy another vertex in $B$ since all edges between these vertices are 0-valued. It remains to show that vertices in $A$ do not strongly envy vertices in $B$. 

    Suppose for a contradiction that a vertex $u\in A$ strongly envies $v\in B$. Notice that if  $u$ is not selected as a root in Step 1, then, it gets an item it values as 1 in this step and  by \Cref{obs:strong_envy_conditions} (2), $u$ cannot strongly envy $v$. So, $u$ is a root. If $v$ does not satisfy the check made in Step 2 for $u$, it must be the case that $w(uv)=0$ and by \Cref{obs:strong_envy_conditions} (1), $u$ cannot strongly envy $v$. Otherwise, Step 2 ensures that $u$ gets an item it values as 1, and by \Cref{obs:strong_envy_conditions} (1), it cannot strongly envy $v$. 
 
    It is straightforward to check that given the sets $A$ and $B$ the algorithm runs in linear time. Indeed, it is a linear-time process to compute the unique cycle (if one exists) in each connected component of $G[A]$, after which each edge in the graph is accessed only a constant number of times. 
\end{proof}

Recall that for a graph $G$, a partition of the vertex set into sets $A$ and $B$ is a {\em bipartition} if the graph induced on each of these sets is edgeless. Equivalently, a graph is bipartite if and only if it has a bipartition.

\begin{lemma}\label{lem:corollaryEOCT} \app
    If an \EFXO instance $(G,w)$ has min-uncut number 1, then $G$ has a bipartition $(A,B)$ where the sets satisfy the premise of \Cref{thm:Edge-OCT1-general-condition}. Moreover, given an edge $e$ such that $G-e$ is bipartite, the sets $A$ and $B$ can be computed in linear time.
\end{lemma}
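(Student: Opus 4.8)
The plan is to start from an edge $e=xy$ such that $G-e$ is bipartite, take a bipartition $(L,R)$ of $G-e$, and then argue that one of the two parts already satisfies the conditions on the set $B$ (i.e., induces only 0-edges), while the other part, together with the edge $e$, satisfies the conditions on $A$ (induces only 1-edges with each component containing at most one cycle). The key observation is that since $e$ is the unique edge destroying bipartiteness, both $x$ and $y$ must lie on the same side of the bipartition of $G-e$; without loss of generality say $x,y\in L$. Thus $G[R]$ is edgeless, and $G[L]$ has exactly one edge, namely $e=xy$. This is the structural leverage that drives everything.

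First I would handle the trivial-looking but essential step: since the instance is preprocessed (by the Remark after \Cref{lem:simple_reductions} we may assume this), $G_1$ is a forest, so every edge lies in at most ``one cycle worth'' of structure; but more directly, since the only edge inside $L$ is $e$, the graph $G[L]$ is a single edge plus isolated vertices, which trivially has each component containing at most one cycle. So the real content is deciding which of $L$, $R$ should play the role of $A$ and which should play the role of $B$, and showing the chosen $A$ induces only 1-edges. This is where I expect the main obstacle: a priori $e=xy$ could be a 0-edge, and also the edges we are forced to put ``inside $A$'' (there is only $e$) need to be 1-edges, while we need $G[B]$ to contain only 0-edges — but if we pick $A=L$ then $G[B]=G[R]$ is edgeless, hence vacuously contains only 0-edges, and $G[A]=G[L]$ contains only the edge $e$, which we need to be a 1-edge. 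If $e$ is a 1-edge we are immediately done with $A=L$, $B=R$. If $e$ is a 0-edge, then in fact $G$ itself is such that $G-e$ is bipartite and $e$ is a 0-edge; here the cleaner move is to observe that we can simply reclassify: put $A = \emptyset$ (or rather, note that an even simpler bipartition works). Concretely, if $e$ is a 0-edge then $G = G_1 \cup G_0$ where $G_1 \subseteq G-e$ is bipartite, so $G_1$ itself is bipartite with bipartition inherited from $(L,R)$; but $G_1$ is a forest (preprocessed), and a forest is bipartite, so we can 2-color $G_1$ — however this does not immediately give $G[A]$ all-1-edges unless $A$ is an independent set in $G_1$...

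Let me restructure the argument to avoid that snag. The robust approach: if $e$ is a $1$-edge, set $A=L$, $B=R$ and we are done as above. If $e$ is a $0$-edge, then $G_1 \subseteq G-e$, so $G_1$ is bipartite; since $G_1$ is also a forest (preprocessed instance, \Cref{lem:simple_reductions}), pick any bipartition, but more usefully: in a preprocessed instance $G_1$ is a forest, and a forest on $n$ vertices has an independent set hitting one endpoint of every edge — actually the clean fact is that a tree is $2$-colorable, so set $A$ to be one color class of $G_1$ and $B$ the other; but then $G[A]$ may contain $0$-edges. The genuinely clean fix is: when $e$ is a $0$-edge, we don't need $A$ at all — take $B = V(G)$ only if $G$ has no $1$-edges, which need not hold. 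So instead: take $A$ to be \emph{a single vertex from each component of $G_1$} chosen as a root; no — the conditions require $G[A]$ to be \emph{all} of its induced $1$-edges. The correct reduction when $e$ is a $0$-edge is simply: $G-e$ is bipartite with parts $(L,R)$, and $e$ joins $x,y\in L$; set $A' = L$. Then $G[A']$ has exactly one edge, $e$, which is a $0$-edge — violating the ``only $1$-edges'' condition. So that fails, and we must instead set $A'=R$, $B'=L$: then $G[A']=G[R]$ is edgeless (vacuously ``only $1$-edges''), and $G[B']=G[L]$ has one edge $e$ which is a $0$-edge, as required. So $A=R$, $B=L$ works when $e$ is a $0$-edge, and $A=L$, $B=R$ works when $e$ is a $1$-edge. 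In both cases the ``at most one cycle per component'' condition is immediate since the chosen $A$ induces at most one edge. I would verify the linear-time claim by noting bipartiteness testing and finding the bipartition of $G-e$ is standard linear-time BFS/DFS.
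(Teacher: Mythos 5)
Your proposal is correct and, once you cut through the exploratory detours, it is essentially the same argument as the paper's: take the bipartition $(L,R)$ of $G-e$, observe both endpoints of $e$ fall on the same side (else $G$ were already bipartite), and then assign that side to $A$ or to $B$ depending on whether $w(e)=1$ or $w(e)=0$, with the other side (which is edgeless, hence vacuously satisfies either condition) taking the remaining role. The only difference is presentational: the paper states the case split directly, whereas you reach it after a couple of false starts about $G_1$ being a forest and picking roots, which are ultimately unnecessary — the whole argument rests only on the fact that the chosen part for $A$ induces at most one edge, so the ``at most one cycle per component'' clause is automatic.
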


\begin{proposition}{\rm \cite{IwataWY16,RamanujanS17}}\label{linear-time-OCT}
    There is a linear-time algorithm that, given a graph $G$ either correctly answers that the min-uncut number is more than 1 or produces an edge $e$ such that $G-e$ is bipartite.
\end{proposition}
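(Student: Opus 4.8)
The plan is to give a direct, self-contained linear-time algorithm; the statement also follows by instantiating the general linear-time \textsc{Edge-OCT} algorithms of \cite{IwataWY16,RamanujanS17} at parameter $1$, but for this value a short argument suffices. First I would reduce to the connected case: compute the connected components of $G$ and $2$-color each one greedily in linear time. If two or more components are non-bipartite, report that the min-uncut number exceeds $1$ (a single edge lies in one component, and the min-uncut number is the sum of the per-component values). If every component is bipartite, output any edge of $G$ (removing an edge from a bipartite graph leaves it bipartite). Otherwise there is exactly one non-bipartite component $H$, and since the min-uncut number of $G$ equals that of $H$, it remains to find an edge $e\in E(H)$ with $H-e$ bipartite, or to certify that none exists.

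To analyze $H$, I would run a DFS from an arbitrary root $r$, obtaining a DFS tree $T$ with a depth function; since $H$ is undirected, every non-tree edge is a back edge, so its fundamental cycle in $T$ is the monotone root-ward path between its two endpoints. Color $T$ by parity of depth and call a back edge \emph{bad} if its endpoints receive the same color (equivalently, its fundamental cycle is odd) and \emph{good} otherwise; let $B$ be the set of bad back edges, so $B\neq\emptyset$ since $H$ is non-bipartite. The two structural facts I would prove, both routine, are: (i) deleting a \emph{back edge} $f$ makes $H$ bipartite iff $f$ is the unique bad back edge (deleting $f$ keeps $T$ a spanning tree and the parity coloring valid, and $H-f$ is bipartite iff no bad back edge remains); and (ii) deleting a \emph{tree edge} $e=(w,\mathrm{par}(w))$ makes $H$ bipartite iff the fundamental cycle of every bad back edge contains $e$ and the fundamental cycle of no good back edge contains $e$. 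Fact (ii) is obtained by splitting $T$ at $e$ and flipping the parity coloring on the component of $T-e$ not containing $r$: this flips exactly the parities of the back edges crossing the split, so the flipped coloring is proper on $H-e$ iff all bad back edges cross $e$ and no good back edge does; the ``$H-e$ becomes disconnected'' subcase is subsumed, since then no back edge crosses $e$ and the condition would force $B=\emptyset$, a contradiction.

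Given this characterization, the algorithm outputs the unique bad back edge if $|B|=1$; otherwise it must find a tree edge $e=(w,\mathrm{par}(w))$ lying on the fundamental cycle of all $|B|$ bad back edges and of no good back edge. For a back edge with shallow endpoint $p$ and deep endpoint $q$, the tree edge $(w,\mathrm{par}(w))$ lies on its fundamental cycle iff $q$ is in the subtree of $w$ and $p$ is not. Hence, placing $+1$ at $q$ and $-1$ at $p$ for every bad back edge and taking subtree sums in a single post-order pass yields, at each vertex $w$, the number $\mathrm{cntBad}(w)$ of bad back edges whose fundamental cycle uses $(w,\mathrm{par}(w))$; an analogous pass gives $\mathrm{cntGood}(w)$. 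A tree edge then works iff $\mathrm{cntBad}(w)=|B|$ and $\mathrm{cntGood}(w)=0$, which we check in one final scan; if neither a back edge nor a tree edge works, we report min-uncut $>1$, and correctness of this report follows from facts (i)--(ii). Component decomposition, DFS, back-edge classification, the two subtree-sum passes and the final scan are each $O(n+m)$.

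The step I expect to need the most care is precisely obtaining \emph{linear} (rather than merely polynomial) time: the naive approach that deletes each edge and re-tests bipartiteness is quadratic, and the improvement rests on the two observations above—that a DFS of an undirected graph has only back edges, so fundamental cycles are monotone tree paths and no lowest-common-ancestor machinery is needed, and that fact (ii) admits the clean $\pm 1$-at-endpoints plus subtree-sum encoding. The remaining pieces (the per-component reduction, the parity-flip argument for tree-edge deletion, and the disconnected subcase) are straightforward.
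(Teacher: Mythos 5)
Your proof is correct, but it takes a genuinely different route from the paper. The paper does not prove this proposition at all: it simply invokes the general linear-time \textsc{Edge Odd Cycle Transversal} machinery of \cite{IwataWY16,RamanujanS17}, which handles every parameter value $k$ (via LP-branching/network-flow techniques), and instantiates it at $k=1$. You instead give a self-contained, elementary linear-time algorithm specialized to $k=1$: reduce to the single non-bipartite component, build a DFS tree, classify back edges as bad/good by fundamental-cycle parity, and then observe (i) a back edge works iff it is the unique bad one and (ii) a tree edge works iff it lies on the fundamental cycle of every bad back edge and no good one, which you test with a $\pm 1$-at-endpoints subtree-sum pass. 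Both facts are argued correctly; in particular, the justification via uniqueness of the proper $2$-coloring on each tree component (keep or flip the subtree side of the split) and the observation that the disconnected subcase is subsumed are sound, and the per-component reduction and the ``no LCA machinery needed since all non-tree edges are back edges'' point are handled cleanly. What the paper's citation buys is brevity and generality (the same proposition for any $k$ comes for free); what your argument buys is transparency, a constructive algorithm readable in isolation, and no dependence on heavy FPT machinery for a claim whose $k=1$ instance is elementary. One minor degenerate case worth a sentence: if $G$ has no non-bipartite component but also no edges, there is no edge to output, though this is an artifact of the proposition's phrasing rather than a gap in your argument.
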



\tractableEOCTone*

\begin{proof}
    Direct combination of \Cref{thm:Edge-OCT1-general-condition}, \Cref{lem:corollaryEOCT}, and \Cref{linear-time-OCT}. 
\end{proof}


\subsection{NP-hardness for instances with min-uncut number more than one}
We now complement our positive result in \Cref{cor:tractableEOCTone} with two intractability results even under strong restrictions. Both intractability results are obtained by slight modifications of a single reduction, so we prove them together in the following theorem. 

\edgeocttwo*

\begin{proof}
    We give a reduction from multicolored independent set (MIS), a standard $W[1]$-hard problem \cite{cygan2015parameterized}. 
    In this problem, we are given a graph $G'$, integer $k$, and a $k$-partition $V_1,\dots,V_k$ of $V(G')$ (each $V_i$ is called a color class).
    The goal is to find a set of $k$ mutually non-adjacent vertices $S$ such that $|S \cap V_i|=1$ for every $i \in [k]$.
    This problem is $W[1]$-hard when parameterized by $k$.
    We describe the instance of \EFXO, $(G, w)$, that is the result of the reduction, starting with its 1-graph $G_1$.
    For each color $i$ in our MIS instance we produce a path $P_i$ on $4 |V_i|$ vertices.
    Each vertex $u \in V_i$ corresponds to four vertices of the path denoted $u_1$ to $u_4$.
    (We aim to represent the choice of $u \in V_i$ in MIS by rooting $P_i$ in $u_2$ or $u_4$.)
    Additionally, for each vertex $u \in V_i$ we produce an auxiliary path $Q_u$ on five vertices $x_{u,1}$ to $x_{u,5}$.
    Aside from the described vertices, the instance also contains four isolated vertices $a_1$, $a_2$, $b_1$, and $b_2$.
    We now add 0-edges.
    For each vertex $u$ we add 0-edges $u_2 a_1$ and $u_2 a_2$.
    We also add 0-edges $x_{u,3} b_1$, $x_{u,3} b_2$, and $u_3 x_{u,2}$.
    For every $v \in N(u)$ we add the 0-edge $x_{u,4} v_3$.
    \begin{figure}[ht]
        \centering
        \includegraphics[scale=1.1,page=1]{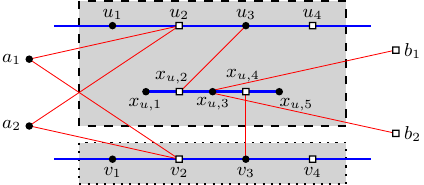}
        \caption{
            An illustration of the part of our construction representing the edge $uv$ in the MIS instance.
            The dashed top gray area represents $u$ on the path for its color, together with its auxiliary path $Q_u$, while the dotted bottom gray area shows 
            part of a (path that represents $v$'s color) that represents $v$.
            Note that in any bipartition of the graph produced in this construction, the vertices depicted by the solid black circles lie in one set while the white squares lie in the other set. 
        }%
        \label{fig:oct_one_connection}
    \end{figure}
    We will later add a gadget using the set $\{a_1,a_2,b_1,b_2\}$ to ensure that all EFX-orientations (if any exist) have a certain property that we call {\em goodness}.
    More precisely speaking,
    we say an EFX-orientation of $G$ is \emph{good} if every vertex that has  0-edges incident to itself and to both $a_1$ and $a_2$ or 0-edges incident to itself and both $b_1$ and $b_2$, is assigned at least one of these 0-edges.
    
    \begin{claim}\label{lem:reductionindependence}\app
        In a good nice EFX-orientation of $G$, for all edges $uv \in E(G)$, if $u_2$ or $u_4$ are chosen as a root then neither $v_2$ nor $v_4$ can be chosen as a root.
    \end{claim}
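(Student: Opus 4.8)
The goal is to show that in a good nice EFX-orientation, if a vertex $u$ with an edge $uv$ in $G'$ has its color-path $P_i$ rooted at $u_2$ or $u_4$, then $v$'s color-path cannot be rooted at $v_2$ or $v_4$. I will work entirely through \Cref{prop:structure} and \Cref{obs:niceEFXObservations}: the root of each tree in $G_1$ has its $1$-neighborhood lying in an independent set of $G_0$, and any vertex with a $0$-edge oriented toward it cannot have $P_i$ rooted at one of its $1$-neighbors. So the proof is really about tracing which $0$-edges become "forced" and deriving a contradiction from the independence requirement.

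\medskip
\textbf{Step 1: Pin down the structure of the auxiliary path and the forced orientation of $u_3x_{u,2}$.} First I would examine the path $Q_u = x_{u,1}\cdots x_{u,5}$ together with the $0$-edges $x_{u,3}b_1$ and $x_{u,3}b_2$. Since the orientation is \emph{good}, $x_{u,3}$ must be assigned one of the $0$-edges $x_{u,3}b_1, x_{u,3}b_2$, hence it has a $0$-edge oriented toward it. By \Cref{obs:niceEFXObservations}\eqref{obs:donot_root_when_vertex_has_zeroes}, $Q_u$ therefore cannot be rooted at a $1$-neighbor of $x_{u,3}$, i.e.\ not at $x_{u,2}$ or $x_{u,4}$. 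A path on five vertices has exactly one root, and the remaining candidates are $x_{u,1}, x_{u,3}, x_{u,5}$. In each of these cases I would check where $x_{u,2}$ and $x_{u,4}$ sit relative to the root, and in particular whether the root's $1$-neighborhood (which must be $G_0$-independent) interacts with the $0$-edges $u_3 x_{u,2}$ and $x_{u,4}v_3$. The key deduction I am aiming for: the $0$-edge $u_3 x_{u,2}$ is forced to be oriented \emph{toward} $x_{u,2}$ (so that $u_3$ does not receive it), because otherwise $u_3$ would have a $0$-item and then, combined with whatever is forced on $P_i$, we violate independence; symmetrically $x_{u,4}v_3$ is relevant for the other side. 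I need to be careful here about which of the three rootings of $Q_u$ are actually consistent and what each forces.

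\medskip
\textbf{Step 2: Analyze the effect of rooting $P_i$ at $u_2$ or $u_4$.} Suppose $P_i$ is rooted at $u_2$ (the case $u_4$ is analogous by the symmetry of the four-vertex block $u_1u_2u_3u_4$ and I would just note it). Then by \Cref{prop:structure} the $1$-neighborhood of $u_2$ in $G_1$—which on the path is $\{u_1, u_3\}$—must be an independent set in $G_0$, and moreover by \Cref{obs:niceEFXObservations}\eqref{obs:orient_neighbor_zeros_away} every $0$-edge incident to $u_1$ or $u_3$ is oriented \emph{away} from it. In particular the $0$-edge $u_3 x_{u,2}$ is oriented toward $x_{u,2}$, and the $0$-edge $x_{u,4}v_3$—wait, that one is incident to $v_3$, not $u_3$; instead I focus on $u_3$'s other $0$-edges. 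The crucial one is $u_3 x_{u,2}$: it is oriented toward $x_{u,2}$. Now I chase this back into $Q_u$: $x_{u,2}$ receives a $0$-item, so by \Cref{obs:niceEFXObservations}\eqref{obs:donot_root_when_vertex_has_zeroes} $Q_u$ is not rooted at $x_{u,1}$ or $x_{u,3}$, leaving $x_{u,5}$—or some further restriction—and from the rooting of $Q_u$ I extract that the $0$-edge $x_{u,4}v_3$ is oriented toward $v_3$. (This is the heart of the argument: a root at $u_2$ or $u_4$ propagates, via $Q_u$, to force $v_3$ to receive the $0$-item $x_{u,4}v_3$.)

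\medskip
\textbf{Step 3: Close the contradiction.} Once $v_3$ has a $0$-item oriented toward it, \Cref{obs:niceEFXObservations}\eqref{obs:donot_root_when_vertex_has_zeroes} immediately says that $v$'s color-path cannot be rooted at any $1$-neighbor of $v_3$ on that path, and $v_2$ and $v_4$ are exactly the two $1$-neighbors of $v_3$ in the four-vertex block $v_1v_2v_3v_4$. Hence neither $v_2$ nor $v_4$ is the root, which is what we wanted. I would finish by remarking that the symmetric roles of $u$ and $v$ in the edge $uv$ (we added $x_{u,4}v_3$ for every $v\in N(u)$, and by the same token $x_{v,4}u_3$) make the statement consistent no matter which endpoint we start from.

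\medskip
\textbf{Main obstacle.} The delicate part is Step~1–Step~2: correctly enumerating the admissible rootings of the five-vertex auxiliary path $Q_u$ under the goodness constraint and verifying that \emph{each} admissible rooting forces the orientation of $x_{u,4}v_3$ toward $v_3$ (and, in the other direction, that a root of $P_i$ at $u_2/u_4$ really does force $u_3 x_{u,2}$ toward $x_{u,2}$ rather than the other way). If some rooting of $Q_u$ escaped this forcing, the propagation would break. I expect this to come down to a short but careful case check on the path $P_5$, using repeatedly that on a rooted path the root's $1$-neighbors must be $G_0$-independent and carry no incoming $0$-edge, together with the single degree-of-freedom that a $P_5$ has exactly one root among its five vertices. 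The rest of the argument is a clean chain of implications through \Cref{obs:niceEFXObservations}.
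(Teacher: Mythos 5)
Your proposal follows essentially the same chain of forced orientations as the paper's proof: rooting $P_i$ at $u_2$ (or $u_4$) forces $u_3 x_{u,2}$ toward $x_{u,2}$; goodness rules out $x_{u,2}, x_{u,4}$ as roots of $Q_u$ and the incoming $0$-edge at $x_{u,2}$ rules out $x_{u,1}, x_{u,3}$, pinning the root of $Q_u$ to $x_{u,5}$; this forces $x_{u,4}v_3$ toward $v_3$, which blocks $v_2, v_4$. The paper phrases the forcings directly via strong envy (\Cref{obs:strong_envy_conditions}) while you route them through \Cref{obs:niceEFXObservations}, but these are equivalent, and the hedged "case check" you anticipate in Step~1--2 collapses exactly to the one-line implications in the paper once you apply the $P_i$-rooting constraint before the $Q_u$-rooting constraint, as you in fact do in Step~2.
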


    \begin{claim}\label{clm:goodiffmis}\app
        There exists a good nice EFX-orientation of $G$ if and only if there exists a multicolored independent set of $G'$.
    \end{claim}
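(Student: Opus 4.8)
\emph{Approach.} The plan is to prove the two implications separately, using the rooting characterisation of EFX-orientations (\Cref{prop:structure}) together with \Cref{lem:reductionindependence}. Since the constructed instance is preprocessed, every connected component of $G_1$ is one of the paths $P_i$, one of the auxiliary paths $Q_u$, or a component of the $\{a_1,a_2,b_1,b_2\}$-gadget, so a nice EFX-orientation amounts to a choice of one root in each component satisfying the $G_0$-independence condition of \Cref{prop:structure}. For the direction ``$\Rightarrow$'' I will read a multicolored independent set off the roots of the $P_i$'s; for ``$\Leftarrow$'' I will convert a multicolored independent set $S$ of $G'$ into a rooting whose set $R$ of root $1$-neighbourhoods is independent in $G_0$, and then invoke \Cref{prop:structure}.

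\emph{Forward direction.} Fix a good nice EFX-orientation of $G$. For every vertex $u$ of the MIS instance, $u_2$ is incident to $0$-edges to both $a_1$ and $a_2$, so goodness assigns one of them to $u_2$; thus a $0$-edge is oriented towards $u_2$, and by \cref{obs:donot_root_when_vertex_has_zeroes} of \Cref{obs:niceEFXObservations} the component $P_i$ containing $u_2$ is not rooted at any vertex of $N_1(u_2)$. Since $u_1,u_2,u_3,u_4$ occur consecutively (in this order) on $P_i$, we have $N_1(u_2)=\{u_1,u_3\}$, so---ranging over all $u\in V_i$---the unique root of $P_i$ must be a vertex of the form $u_2$ or $u_4$ for a uniquely determined $u\in V_i$; collect these vertices into $S$. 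Then $|S\cap V_i|=1$ for every $i$, and if $uv\in E(G')$ with $u,v\in S$ then $P_i$ (with $u\in V_i$) is rooted at $u_2$ or $u_4$ and the path of $v$'s color at $v_2$ or $v_4$, contradicting \Cref{lem:reductionindependence}. Hence $S$ is a multicolored independent set of $G'$.

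\emph{Backward direction.} Let $S$ be a multicolored independent set of $G'$. Root $P_i$ at $u_2$, where $u$ is the unique vertex of $S\cap V_i$; root $Q_u$ at $x_{u,5}$ if $u\in S$ and at $x_{u,1}$ if $u\notin S$. Let $R$ be the union of the $1$-neighbourhoods of the chosen roots; then from the $P_i$'s the set $R$ contains only ``$\cdot_1$''- and ``$\cdot_3$''-type vertices, it contains $x_{u,4}$ exactly when $u\in S$ and $x_{u,2}$ exactly when $u\notin S$, and it meets the gadget only in gadget-internal vertices (and possibly in $\{a_1,a_2,b_1,b_2\}$). I check that $R$ is independent in $G_0$ by going through the four families of $0$-edges: the endpoint $u_2$ of $u_2a_1,u_2a_2$ and the endpoint $x_{u,3}$ of $x_{u,3}b_1,x_{u,3}b_2$ are never in $R$; for $u_3x_{u,2}$, note $u_3\in R$ only when $u\in S$ while $x_{u,2}\in R$ only when $u\notin S$; and for $x_{u,4}v_3$ with $v\in N(u)$ we have $x_{u,4}\in R$ iff $u\in S$ and $v_3\in R$ iff $v\in S$, so both endpoints lie in $R$ only if $u,v\in S$ with $uv\in E(G')$, which is impossible. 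Thus $R$ is independent in $G_0$, and \Cref{prop:structure} yields a nice EFX-orientation with these roots. To make it good, additionally orient every $u_2a_1$ towards $u_2$ and every $x_{u,3}b_1$ towards $x_{u,3}$ (these edges are not incident to $R$, so this is consistent with \cref{obs:orient_neighbor_zeros_away} of \Cref{obs:niceEFXObservations}) and orient the gadget as its construction guarantees; the result is a good nice EFX-orientation of $G$.

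\emph{Main obstacle.} The delicate part is the backward direction: the $Q_u$'s must be rooted so that $R$ avoids every $0$-edge, and the hypothesis ``$S$ independent in $G'$'' is actually used only for the family $x_{u,4}v_3$. Making the $Q_u$-rootings simultaneously compatible with the edges $u_3x_{u,2}$ (which require $x_{u,2}\notin R$ whenever $u\in S$) and with the edges $x_{u,4}v_3$ (which require $x_{u,4}\notin R$ unless $u\in S$) is precisely what forces the ``$x_{u,5}$ if $u\in S$, else $x_{u,1}$'' split, and is the step that needs the careful case analysis above. A secondary, more routine point is confirming that the $0$-edge orientations forced by $R$ still leave enough freedom to realise goodness and to orient the $\{a_1,a_2,b_1,b_2\}$-gadget, which is where the defining properties of that gadget are used.
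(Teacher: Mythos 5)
Your proposal is correct and follows essentially the same route as the paper: root $P_i$ at $u_2$ for the selected $u\in S\cap V_i$, root $Q_u$ at $x_{u,5}$ or $x_{u,1}$ according to $u\in S$ or not, and check $G_0$-independence of $R$ via \Cref{prop:structure}; in the other direction read $S$ off the $P_i$-roots and invoke \Cref{lem:reductionindependence}. You are in fact a bit more careful than the paper in the direction from orientation to independent set: you explicitly use goodness together with \Cref{obs:niceEFXObservations}(2) to argue that each $P_i$'s root must be of the form $u_2$ or $u_4$ (the paper assumes this implicitly when it ``finds the vertex $u$ with $u_2\in S'$ or $u_4\in S'$''). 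One minor remark: the claim concerns $G$ without the gadget $H$, so your closing aside about ``orient the gadget as its construction guarantees'' is superfluous here; goodness of the constructed orientation is ensured directly by orienting $u_2a_1$ towards $u_2$ and $x_{u,3}b_1$ towards $x_{u,3}$, as you already do.
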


    It remains to describe a gadget, $H$, using the vertices $a_1$, $a_2$, $b_1$, and $b_2$ that ensures that all EFX-orientations of $G \cup H$ are good.
    Note that the construction thus far is bipartite: it consists of paths that are joined either with 2-paths or single edges but only at even distances therefore creating only even cycles (see \Cref{fig:oct_one_connection}). Moreover, note that the construction thus far also results in a connected graph and so, has a unique bipartition up to swapping the sides. 
    Our next step is to design a gadget that ``relaxes'' or ``breaks'' this bipartiteness (or we would be contradicting \Cref{thm:Edge-OCT1-general-condition}).
    We will give three gadgets, each with slightly different structure so that 
    each corresponds to a different relaxation of bipartiteness. 

    The first gadget is simply two 1-edges: $a_1a_2$, and $b_1b_2$.
    Both of these edges increase the min-uncut number so with this gadget the instance will have min-uncut number two.
    Since every edge can be oriented towards only one of its two endpoints, one of $a_1$ or $a_2$ will be assigned the 1-edge $a_1a_2$.
    As $a_1$ and $a_2$ are twins (have identical neighborhoods), suppose without loss of generality that we assigned the 1-edge to $a_1$.
    Then $a_2$ will be envious of $a_1$ if $a_1$ receives any 0-edges.
    Therefore, all 0-edges incident to $a_1$ must be oriented away from $a_1$, and hence, any vertex with a 0-edge to $a_1$ will be assigned a 0-edge.
    The same argument applies to $b_1$ and $b_2$.

    The second gadget has two 1-edges $a_1b_1$, $a_2b_2$ and two 0-edges $a_1a_2$ and $b_1b_2$.
    Again with this gadget the instance will have min-uncut number two but this time the edges not across the cut have value zero.
    Similarly to the first gadget the 1-edges are oriented towards only one of their endpoints so there are two vertices that do not receive a 1-edge and therefore that could potentially be strongly envious.
    This will happen if the two vertices that do receive their incident 1-edges also receive a 0-edge.
    So the vertices that do receive 1-edges cannot be both $a_1$ and $a_2$, or both $b_1$ and $b_2$.
    Furthermore these vertices also must receive no further 0-edges from outside the gadget.
    That is, either every 0-edge incident to $a_1$ or $b_2$ is oriented away from that vertex or similarly for $a_2$ and $b_1$.

    The third gadget is slightly more complicated.
    Again we have two 1-edges, this time $a_1b_1$ and $b_2a_2$.
    Additionally we have the edge $b_1b_2$ valued at $m$ where $0 < m < 1$.
    This is the only edge contributing to the min-uncut number but this edge is not valued 0 or 1 so we do not quite have a binary valuation.
    We consider the possible orientations of the edges between $a_1$, $a_2$, $b_1$, and $b_2$.
    By symmetry we can assume that the $m$ valued edge is directed from $b_1$ to $b_2$.
    Furthermore, if $a_2b_2$ was oriented towards $b_2$, then $b_2$ would get two items while $a_2$ would have only 0-items so $a_2$ will be envious of $b_2$, see (\subref{fig:oct_one_gadgets_d}.I) of \Cref{fig:oct_one_gadgets}.
    So $a_2b_2$ must be oriented towards $a_2$.
    We can now consider the two cases for the orientation of $a_1b_1$.
    \begin{description}[wide=0pt]
        \item[Case 1:] Towards $b_1$, see (\subref{fig:oct_one_gadgets_d}.II) of \Cref{fig:oct_one_gadgets}.
            The vertex $a_1$ may be assigned 0-edges but will have a bundle of value zero so will strongly envy $b_1$ unless $b_1$ gets no 0-items, i.e., all 0-edges incident to $b_1$ are directed away from it.
            Furthermore, $b_2$ gets a bundle with value only $m < 1$, so similarly, $b_2$ will strongly envy $a_2$ unless $a_2$ gets no 0-items.
        \item[Case 2:] Towards $a_1$, see (\subref{fig:oct_one_gadgets_d}.III) of \Cref{fig:oct_one_gadgets}.
            The vertex $b_1$ gets no items so both $a_1$ and $b_2$ must get no 0-items.
    \end{description}
    
    This completes the proof of \Cref{thm:edgeoct2}. 
\end{proof}
\begin{figure}[ht]
    \centering
    \hfill
    \begin{subfigure}{0.18\textwidth}
        \centering
        \includegraphics[page=3,scale=1.1]{oct_hardness.pdf}
        \subcaption{}
        \label{fig:oct_one_gadgets_a}
    \end{subfigure}
    \hfill
    \begin{subfigure}{0.18\textwidth}
        \centering
        \includegraphics[page=4,scale=1.1]{oct_hardness.pdf}
        \subcaption{}
        \label{fig:oct_one_gadgets_b}
    \end{subfigure}
    \hfill
    \begin{subfigure}{0.18\textwidth}
        \centering
        \includegraphics[page=5,scale=1.1]{oct_hardness.pdf}
        \subcaption{}
        \label{fig:oct_one_gadgets_c}
    \end{subfigure}
    \hfill
    \begin{subfigure}{0.41\textwidth}
        \centering
        \includegraphics[page=6,scale=1.1]{oct_hardness.pdf}
        \subcaption{}
        \label{fig:oct_one_gadgets_d}
    \end{subfigure}
    \hfill
    \caption{
        (\subref{fig:oct_one_gadgets_a}) The first gadget to ensure min-uncut number 2 in our construction. Note that both of the uncut edges are 1-edges.
        (\subref{fig:oct_one_gadgets_b}) The second gadget is similar to the first, but forces two uncut 0-edges in our construction. 
        (\subref{fig:oct_one_gadgets_c}) The third gadget which ensures min-uncut number 1 in our construction. Note that the dashed uncut edge does not have a binary valuation. 
        (\subref{fig:oct_one_gadgets_d}) Figures corresponding to the case analysis for the third gadget. Empty arrows show forced direction for respective 0-edges.
    }%
    \label{fig:oct_one_gadgets}
\end{figure}


\section{Dichotomy II: effect of structure underlying $G_0$ and $G_1$} \label{sec:dichotomypaths}

In this section, we identify a surprising effect that short induced paths
have on the complexity of {\EFXO}.

Recall that the NP-completeness {of \EFXO} was established by Christodoulou et al.~\shortcite{christodoulou_fair_2023} by a reduction from Boolean circuit satisfiability.
We note that their reduction results in a graph where $G_1$ consists of induced paths on 1, 3, and 5 vertices and does not explicitly restrict the structure of $G_0$.
In this section, {we show that induced paths on 5 vertices, denoted by $P_5$, are in some sense, essential for this NP-hardness, and any (preprocessed) instance where $G_1$ is $P_5$-free can be solved in polynomial time.}
Further, we apply the framework of parameterized complexity to identify those instances with parametrically-restricted structure that can be solved efficiently.

\subsection{Preparing for NP-hardness: reducing 0-degrees}

\begin{lemma}\label{red:zero_degrees}\app
    Let ${\mathcal I}=(G,w)$ be an instance of {\EFXO}.  Let ${\mathcal I'}=(G',w')$ be the \EFXO instance resulting from $\mathcal I$ by performing the following operation.

    Pick a vertex $u \in V(G)$ such that $|N_0(u)| > 1$ and for each of its 0-neighbors $v \in N_0(u)$, delete the $uv$ edge and create an auxiliary arbitrary binary tree $B$ with leaves as $N_0(u)$ and with root in a new vertex $u'$.
    Next add the vertices $V(B) \setminus (N_0(u) \cup \{u\})$ to the graph $G$.
    Now, for each edge $ab \in E(B)$, where $a$ is the vertex closer to $u'$ in $B$, we create a new vertex $w$ and add it to $G$ along with a 1-edge $aw$ and a 0-edge $wb$.
    To finish, we add $uu'$ 0-edge to $G$, see \Cref{fig:high_0_degree} for illustration.
    Then, ${\mathcal I}$ has an \EFX orientation if and only if ${\mathcal I}'$ has an \EFX orientation. Moreover, given an \EFX orientation for ${\mathcal I}'$, one for ${\mathcal I}$ can be computed in linear time. 
 
\end{lemma}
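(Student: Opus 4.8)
The plan is to establish the two directions of the equivalence by translating between EFX-orientations of $\mathcal I$ and $\mathcal I'$, using the structure of the auxiliary tree $B$ together with the pendant $1$-edge/$0$-edge ``blockers'' $aw$, $wb$ that were attached to each edge of $B$. First I would set up the bookkeeping: in $G'$, every edge $ab\in E(B)$ (with $a$ the endpoint nearer the root $u'$) carries a private gadget vertex $w_{ab}$ with $w(aw_{ab})=1$ and $w(w_{ab}b)=0$; the tree $B$ itself consists only of $0$-edges, the new edge $uu'$ is a $0$-edge, and all of $G-\{uv : v\in N_0(u)\}$ is untouched. I would first record the local behaviour of a single gadget: since $aw_{ab}$ is the unique $1$-edge at $w_{ab}$ and $w_{ab}$ has exactly one other ($0$-)edge, \Cref{obs:strong_envy_conditions} forces that in any EFX-orientation $aw_{ab}$ is oriented towards $w_{ab}$ (if it went to $a$, then $w_{ab}$ has no $1$-item while $a$ holds $w_{ab}$'s $1$-edge plus potentially the $B$-edge, and we would need to argue $a$ then strongly envies $w_{ab}$ unless $a$ gets nothing else — which I would make precise). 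Consequently $w_{ab}b$ must be oriented towards $w_{ab}$ as well (otherwise $b$ receives it and $w_{ab}$, holding only the $1$-item $aw_{ab}$, would be strongly envied by nobody but $b$ could be... — more carefully: $w_{ab}$ receiving $aw_{ab}$ has value $1$, so for EFX at $w_{ab}$ towards $b$ we need $b$ not to get $w_{ab}b$ \emph{and} more, forcing $w_{ab}b\to w_{ab}$). The upshot I want is: \emph{every edge of $B$ is ``neutralised'' — the gadget vertices absorb their incident edges, $b$ never receives the $0$-edge $w_{ab}b$, and $a$ never receives the $1$-edge $aw_{ab}$, so the $B$-part contributes no $1$-items to any original vertex and imposes no new constraint except via the $0$-edges of $B$ themselves and the edge $uu'$.}

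With that local analysis in hand, the forward direction ($\mathcal I'$ yes $\Rightarrow$ $\mathcal I$ yes): given an EFX-orientation $\OG'$ of $G'$, by \Cref{cor:alwaysnice} assume it is nice. Restrict it to the original vertices and re-insert each deleted edge $uv$ ($v\in N_0(u)$), orienting it the same way the corresponding root-to-leaf path of $B$ ``transmitted'' information — concretely, I would argue that for each $v\in N_0(u)$, the orientation of the $B$-path from $u'$ to $v$ together with the $uu'$ edge determines a consistent choice of orienting $uv$, and that no new strong envy is created because (i) the $B$-part gave no original vertex a $1$-item, so \cref{envy:value_one} of \Cref{obs:strong_envy_conditions} is unaffected, and (ii) re-adding $uv$ as a $0$-edge can only matter through \cref{envy:one_edge}, which requires a $1$-edge — so it cannot create strong envy along $uv$ itself; the only thing to check is that $u$ and the $v$'s do not end up receiving ``too many'' $0$-items relative to what $\OG'$ allowed, which the construction is designed to prevent because in $\OG'$ the $0$-edges at $u$ towards $u'$ and at each $v$ towards its gadget play exactly the role the single edge $uv$ plays in $\mathcal I$. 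The backward direction ($\mathcal I$ yes $\Rightarrow$ $\mathcal I'$ yes): take a nice EFX-orientation of $G$, keep it on the shared vertices, orient every gadget's edges towards its gadget vertex $w_{ab}$ (forced, as above), and orient the $B$-edges and $uu'$ consistently with the orientations of the original $uv$ edges — i.e., simulate ``$uv$ oriented towards $v$'' by pushing the corresponding $0$-edge of $B$ towards the $v$-side, and towards $u$ by pushing towards the $u'$-side; then verify no strong envy using that all $1$-items in the $B$-region sit at degree-constrained gadget vertices that cannot be strongly envied (they get at most... actually they get exactly one $1$-item and one $0$-item, so \cref{envy:other_item} / \cref{envy:value_one} must be checked — a gadget vertex holding a $1$-item is never strongly envied by \cref{envy:value_one} failing, and a gadget vertex never strongly envies because its own $1$-edge goes to it).

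For the ``moreover'' (linear-time back-translation) I would simply observe the forward map above is explicit and local: scan $B$, read off the forced gadget orientations, and for each $v\in N_0(u)$ recover the orientation of $uv$ in $O(1)$ amortised time, so the whole transformation is linear in $|V(B)|$, which is $O(|N_0(u)|)$.

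\textbf{Main obstacle.} The delicate point — and the step I expect to take the most care — is pinning down exactly what the chain of $0$-edges through $B$ plus the blocker gadgets is allowed to ``transmit'' and proving it is \emph{equivalent} to, not merely implied by, the single $0$-edge $uv$. In particular I must rule out spurious EFX-orientations of $G'$ that exploit the extra vertices of $B$ to satisfy constraints that the original edge $uv$ could not satisfy (and vice versa), which is where the precise forcing from \Cref{obs:strong_envy_conditions} on the gadget vertices $w_{ab}$, and the niceness assumption from \Cref{cor:alwaysnice}, do the real work; the figure \Cref{fig:high_0_degree} is meant to make the intended correspondence visually clear, and the write-up should mirror that correspondence edge by edge.
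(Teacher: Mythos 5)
There is a genuine gap: the claimed forcing of the gadget edges is both incorrect and, worse, internally inconsistent with the EFX condition. You assert that in any EFX-orientation of $G'$ the $1$-edge $aw_{ab}$ must point towards $w_{ab}$ \emph{and} the $0$-edge $w_{ab}b$ must also point towards $w_{ab}$. If both held for the two children-edges of an internal $B$-vertex $a$, then $a$ would have no $1$-item while $w_{ab}$ holds a $1$-item plus a $0$-item, and $a$ strongly envies $w_{ab}$ by \Cref{obs:strong_envy_conditions} --- so the ``forced'' orientation you describe is not even EFX. In fact nothing is forced: via \Cref{prop:structure}, each new $P_3$ in $G'_1$ may legitimately be rooted at its middle vertex $a$ (both $1$-edges leave $a$) or at a leaf $w_{ab}$ ($a$ keeps a $1$-item), and the paper's proof depends on exercising both options. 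Concretely, consider a yes-instance $\mathcal I$ whose valid rooting has $u$ not a root-neighbour but some $v\in N_0(u)$ a root-neighbour --- perfectly legal, since only the pair $\{u,v\}$ is constrained by the $0$-edge $uv$. Under your scheme every $P_3$ is rooted at its middle, so every gadget vertex $w$ becomes a root-neighbour, and then the new $0$-edge from the gadget vertex on the $B$-edge into $v$ meets $v$ (already a root-neighbour) and destroys independence. The paper handles this case by rooting every $P_3$ at a gadget leaf, making the root-neighbours the internal $B$-vertices, which carry no $0$-edge to any original vertex except $u$ (via $uu'$), and $u$ not being a root-neighbour saves the day. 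There is also a secondary misreading of the construction: the edges of $B$ are \emph{not} added as $0$-edges of $G'$; each $B$-edge $ab$ is wholly replaced by the gadget vertex $w$ together with $aw\in E(G'_1)$ and $wb\in E(G'_0)$. If the $B$-edges were retained, the new internal vertices would themselves have $0$-degree exceeding $1$ and the exhaustive application of this rule would not terminate.

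The paper's actual proof avoids your difficulties entirely by arguing at the level of rootings, not orientations. Forward: given a valid rooting of $G$, case-split on whether $u$ is a root-neighbour, root all new $P_3$s uniformly at the gadget leaves $L$ (if $u$ is not a root-neighbour) or at the middle vertices $M$ (otherwise), and verify independence using that every new $0$-edge ($uu'$ and the $wb$'s) runs between $L\cup\{u\}$ and $M\cup N_0(u)$. Backward: given a valid rooting of $G'$, propagate a chain of implications up the tree $B$ --- a $P_3$ rooted at a leaf forces its parent's $P_3$ to be rooted at a leaf --- to conclude that if $u$ is a root-neighbour then no $v\in N_0(u)$ is one. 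You do flag the ``delicate point'' at the end of your proposal, but the forcing you outline is exactly the step that makes the proof fail rather than merely incomplete; the right move is to exploit, not suppress, the freedom in rooting the $P_3$s.
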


\begin{figure}[ht]
    \centering
    \includegraphics[scale=1.1,page=2]{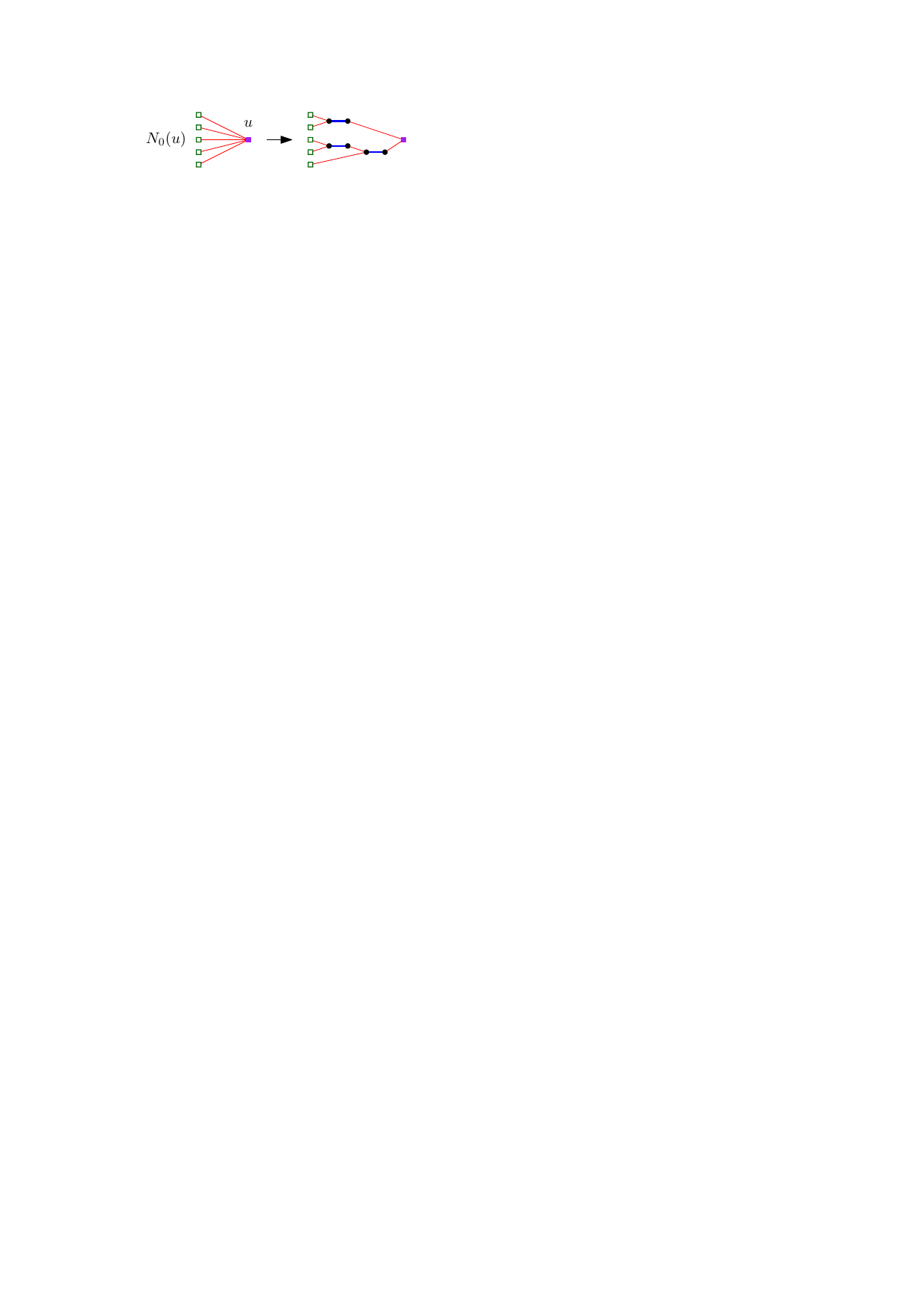}
    \caption{Reduction that replaces high 0-degree vertex with an equivalent binary tree. Recall that the thick blue lines are 1-edges and the thin red lines are 0-edges.}%
    \label{fig:high_0_degree}
\end{figure}

Note that by applying \Cref{red:zero_degrees} on a vertex $u$, the 0-degree of $u$ drops to 1.
By repeatedly applying this operation exhaustively, we obtain a graph where $G_0$ is a matching and to $G_1$ we added only a set of disjoint paths on tree vertices. This subroutine will come in handy in the next section where we aim to obtain ``NP-hard instances'' that additionally possess a significant amount of structure.

\subsection{NP-hardness: presence of small paths in $G_1$}

\threesatnpc*

The proof of this theorem appears at the end of this subsection since we first need to prove an intermediate lemma. Our overall strategy is to give a polynomial-time reduction that is inspired by the reduction of  \cite{christodoulou_fair_2023}. 
We reduce to {\EFXO} from \PMSAT, a variant of 3-SAT.  In a \emph{monotone} CNF formula, each clause contains only positive or only negative literals. A \emph{Monotone rectilinear representation} of such a formula is a planar drawing (if one exists) where variables are drawn as horizontal line segments along the $x$-axis, positive clauses are drawn as horizontal line segments with positive $y$-coordinate, negative clauses are horizontal line segments with negative $y$-coordinate, and whenever a variable is contained in a clause the respective horizontal segments are connected with a vertical segments, see \Cref{fig:rectilinear_3SAT_sat}.
In the \PMSAT problem, the input is a  monotone rectilinear representation of a planar monotone 3-CNF formula and the goal is to decide whether it is satisfiable. This SAT variant is NP-complete, as shown by {de Berg} and Khosravi~\cite[Theorem 1]{deberg_planar_sat_2012}. Towards \Cref{thm:3satnpc}, we prove the following intermediate hardness result. Note that in contrast to \Cref{thm:3satnpc},  the following statement does not make any assertions regarding the 0-degrees of vertices in the reduced instance. However, this will be achieved later by invoking \Cref{red:zero_degrees}. 

\begin{lemma}\label{lem:efxosmallnpc}
    There is a polynomial-time reduction from \PMSAT to \EFXO where the graph $G$ in the produced instance is planar and moreover, each connected component of $G_1$ is either a $P_2$ or a $P_5$.
\end{lemma}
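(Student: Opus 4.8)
The goal is a polynomial-time reduction from \PMSAT producing an \EFXO instance $(G,w)$ with binary symmetric additive valuations where $G$ is planar and every connected component of $G_1$ is a $P_2$ or a $P_5$. By \Cref{prop:structure}, solving \EFXO on a preprocessed instance amounts to choosing a root in each tree of $G_1$ so that the union of the $1$-neighbourhoods of the roots is independent in $G_0$; by \Cref{cor:alwaysnice} it suffices to reason about nice orientations. So the whole design problem reduces to: build $G_1$ out of $P_2$s and $P_5$s, wire them together with $0$-edges, so that the legal rootings encode satisfying assignments. I would follow the Christodoulou et al.\ blueprint but force every "wire" to be a $P_5$ and every "terminal/clause" connector to be a $P_2$.

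\textbf{Variable gadget.} A $P_5$ has exactly two candidate roots that leave no $1$-edge pointing at the two "internal" vertices adjacent to a leaf — more precisely, in a nice orientation of a $P_5$ $v_1v_2v_3v_4v_5$ the root must be one of $v_1,v_3,v_5$ (these are the vertices at even distance from both ends... actually the vertices whose removal-of-incident-$1$-edges pattern is consistent), but by attaching $0$-edges cleverly we can restrict the "meaningful" choice to two alternatives, say "root at one end" vs.\ "root at the other end", which we read as $\mathit{true}$ vs.\ $\mathit{false}$. I would represent each variable $x$ by a chain of $P_5$s glued by $0$-edges between specified internal vertices, so that all copies are forced to root consistently (all at the "true end" or all at the "false end"); this propagation is exactly \Cref{obs:niceEFXObservations}, items \ref{obs:orient_neighbor_zeros_away} and \ref{obs:donot_root_when_vertex_has_zeroes}: once a $0$-edge is forced onto a vertex, its $1$-neighbours cannot be roots. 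The planarity of the rectilinear representation lets me lay these chains along the $x$-axis.

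\textbf{Clause gadget.} A positive (resp.\ negative) clause on three variables should be built so that it is orientable iff at least one incident variable is set to make the corresponding literal true. I would use a small central $P_5$ (the clause's "output path") whose root choice is forced by the three incoming wires: arrange $0$-edges from the three variable-wire endpoints into the clause $P_5$ so that the clause $P_5$ has \emph{some} admissible root iff not all three incoming $1$-neighbourhoods already cover every candidate root slot. The connectors carrying the signal from a variable chain into a clause path are the $P_2$s: a single $1$-edge $ab$ where $a$ is glued (via $0$-edges) to the variable side and $b$ to the clause side, so that which of $a,b$ is the root transmits one bit. Positive clauses sit above the axis, negative below, matching the monotone rectilinear drawing, which keeps $G$ planar.

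\textbf{Correctness and the main obstacle.} Soundness and completeness follow the standard two directions: given a satisfying assignment, root every variable $P_5$ at its true/false end accordingly, propagate the forced $0$-edge orientations, and check via \Cref{prop:structure} that the union of root $1$-neighbourhoods is $G_0$-independent — the clause $P_5$s can be rooted precisely because each clause has a satisfied literal freeing up a root slot. Conversely, from a nice EFX-orientation, read off each variable's truth value from its chain's root (consistency is forced by the propagation gadget) and argue each clause is satisfied, else its central $P_5$ would have no admissible root. The main obstacle I anticipate is the clause gadget: making a $P_5$ (which has three structurally-distinct candidate roots, not two) behave like a clean OR-gate — I will need to "kill" the unwanted root slots with auxiliary $0$-edges to the same kind of helper vertices used elsewhere, and verify that doing so does not create parity/planarity conflicts with the three incident wires arriving from below or above. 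A secondary nuisance is ensuring that after all this wiring the instance is actually \emph{preprocessed} in the sense of \Cref{lem:simple_reductions} (no isolated-in-$G_1$ vertices, no $1$-cycles) so that \Cref{prop:structure} applies verbatim — this is easy to arrange since every vertex I create lies on some $P_5$ or $P_2$, but it must be stated. Finally I note that the $P_3$ components and the $0$-degree bound promised in \Cref{thm:3satnpc} are \emph{not} needed here: those are obtained afterwards by applying \Cref{red:zero_degrees}, which replaces high-$0$-degree vertices and introduces the $P_3$s.
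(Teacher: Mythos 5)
Your plan is in the right conceptual ballpark (use \Cref{prop:structure} to translate \EFXO on a $P_2$/$P_5$ forest into a rooting problem, and wire the roots with $0$-edges to encode the SAT instance), but it diverges from the paper's construction and, crucially, it stops short of giving the gadgets. The paper assigns the roles the opposite way round and with no intermediary: each \emph{variable} is a single $P_2$ with endpoints labelled $x_i$ and $\neg x_i$ (the root is the truth value), and each \emph{clause} is a single $P_5$ $c_{j,1}c_{j,2}g_{j,3}g_{j,2}c_{j,3}$ augmented with exactly two internal $0$-edges $c_{j,2}g_{j,2}$ and $g_{j,3}c_{j,3}$. Those two $0$-edges are the whole trick: they make $g_{j,3}$ and $g_{j,2}$ infeasible as roots (their $1$-neighbourhoods contain both endpoints of a $0$-edge), so the only admissible roots are the three ``literal slots'' $c_{j,1},c_{j,2},c_{j,3}$. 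Each slot $c_{j,q}$ has a designated private $1$-neighbour $\alpha(c_{j,q})$, and a direct $0$-edge is drawn from $\alpha(c_{j,q})$ to the matching variable vertex. High $0$-degree on variable vertices is simply allowed at this stage (it is removed later by \Cref{red:zero_degrees}), so no ``chain of $P_5$s'' per variable and no $P_2$ connectors are needed.

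The genuine gaps in your proposal are: (i) the clause gadget is not constructed --- you explicitly name making a $P_5$ behave as an OR-gate ``the main obstacle'' rather than resolving it, while the paper resolves it with precisely the two internal $0$-edges above; (ii) the variable gadget (a consistency-propagating chain of $P_5$s glued by $0$-edges) is also only sketched; (iii) the statement that in a $P_5$ $v_1v_2v_3v_4v_5$ ``the root must be one of $v_1,v_3,v_5$'' is wrong --- absent internal $0$-edges, all five vertices are admissible roots, and $v_3$ is in fact the \emph{dominated} one by \Cref{obs:dominated_states}, so the three useful slots in the paper's clause gadget are $v_1,v_2,v_5$ and they are carved out by the added $0$-edges, not by the bare $P_5$. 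Because of (i) and (ii) the proposal cannot be checked for correctness or planarity, and (iii) suggests the rooting structure you are relying on is not quite right. Re-centre the design around: $P_2$ = variable bit, $P_5$ with two internal $0$-edges = three-way clause choice, and direct variable-to-clause $0$-edges using the $\alpha$-map; planarity then follows by placing gadgets at the positions given by the rectilinear representation (after swapping the first two literals in each clause to avoid crossings).
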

\begin{proof}
    Let $\phi$ be a planar monotone 3-CNF formula
    with clauses $\mathcal C = C_1,\dots,C_m$ and variables $\mathcal X = x_1,\dots,x_n$ such that each $C_j$ contains at most three literals $c_{j,1},c_{j,2},c_{j,3}$.
    We build the graph for the instance of \EFXO as follows, see \Cref{fig:hardness_3SAT}.
    \begin{itemize}[wide=0pt]
        \item For each variable $i \in [n]$ create a $P_2$ of weight 1 with vertices named (suggestively) $x_i$ and $\neg x_i$.
        \item For each clause $j \in [m]$ create a $P_5$ of weight 1 with vertices named (first-to-last): \\ $c_{j,1},c_{j,2},g_{j,3},g_{j,2},c_{j,3}$; add 0-edges $c_{j,2}g_{j,2}$ and $g_{j,3}c_{j,3}$.
        \item Define a function $\alpha \colon V \to V$ which maps vertices that represent clause literals to their neighbors within their $P_5$ as follows $\alpha(c_{j,1}) = c_{j,2}$, $\alpha(c_{j,2}) = c_{j,1}$, and $\alpha(c_{j,3}) = g_{j,2}$.
        \item For every pair $i \in [n], j \in [m]$ if for some $q \in [3]$ we have $c_{j,q}=x_i$, i.e., $q$-th literal in $j$-th clause is $x_i$, then connect $\alpha(c_{j,q})$ to vertex $x_i$ with a 0-edge. Similarly if $c_{j,q} = \neg x_i$ then connect $\alpha(c_{j,q})$ to $\neg x_i$ with a 0-edge.
    \end{itemize}
    \begin{figure}[ht]
        \centering
        \includegraphics[scale=1.1]{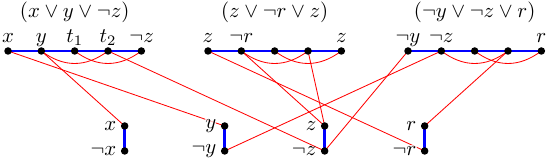}
        \caption{
            Reduction example for a 3-CNF formula $\phi=(x \lor y \lor \neg z) \land (z \lor \neg r \lor z) \land (\neg y \lor \neg z \lor r)$.
            Note that the first clause gadget cannot be rooted in $t_1$ nor $t_2$ due to its 0-edges.
            Rooting the first clause gadget in $x$ forbids rooting the variable gadget in $\neg x$.
        }%
        \label{fig:hardness_3SAT}
    \end{figure}

    Due to \Cref{prop:structure} now it remains to show that the 3-CNF formula $\phi$ is satisfiable if and only if the 1-trees in the resulting instance can be rooted.
    In the forward direction, let us have a satisfying variable assignment.
    Hence, for each clause $C_j$ let $c_{j,q_j}$ be a literal that evaluates to true, and let us root the $j$-th clause gadget $P_5$ at the vertex named $c_{j,q_j}$.
    Next, we root the variable gadget at $x_i$ if $x_i$ is true, otherwise root it at $\neg x_i$.

    We do not root any $C_j$ gadget at either $g_{j,3}$ nor $g_{j,2}$ so 0-edges within the $C_j$ gadget do not have both endpoints adjacent to the root.
    It remains to show correctness for the edges between clause and variable gadgets.
    Towards a contradiction, suppose an edge from $\alpha(c_{j,q})$ to $x_i$ witnesses that the rooting is not a solution, then we must have rooted $C_j$ in $c_{j,q}$ and the variable gadget in $\neg x_i$.
    However such a 0-edge was created because $c_{j,q}=x_i$ and the rooting was based on true valuations so we have that both $x_i$ and $\neg x_i$ would need to be true, a contradiction.

    To show the other direction we recall that each 1 tree needs to be rooted.
    Clause gadgets can be rooted only at $c_{j,1}$, $c_{j,2}$, or $c_{j,3}$ and variable gadgets can be rooted at either of their vertices.
    By the argument above we see that any rooting cannot choose roots that would represent a non-satisfying assignment.
    It is then straight-forward to transform the rooting into a satisfying assignment using vertex labels of the variable gadgets.

    \noindent\textbf{Planarity.}
    Assume each clause has literals initially ordered left-to-right according to the monotone rectilinear representation.
    As the first step, we swap first two literals in each clause so that clause $(x \lor y \lor z)$ is now $(y \lor x \lor z)$.
    Then we place clause and literal gadgets on positions of their horizontal segments from the monotone rectilinear representation.
    Each 0-edge that models literals follows closely the vertical connections of $\mathcal I$, resulting in a planar instance of \EFXO, see \Cref{fig:rectilinear_3SAT_efxo}.
    
    \begin{figure}[ht]
        \centering
        \hfill
        \begin{subfigure}{0.48\textwidth}
            \centering
            \includegraphics[page=1,scale=1.1]{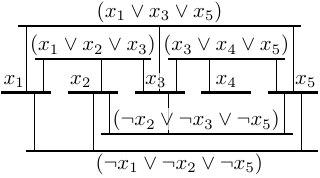}
            \subcaption{\textsc{Planar Monotone 3-SAT} consists of non-crossing orthogonal segments; each clause is monotone.}%
            \label{fig:rectilinear_3SAT_sat}
        \end{subfigure}
        \hfill
        \begin{subfigure}{0.48\textwidth}
            \centering
            \includegraphics[page=2,scale=1.1]{rectilinear_3SAT.pdf}
            \subcaption{The reduced {\EFXO} instance. Note that labels on left two vertices of every clause gadget $P_5$ are swapped to accommodate planarity.}%
            \label{fig:rectilinear_3SAT_efxo}
        \end{subfigure}
        \hfill
        \caption{
            A figure depicting the reduction from \PMSAT to planar 0/1 \EFXO.
            The reduced instance is created essentially by a 1-to-1 mapping from the given instance of {\sc Planar Monotone 3-SAT} so that planarity is maintained.
        }
        \label{fig:rectilinear_3SAT}
    \end{figure}
    
    Note that 0-edges within each clause gadget are easy to draw last and 0-edges connected to a single clause do not cross as $\gamma(C_{j,1})$, $\gamma(C_{j,2})$, and $\gamma(C_{j,3})$ are ordered left-to-right due to the swap of the first two clause literals.
\end{proof}

In order to prove \Cref{thm:3satnpc} one simply replaces vertices with 0-degree more than 1 using \Cref{lem:efxosmallnpc} while preserving planarity, details for this proof can be found in the appendix.

\subsection{Tractability on $P_5$-free instances}
We now complement the hardness in \Cref{thm:3satnpc} by showing that excluding $P_5$'s is sufficient to obtain tractable instances. This implies that the $P_5$s created in the hardness reduction are not just an artefact of our methodology, but are fundamental to the algorithm complexity of our problem.

\begin{observation}\label{obs:dominated_states}
    Consider a preprocessed instance $(\mathcal{I},w)$ of {\EFXO}, let $T_1,\dots, T_t$ be the trees that comprise the connected components of $G_1$.
    For each $i \in [t]$ if we have two vertices $u,v \in T_i$ such that $N_1(u) \subset N_1(v)$ then if there is an EFX-orientation that roots $T_i$ at $v$ then there exists an EFX-orientation with the same edge orientations with one exception -- it roots $T_i$ at $u$ instead.
\end{observation}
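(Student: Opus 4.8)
The plan is to reduce everything to the rooting characterization of \Cref{prop:structure} together with the fact that subsets of $G_0$-independent sets are $G_0$-independent. First I would pin down the local structure forced by $N_1(u)\subset N_1(v)$. Since $T_i$ is a tree and $u\neq v$, the vertex $u$ must be a leaf of $T_i$: if $u$ had two distinct $1$-neighbors $a,b$, then $a,b\in N_1(v)$ would create the $4$-cycle $u$--$a$--$v$--$b$--$u$ in $T_i$, contradicting acyclicity. Writing $w$ for the unique $1$-neighbor of $u$, the inclusion gives $w\in N_1(v)$, so $v$ is also adjacent to $w$, and hence the unique $u$--$v$ path in $T_i$ is $u$--$w$--$v$.

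\textbf{Main argument.} Since the hypothesized EFX-orientation \emph{roots} $T_i$ at $v$, it is in particular a nice EFX-orientation, so by \Cref{prop:structure} there are roots $r_1,\dots,r_t$ with $r_i=v$ such that $R=\bigcup_{j\in[t]}N_1(r_j)$ is an independent set in $G_0$. I would then keep every root unchanged except replace $r_i=v$ by $r_i'=u$, and set $R'=\bigcup_{j\in[t]}N_1(r_j')$. Because $N_1(u)\subseteq N_1(v)$, we get $R'\subseteq R$, and any subset of an independent set in $G_0$ is independent in $G_0$; hence $R'$ is independent in $G_0$. Applying the ``if'' direction of \Cref{prop:structure} to this new rooting produces a nice EFX-orientation rooting $T_i$ at $u$.

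\textbf{Matching the ``same orientation'' claim.} To see that the new orientation can be taken to coincide with the old one except at $T_i$'s root, orient each $1$-edge of every $T_j$ away from $r_j'$: for $j\neq i$ nothing changes, and for $T_i$ the orientation of a $1$-edge $e$ differs from before exactly when $u$ and $v$ lie on opposite sides of $T_i-e$, i.e.\ precisely for the two edges $uw$ and $wv$ of the $u$--$v$ path. For the $0$-edges, reuse the old orientation: by \cref{obs:orient_neighbor_zeros_away} of \Cref{obs:niceEFXObservations}, every $0$-edge incident to $R$ was already oriented away from its endpoint in $R$, so in particular every $0$-edge incident to $R'\subseteq R$ is oriented away from its endpoint in $R'$ — exactly the constraint needed for the new rooting to extend to a nice EFX-orientation, with all remaining $0$-edges left as they were.

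\textbf{Expected obstacle.} The one point needing care is the last step: justifying that the old $0$-edge orientation is a legitimate choice for the new rooting (this is where the monotonicity $R'\subseteq R$ feeds into \Cref{obs:niceEFXObservations} and the constructive content behind \Cref{prop:structure}), and, relatedly, making explicit which edges flip so that the phrase ``same edge orientations with one exception'' is warranted — the exception being the root of $T_i$, which forces only the two $1$-edges $uw,wv$ to reverse. The purely ``global'' part (independence of $R'$) is immediate once the leaf structure is established.
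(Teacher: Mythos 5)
Your proposal is correct and uses the same core argument as the paper: replace $v$ by $u$ as the root of $T_i$, observe that the new union of root-neighborhoods $R'$ satisfies $R'\subseteq R$ by $N_1(u)\subset N_1(v)$, hence remains $G_0$-independent, and invoke \Cref{prop:structure}. The additional observations you make --- that $u$ must be a leaf at distance $2$ from $v$, and the explicit accounting of exactly which two $1$-edges flip and why the old $0$-edge orientation can be reused via \Cref{obs:niceEFXObservations} --- are correct elaborations that go beyond the paper's terse proof, which stops once the independence of $R'$ is established.
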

\begin{proof}
    Due to \Cref{prop:structure} to solve the preprocessed instance we aim to root all trees $T_1,\dots,T_t$ in $r_1,\dots,r_t$ in such a way that $R=\bigcup_{i \in [t]} N_1(r_i)$ is an independent set in $G_0$, suppose we have such \EFXO.
    Consider $u,r_j \in T_j$ with $N_1(u) \subset N_1(r_j)$, then rooting $T_j$ in $u$ instead of $r_j$ makes $\{u\} \cup (\bigcup_{i \in [t]} N_1(r_i)) \setminus \{r_j\} \subseteq R$, hence, it is still an independent set and the rooting implies an \EFXO.
\end{proof}

In such a case, we say that rooting $T_i$ at $u$ \emph{dominates} rooting $T_i$ at $v$, we call $u$ \emph{dominating} and $v$ \emph{dominated}.
If $N_1(u) = N_1(v)$, then we say that $u$ dominates $v$ if and only if $u$ is before $v$ on the input (arbitrary ordering is sufficient).
Let \emph{number of states} for a tree $T_i$ be defined as the number of its vertices that are not dominated, i.e., number of meaningfully distinct roots.

\begin{definition}\label{obs:core}
    A tree is called a \emph{core} when none of its vertices is adjacent to two leaves.
    A \emph{Core} of a tree is a maximal subtree that is a core.
\end{definition}

By \Cref{obs:dominated_states} two leaves in tree $T_i \in G_1$ are interchangeable within a rooting.
As we see next, one can create an equivalent instance with the extra leaves removed.
We then show (also in the next section) that number of vertices within the \EFXO instance cores is closely tied to its complexity.

\begin{lemma}\label{lem:core_procedure}\app
    There is a linear-time procedure that takes an \EFXO instance $\mathcal I=(G,w)$ as input and produces an \EFXO instance $\mathcal I'=(G',w')$, where each tree in $G'_1$ on at least 4 vertices is a core and moreover, ${\mathcal I}$ has an \EFX~ orientation  if and  only if ${\mathcal I}'$ has an EFX orientation.
\end{lemma}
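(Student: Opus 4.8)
The plan is to design a single local reduction rule that deletes ``redundant'' leaves hanging off a common parent in $G_1$, apply it exhaustively, and argue correctness through \Cref{prop:structure}. Throughout I would first apply \Cref{lem:simple_reductions} in linear time, so that the input may be assumed preprocessed; then $G_1$ is a forest and, by \Cref{prop:structure}, it suffices to produce an instance whose family of \emph{good rootings} -- rootings $\{r_i\}$ of the trees of $G_1$ with $\bigcup_i N_1(r_i)$ independent in $G_0$ -- is nonempty exactly when the original's is.

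\emph{The reduction.} As long as some tree $T$ of $G'_1$ on at least $4$ vertices has a vertex $p$ adjacent to leaves $\ell_1,\dots,\ell_k$ of $T$ with $k\ge 2$: delete $\ell_2,\dots,\ell_k$; for every $0$-edge that had a deleted leaf as one endpoint and a vertex $x\notin\{\ell_1,\dots,\ell_k\}$ as the other, add the $0$-edge $\ell_1 x$, so $\ell_1$ inherits the $0$-neighbourhoods of all the $\ell_j$; and if some deleted leaf had a $0$-edge to another leaf of $p$, additionally attach a fresh $G_1$-component $\{y_1,y_2\}$ with a $1$-edge $y_1 y_2$ and $0$-edges $\ell_1 y_1,\ell_1 y_2$. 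The design is dictated by \Cref{prop:structure}: when $T$ is rooted at $p$ all of $\ell_1,\dots,\ell_k$ enter $R$, so after the deletion the surviving leaf $\ell_1$ must carry the union of their $0$-neighbourhoods for the constraint imposed by ``root at $p$'' to be unchanged; and a $0$-edge between two leaves of $p$ is precisely the constraint ``$T$ is not rooted at $p$'', which $\{y_1,y_2\}$ reproduces, since in any good rooting one of $y_1,y_2$ is the root of that component, the other lands in $R$ and is $0$-adjacent to $\ell_1$, forcing $\ell_1\notin R$, while the only $1$-neighbour of $\ell_1$ is $p$.

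\emph{Correctness and running time.} I would prove the equivalence via \Cref{prop:structure} by transporting good rootings in both directions with a case split on where the affected tree is rooted. A root at a deleted leaf $\ell_j$ is moved to $\ell_1$ -- both have $1$-neighbourhood $\{p\}$, which is the interchangeability guaranteed by \Cref{obs:dominated_states}. A root at $p$ is kept at $p$; by the inheritance property the constraint on $R$ is literally the same on both sides, and the fresh component is present exactly when ``root at $p$'' is infeasible on both sides. A root elsewhere is kept, and then $p$ is not a root, so no leaf of $p$ lies in $R$ and the redirected $0$-edges and the fresh component are inert. In every case the transported set $R$ is independent on the other side, so \Cref{prop:structure} gives the equivalence; reconstructing an orientation for the original instance from one for the reduced instance is linear. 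For the overall running time, computing the components and sizes of $G_1$, locating leaves and offending vertices, deleting leaves, and redirecting each $0$-edge once are all linear-time operations.

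\emph{The core property in one pass, and the main obstacle.} A deleted leaf is $1$-adjacent only to $p$, so the rule changes no other vertex's $G_1$-degree and creates no new $G_1$-leaf adjacent to any vertex other than $p$; it can turn $p$ itself into a leaf only when $T=K_{1,k}$, in which case $T$ collapses to a $P_2$ and is henceforth exempt (fewer than $4$ vertices), and the $0$-edge redirection and the fresh $2$-vertex components never touch $G_1$ inside any surviving tree. Hence no vertex in a $\ge 4$-vertex tree ever becomes newly adjacent to two leaves, so a single pass that handles every vertex adjacent to two leaves in a $\ge 4$-vertex tree of $G'_1$ already makes every $\ge 4$-vertex tree of $G'_1$ a core in the sense of \Cref{obs:core}. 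The main obstacle -- and the reason the naive ``keep one leaf and drop the rest'' rule is wrong -- is exactly the bookkeeping above: rooting a tree at an internal vertex $p$ with several leaves can be the only feasible choice (it places $N_1(p)$, not a singleton, into $R$), so the reduction must faithfully carry over both the $0$-neighbourhoods of all the dropped leaves and the ``forbid rooting at $p$'' effect of a $0$-edge between two of them, and checking that this is done correctly in both directions of the equivalence is where the real work lies.
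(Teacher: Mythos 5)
Your proposal is correct, and it takes a genuinely different route from the paper's. The paper processes each offending vertex $p$ by repeatedly \emph{merging} (contracting) a pair of leaves $u,v$ of $p$ into one, and when a $0$-edge $uv$ exists it does not introduce any new vertices: it simply re-routes that $0$-edge to $uq$, where $q$ is a third $G_1$-neighbour of $p$ (which must exist since $T$ has at least $4$ vertices). The observation that makes this work is that $u$ and $q$ can both land in the root-neighbourhood set $R$ only when $p$ is the root, so $uq$ is a faithful proxy for $uv$ in encoding the constraint ``$p$ cannot be a root''. Your approach instead keeps one leaf $\ell_1$, deletes the rest outright, lets $\ell_1$ \emph{inherit} the deleted leaves' $0$-neighbourhoods, and synthesizes the ``forbid $p$ as root'' constraint with a fresh two-vertex $G_1$-component hanging $0$-edges onto $\ell_1$. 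Both are linear-time and both transport good rootings in each direction along the same case split (root at a deleted leaf vs.\ at $p$ vs.\ elsewhere). The paper's version is more parsimonious -- it never adds vertices, and naturally terminates stars at $P_3$ rather than $P_2$ -- whereas your gadget-based encoding is slightly heavier but makes the semantic unit ``$p$ is forbidden as a root'' explicit, which is arguably more transparent and is the same design pattern used elsewhere in the paper (e.g., in \Cref{lem:whard_options}). Your single-pass argument for why exhaustive application terminates (deletions never create new multi-leaf parents, and $p$ can only become a leaf when $T$ is a star) is sound and matches the paper's implicit linearity argument.
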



We assume that for a given instance, we have reduced the instance through applications of \Cref{lem:simple_reductions} and \Cref{lem:core_procedure}. So in linear time, we have obtained an equivalent instance in which every component in $G_1$ is a tree with at least two vertices. The latter, also a linear-time procedure, yields an equivalent instance where every tree in $G_1$ with at least four vertices is a core. 

\tractablepfive*

\begin{proof}

    We proceed with a proof that each component of $G_1$ has at most two non-dominated rootings -- we call these \emph{states} in this proof.
    Then we model these two states using a variable and we model each edge of $G_0$ using a clause with 2 literals, ultimately showing a reduction to 2-SAT, which is solvable in linear time.
    The details can be found in the appendix.
\end{proof}

\section{Parameterized Analysis} \label{sec:parameterizations}

In this section, we further explore the boundaries of tractability of {\EFXO} by conducting a parameterized analysis of the complexity of the problem.

\subsection{Parameterization by Treewidth}\label{sec:treewidth}

Treewidth is a structural parameter that provides a way of expressing the resemblance of a graph to a forest. Essentially, it can be seen as a measure of how similar a graph is to a tree; trees have treewidth $1$, while the complete $n$-vertex graph has treewidth $n-1$. A formal definition of treewidth will not be necessary to obtain our results. We will make use of a classic model-checking result of Courcelle \cite{Courcelle90,ArnborgLS91} which says that if a problem is expressible in a specific fragment of logic, then it can be solved in linear time on graphs of constant treewidth. To use this result, we show that {\EFXO} is indeed one such problem.
An brief primer to the notation used in the following theorem can be found in the appendix.

\begin{proposition}[Courcelle's Theorem~\cite{Courcelle90,ArnborgLS91}]\label{fact:MSO} 
Consider a fixed MSO formula $\Phi(x_1,\dots,x_\ell, X_1,\dots, X_q)$ with free individual variables $x_1,\dots,x_\ell$ and free set variables $X_1,\dots,X_q$, and let $w$ be a
    constant. Then there is a linear-time algorithm that, given a labeled
    graph $G$ of treewidth at most $w$, either outputs  $z_1,\ldots, z_\ell\in V(G)\cup E(G)$ and $S_1, \dots, S_q	\subseteq V(G)\cup E(G)$ such that $G \models \Phi(z_1,\ldots, z_\ell, S_1, \dots, S_q)$ or correctly identifies that no such elements $z_1,\ldots, z_\ell$ and sets $S_1, \dots, S_q$ exist.
\end{proposition}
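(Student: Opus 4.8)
The plan is to reduce the problem to computations with finite bottom-up tree automata, following the classical route through tree decompositions and the algebra of boundaried graphs. First I would invoke Bodlaender's linear-time algorithm for constant treewidth to compute, from the input labeled graph $G$, a tree decomposition of width at most $w$, and then parse $G$ from this decomposition into a term $t$ over a fixed finite ranked alphabet $F_w$. The operations of $F_w$ are the usual boundaried-graph operations --- introduce a vertex into a boundary of size at most $w+1$, forget a boundary vertex, add an edge between two boundary vertices, and glue two boundaried graphs --- and the vertex and edge labels of $G$ are carried as part of the introduce operations, so $F_w$ stays finite. The term $t$ has size linear in $|V(G)|+|E(G)|$ and evaluates to a graph isomorphic to $G$.

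Next I would fold a candidate assignment of the free variables into the alphabet. An assignment of elements $z_1,\dots,z_\ell \in V(G)\cup E(G)$ and sets $S_1,\dots,S_q \subseteq V(G)\cup E(G)$ is recorded by enriching $F_w$ to $F_w \times \{0,1\}^{\ell+q}$: when the operation introducing a particular vertex or edge fires, its $\{0,1\}^{\ell+q}$-component marks whether that element equals some $z_j$ (with $x_j$ treated as a singleton set, additionally constrained to be a singleton) or belongs to $S_i$. Pairs consisting of a term and an assignment then correspond bijectively to terms $t'$ over the enriched alphabet. The key classical ingredient is that for every MSO sentence $\Psi$ over labeled graphs there is an MSO sentence $\widehat\Psi$ over terms with $\mathrm{val}(t') \models \Psi$ if and only if $t' \models \widehat\Psi$; this is the Feferman--Vaught-style composition lemma for boundaried graphs, equivalently the statement that the graph is MSO-interpretable in the term and MSO is closed under MSO-interpretations. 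Applying this to $\Phi$, with its free variables absorbed into the extra labels, produces $\widehat\Phi$.

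I would then compile $\widehat\Phi$, via the Doner--Thatcher--Wright theorem characterizing MSO-definable tree languages as the regular ones, into a deterministic bottom-up tree automaton $\mathcal A$ over the enriched alphabet whose size depends only on $\Phi$ and $w$ and is therefore constant. Projecting away the $\{0,1\}^{\ell+q}$-component gives a nondeterministic tree automaton over $F_w$, which I determinize to a still constant-size automaton $\mathcal B$ that recognizes exactly the terms admitting some satisfying assignment. Running $\mathcal B$ bottom-up on $t$ takes linear time and decides the existential question; if $\mathcal B$ rejects, I report that no witnesses exist. If $\mathcal B$ accepts, I run a second, top-down pass: pick an accepting state at the root and, descending the term, pick at each node a state of $\mathcal A$ together with the $\{0,1\}^{\ell+q}$-label that is consistent with an accepting run of $\mathcal A$ on the subterm below --- such choices exist precisely because $\mathcal B$ accepted, and each is made in constant time from the precomputed transition tables. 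The chosen bits at the vertex-introduce and edge-introduce operations read off $z_1,\dots,z_\ell$ and $S_1,\dots,S_q$ as the required subsets of $V(G)\cup E(G)$. Both passes are linear, so the overall algorithm runs in linear time.

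The step I expect to be the main obstacle is the translation in the second paragraph: setting up the term algebra so that it generates exactly the graphs of treewidth at most $w$, and proving that MSO properties of $\mathrm{val}(t')$ correspond to MSO (equivalently, regular) properties of $t'$. Care is needed because the free variables range over $V(G)\cup E(G)$, so one must work in the incidence encoding --- that is, $\mathrm{MSO}_2$, with set quantification over edges as well as vertices --- which is exactly why $F_w$ needs explicit edge-creation operations and why the composition argument has to track which term node currently owns each live boundary vertex or edge. Everything downstream --- Bodlaender's algorithm, the Doner--Thatcher--Wright theorem, projection and determinization of tree automata, and the two-pass run --- is standard and contributes only the constant-size automata and the linear time bound.
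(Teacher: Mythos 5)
The paper does not prove this statement at all: it is imported as a black-box citation of Courcelle's Theorem (in the form due to Arnborg, Lagergren and Seese that includes witness construction), and the only supporting material in the paper is a primer on MSO syntax in the appendix. Your proposal is therefore not comparable to a proof in the paper, but it is a faithful sketch of the standard proof from the cited literature: Bodlaender's linear-time tree decomposition, parsing into a term over a finite algebra of boundaried-graph operations, absorbing the free variables into an enriched alphabet, the composition/backwards-translation lemma, Doner--Thatcher--Wright, projection and determinization, and a second top-down pass to extract the witnesses. You also correctly identify the one point that genuinely requires care for this particular statement of the theorem, namely that the variables range over $V(G)\cup E(G)$, so the whole construction must be carried out over the incidence structure ($\mathrm{MSO}_2$) with explicit edge-creation operations in the term algebra --- which is exactly the setting the paper needs, since its formula quantifies over edge sets and labels. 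I see no gap; the sketch is at the level of detail one would expect for reproving a classical result, and every step you defer to is a genuinely standard ingredient.
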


\fulltreewidth*

\begin{proof}
As already argued, we may assume that we have obtained a preprocessed {\EFXO} instance in linear time. For ease of presentation, we use $(G,w)$ to refer to the preprocessed instance as well. 
By \Cref{cor:alwaysnice}, it is then sufficient to compute a nice EFX-orientation, which in turn is equivalent to checking whether there is a set $X$ of vertices -- one from each connected component of $G_1$ such that
$R = \bigcup_{x\in X} N_1(x)$ is an independent set in $G_0$. This is equivalent to checking whether the following statement holds on the input:

\begin{quote}{\em There is a set $X$ of vertices such that every connected component of $G_1$ contains at least one vertex from $X$ and at most one vertex from $X$ and for every pair of vertices $x,y\in X$ (not necessarily distinct) and vertices $z\in N_1(x)$ and $w\in N_1(y)$, there is no 0-edge $zw$. }
\end{quote}

It is easy to see that this statement can be expressed in MSO logic and the rest would follow by invoking \Cref{fact:MSO}. The preprocessed instance $(G,w)$ can be treated as a graph $G$ that has one of two possible labels on each edge (0 or 1) depending on the valuation given by  $w$. This is how we are able to incorporate into our final MSO formula, smaller subformulas that refer specifically to properties of either $G_0$ or $G_1$ alone, for instance the 0-neighborhood of a vertex.  The low level details of the overall formula are tedious but straightforward, so we omit them here. \end{proof}

\subsection{Parameterization by number of long induced paths in $G_1$}\label{ss:cores}

In the previous sections we have seen that $G_1$ containing arbitrarily many copies of $P_5$ leads to NP-hardness, whereas if $G_1$ is $P_5$-free (after preprocessing) then the instance is solvable in polynomial time. This leads to a natural question:

{\it What precisely constitutes a ``complex'' component (like a $P_5$) and how precisely does their number influence tractability of \EFXO?}

First, we process all trees according to \Cref{lem:core_procedure} to create their cores.
Let us partition trees of $G_1$ according to their cores into four groups based on a threshold $\tau$ as follows.
\begin{enumerate}
    \item $\smallcore$: Core of diameter at most 3;
    \item $\taucore$: Core of size at least 4 but at most $\tau$; and 
    \item $\bigcore$: Core of size at least $\tau+1$.
\end{enumerate}

We devise a general algorithm (see \Cref{lem:coreLemma} below) whose running time depends on the number of cores within the various families of cores. Note that the threshold $\tau$ that defines the size of the cores in $\taucore$ is ``user defined'' and is instantiated when this algorithm is run.

\begin{lemma}\label{lem:coreLemma}
    For every $\tau\in {\mathbb N}$, an \EFXO instance with $|\taucore| \le k$ and $|\bigcore| \le b$ is solvable in time $\tau^k \cdot n^{\mathcal O(b)}$.
\end{lemma}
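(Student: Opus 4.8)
The plan is to design a branching algorithm over the roots of the trees in $G_1$, exploiting \Cref{prop:structure}: a preprocessed instance is a yes-instance if and only if we can pick one root per tree so that the union $R$ of the $1$-neighborhoods of the roots is independent in $G_0$. After applying \Cref{lem:core_procedure}, every tree on at least four vertices is a core, and by \Cref{obs:dominated_states} it suffices to consider only \emph{non-dominated} roots (``states''). The running time will be a product of three contributions, one per family of cores.

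First I would handle the $\smallcore$ trees. A core of diameter at most $3$ has at most two non-leaf vertices, so after coring and discarding dominated roots it has a bounded number of meaningful states — in fact, I claim it behaves like a $2$-state gadget, exactly as in the proof of \Cref{thm:tractable_on_pfive_free}. I would therefore treat each $\smallcore$ tree (and each $P_2$/$P_3$-type component) via the $2$-SAT modeling from that proof: one Boolean variable per such component, one $2$-clause per relevant $0$-edge. Crucially, this must be done \emph{after} the choices for the larger cores are fixed, so I would first branch on the larger cores and then call the $2$-SAT solver on the residual instance.

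Second, for the $\taucore$ trees: each such core has size at most $\tau$, hence at most $\tau$ vertices and therefore at most $\tau$ candidate roots. I would branch exhaustively: try every combination of roots for the $\le k$ trees in $\taucore$, giving $\tau^k$ branches. Third, for the $\bigcore$ trees (there are $\le b$ of them): I do not bound their size, but I can bound their \emph{number of states}. Here is the key structural point — I expect this to be the main obstacle: I must argue that a core (a tree in which no vertex is adjacent to two leaves) has only polynomially many non-dominated roots, say $n^{\mathcal{O}(1)}$. Since domination is by inclusion of $1$-neighborhoods and the $1$-neighborhood of a vertex in a tree is just its tree-neighborhood, this amounts to showing that in a core there are at most polynomially many vertices whose neighborhood is inclusion-maximal among all vertex-neighborhoods; a clean bound (e.g.\ $O(n)$ or even a constant-per-core after coring) should follow from the fact that cores have no two leaves at a common vertex, so internal structure is constrained. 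Granting a polynomial bound $n^{c}$ on states per big core, branching over all $\le b$ big cores costs $n^{\mathcal{O}(b)}$.

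Putting it together: branch over all root choices for the $\taucore$ trees ($\tau^k$ options) and for the $\bigcore$ trees ($n^{\mathcal{O}(b)}$ options); for each of the $\tau^k \cdot n^{\mathcal{O}(b)}$ combined choices, the $1$-neighborhoods of the chosen roots are fixed, so we can check in linear time that they induce no conflicting $0$-edge among themselves, mark the $0$-edges they force, and then decide the remaining $\smallcore$/small components by the linear-time $2$-SAT reduction of \Cref{thm:tractable_on_pfive_free} (accounting for the already-forced $0$-edges as unit clauses). If any branch succeeds, \Cref{prop:structure} yields an EFX-orientation; if none does, the instance is a no-instance. The total running time is $\tau^k \cdot n^{\mathcal{O}(b)} \cdot (\text{poly} + \text{linear 2-SAT}) = \tau^k \cdot n^{\mathcal{O}(b)}$, as claimed. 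The step I would be most careful about is the polynomial state bound for big cores and the correct interaction between the branching phase and the $2$-SAT phase (ensuring forced $0$-edges are propagated as constraints rather than recomputed).
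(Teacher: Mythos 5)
Your proposal matches the paper's proof in all essentials: branch over root choices for $\taucore$ trees ($\tau^k$ options), branch over root choices for $\bigcore$ trees, and for each combined branch propagate the forced orientations of $0$-edges (per \Cref{obs:niceEFXObservations}) and then decide the residual $\smallcore$-only instance via the linear-time procedure behind \Cref{thm:tractable_on_pfive_free}. That is exactly the argument the paper gives.

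The one place you over-complicate things is the $\bigcore$ branching. You flag, as ``the main obstacle,'' the need to prove that a core has only polynomially many non-dominated roots, and you sketch an argument via inclusion-maximal neighborhoods. This is unnecessary: the target running time is $\tau^k \cdot n^{\mathcal O(b)}$, and each $\bigcore$ tree trivially has at most $n$ vertices, hence at most $n$ candidate roots. With $|\bigcore|\le b$ trees this gives $n^b$ branches even if you never discard a dominated root, which already fits the claimed bound. The paper uses precisely this trivial bound and does not invoke any state-counting for big cores. So the ``obstacle'' you were bracing for does not exist, and the rest of your argument — including your care to feed the already-forced $0$-edges into the $2$-SAT phase as constraints rather than recomputing them — is correct and mirrors the paper.
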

\begin{proof}
    Take each tree in $\taucore$ and try all at most $\tau$ root possibilities in $\tau^k \cdot n^{\mathcal O(1)}$ time, and then for each tree in $\bigcore$ in $n^b \cdot n^{\mathcal O(1)}$ time.
    When we have fixed roots in all these trees we can orient all 0-edges that neighbor a root away from that component as per \Cref{obs:niceEFXObservations} (1), and every other 0-edge towards the component as their endpoints in the component do not neighbor the root. 
    The remaining instance contains only $\smallcore$ which by \Cref{thm:tractable_on_pfive_free} is solvable in $\mathcal O(n+m)$ time.
    In total, this takes $\tau^k \cdot n^{\mathcal O(b)}$ time.
\end{proof}

An obvious question is whether more sophisticated algorithmic techniques could be used to improve the running time of these algorithms. In the next subsection, we show that one cannot do much better than these algorithms.

\subsection{Parameterized hardness: The case of many \sf{big\mbox{-}cores}.}\label{ss:bigcores}

In this part of the paper, by exploring the structural aspects of tree cores, we are able to describe the structure on $\bigcore$, \Cref{lem:whard_options}, which allows us to identify a class of instances that cannot have a better algorithm than the ones presented in the previous subsection. These structures allow us to build hardness gadgets that formalize this insight, \Cref{thm:wone_complex_components}. Specifically, we can rule out (under some complexity-theoretic assumptions) algorithms with running time $f(k)\cdot n^{\O(1)}$, where $k$ denotes the number of $\bigcore$ trees in $G_1$.

We will see later, in \Cref{lem:bijection}, that this structure, namely {\it leafed split orientation}, is closely tied to the notion of \textit{Maximum Induced Matching} (MIM).
To present the reduction we first need several auxiliary notions.
To begin, in \Cref{lem:mim_size_relation}, we establish a relation that ties the size of a tree core to size of its maximum induced matching. Note that proving the statement for tree cores is crucial as many leaves attached to the same vertex can increase size of a tree arbitrarily while not increasing its maximum induced matching. We recall that \emph{eccentricity} of a vertex is its largest distance to all other vertices in the graph. 

\begin{lemma}\label{lem:mim_size_relation}\app
    A tree core with maximum induced matching equal to $m \ge 2$ has size between $2m+1$ and $5m-1$.
\end{lemma}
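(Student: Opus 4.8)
\textbf{Proof proposal for \Cref{lem:mim_size_relation}.}

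The plan is to prove the two bounds separately, using the structural constraint that a core is a tree in which no vertex is adjacent to two leaves (equivalently, every vertex has at most one leaf neighbor). The key technical object is the relationship between a maximum induced matching (MIM) of a tree and the tree's ``backbone''.

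For the \emph{lower bound} $|V(T)| \ge 2m+1$: let $M = \{a_1b_1, \dots, a_mb_m\}$ be a maximum induced matching of size $m$ in the core $T$. These $2m$ vertices are distinct, so $|V(T)| \ge 2m$. I would rule out $|V(T)| = 2m$ by an easy argument: if every vertex of $T$ is matched by $M$, then $T$ is a tree on $2m$ vertices with $2m-1$ edges, but an induced matching uses only $m$ of those edges and the remaining $m-1$ edges each join two $M$-endpoints (since there are no unmatched vertices), contradicting that $M$ is \emph{induced} (any such extra edge would have both endpoints among the matched vertices, which is allowed for induced matchings only if... ) — actually the cleaner route is: in a tree on $2m$ vertices, take any leaf $\ell$; it is matched, say $\ell a_i \in M$; since $T$ is a core, $a_i$ has no other leaf neighbor, so $a_i$ has degree $\ge 2$, and one traces a path to build a larger induced matching or derives a parity/counting contradiction. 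I expect a short case analysis to close $|V(T)| = 2m$; if it resists, one can instead argue via a smallest counterexample that any tree core with a perfect induced matching must have $m \le 1$, contradicting $m \ge 2$.

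For the \emph{upper bound} $|V(T)| \le 5m-1$: the idea is that a large tree core forces a large induced matching. I would build the induced matching greedily along a BFS/DFS layering rooted at an arbitrary vertex, or better, root $T$ at a vertex and repeatedly pick a deepest vertex $v$, add the edge from $v$ to its parent to $M$, then delete $v$, its parent, its siblings, and its grandparent (the vertices within distance $2$ of the chosen edge) and recurse. Each round removes at most $5$ vertices (parent, grandparent, $v$, plus at most... here one uses the core property: a vertex is adjacent to at most one leaf, so the number of leaf-siblings of $v$ is at most... ) — the core condition is exactly what caps the ``waste'' per matched edge at $5$ vertices. More precisely, when we pick the edge $uv$ with $v$ deepest and $u$ its parent, the vertices we must exclude to keep $M$ induced are: $v$, $u$, the parent $p$ of $u$, and the other children of $u$; because $T$ is a core and $v$ is a deepest vertex, all other children of $u$ are leaves, so there is at most one of them. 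That gives at most $4$ excluded vertices per added edge, plus we may need to account for $p$'s incidence; tightening the bookkeeping to reach the stated $5m-1$ (rather than $4m$ or $5m$) is where I'd need to be careful about the final round and about whether $p$ is reused. The claimed constant $5m-1$ suggests the extremal tree is a ``spider''/caterpillar where each matched edge is a pendant $P_2$ hanging off a long path, spaced so consecutive matched edges are distance $2$ apart along the backbone, contributing $5$ vertices each with a $-1$ correction at the end.

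The \textbf{main obstacle} I anticipate is the upper bound, specifically getting the constant exactly right: a naive greedy argument gives $|V(T)| \le 5m$ or so, and squeezing out the last $-1$ requires either a careful analysis of the last deleted round or an extremal-structure argument identifying precisely the tightest core (which I'd guess is a path with $m$ pendant edges attached at every third internal vertex). A clean way to organize this is: (i) show every tree core $T$ has a spanning structure consisting of a ``central path'' plus pendant $P_2$'s and single pendant leaves, controlled by the no-two-leaves condition; (ii) observe that each pendant $P_2$ or each ``sufficiently separated'' backbone edge contributes an independent contribution to the MIM; (iii) count vertices against matched edges with the $-1$ boundary correction. The lower bound and the $2m+1$ refinement should be comparatively routine once the core definition is unpacked.
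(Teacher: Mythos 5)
Your plan is in the same spirit as the paper's proof — a connectivity/counting argument for the lower bound and a recursive "peel a deepest leaf edge" argument for the upper bound — and you correctly identify both extremal families (subdivided stars for the lower bound; a path with pendant leaves attached at every third internal vertex for the upper bound). The lower bound argument you start with is actually correct and needs no "cleaner route": if $|V(T)| = 2m$, all vertices are matched, the tree has $2m-1$ edges of which $m-1\ge 1$ are non-matching, and any such edge joins two distinct matching edges, contradicting the induced-matching property. You should simply commit to this.

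The genuine gap is in the upper bound, and it is where you flag your own uncertainty. Your accounting of "delete $v$, its parent, its siblings, and its grandparent" is off in two ways. First, it overcounts: since $v$ is a deepest vertex, any sibling of $v$ would be another leaf adjacent to $v$'s parent, which the core property forbids, so $v$ has no siblings; only $v$, its parent $u$, and grandparent $w$ need to be removed to keep the recursion clean. That is only $3$ per step and would give a bound of roughly $3m$, which is too strong and contradicts your own extremal example. What the paper instead tracks are two cascade effects that your plan misses: (i) after removing $u,v,w$, the remaining forest need not be a core, and re-running the core-reduction procedure may remove one more leaf (the paper shows at most one is removed, using the most-eccentric choice of $v$), and (ii) removing $w$ can disconnect the tree and strand singleton components, which contribute one extra vertex each (again bounded at one per step by the eccentricity argument). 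Those two extra vertices push the per-step cost to $5$; the $-1$ then comes from the base case (a core on at most $4$ vertices still carries one matching edge). Without identifying and bounding the cascade effects — and the eccentricity argument that limits each to one — the peeling argument does not yield the stated $5m-1$.
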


The next result allows us to bound the size of the maximum induced matching in a tree core in terms of two maximum induced matchings that saturate the leaves of tree; and will be used in \Cref{lem:max_leaf_so} to bound the size of a tree core that contains the structure of interest. 

\begin{lemma}\label{lem:leafedmimisniceandfast}\app
   Let $T$ be a tree core. Suppose $L_1$ and $L_2$ {partition} the set $L$ of leaves in $T$, where for each $i \in [2]$, and for each pair of distinct leaves $u,v \in L_i$ the distance between $u$ and $v$ is at least 4.
   Then, $T$ has maximum induced matchings $R_1$ and $R_2$ (called \emph{leafed} matchings) that saturate $L_1$ and $L_2$, respectively.  Moreover, $|R_1| + |R_2| \ge |M|$,  where $M$ is a maximum induced matching of $T$; and there is an algorithm that finds $R_1$ and $R_2$ in $\mathcal O(|T|)$ time.
\end{lemma}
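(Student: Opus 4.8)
The plan is to build $R_1$ and $R_2$ simultaneously by a single bottom-up (leaves-to-root) pass over $T$, after rooting $T$ at an arbitrary non-leaf vertex; this gives the linear running time. First I would handle the saturation requirement: since leaves in $L_i$ are pairwise at distance at least $4$, each $P_3$ formed by a leaf $u\in L_i$, its unique neighbor, and the parent edge is ``locally free'' of other $L_i$-leaves. So I put the leaf edge of every $u\in L_i$ into $R_i$; the distance-$\ge 4$ condition guarantees that two such leaf edges cannot conflict (they are not adjacent and have no edge between their endpoints). That shows each $R_i$ can be taken to saturate $L_i$; extending it to a \emph{maximum} induced matching among those that saturate $L_i$ is then a standard dynamic program on the tree, where at each vertex we store the best induced matching of the subtree under the three states ``this vertex is matched downward / matched to its parent / unmatched,'' subject to the forced leaf edges. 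This runs in $\mathcal O(|T|)$ time, and doing it for $i=1$ and $i=2$ keeps it linear.

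The crux is the inequality $|R_1|+|R_2|\ge |M|$. The idea is to take a fixed maximum induced matching $M$ of $T$ and show it can be split (or charged) across the two saturating matchings. Every edge $e\in M$ lies at some position relative to the leaves; because $T$ is a \emph{core}, no vertex is adjacent to two leaves, so each ``pendant'' region of $T$ looks like a path of length at least $2$ hanging off the core's interior, and an edge of $M$ near such a region is either the leaf edge itself or the edge one step in. I would argue that we can orient each edge of $M$ toward ``the $L_1$ side'' or ``the $L_2$ side'': concretely, root $T$ and for each $e\in M$ look at whether the subtree below $e$ contains an $L_1$-leaf whose leaf edge is not already ``used up'' — assign $e$ to $R_1$ in that case and to $R_2$ otherwise. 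Then show that the edges of $M$ assigned to side $i$, together with the forced leaf edges of $L_i$, still form an induced matching (no two are adjacent and no edge joins them — this is exactly where the distance-$\ge 4$ hypothesis is used, to keep a newly added leaf edge of $L_i$ away from the $M$-edges charged to side $i$), hence $|R_i|$ is at least that many edges. Summing over $i$ recovers every edge of $M$ at least once, giving $|R_1|+|R_2|\ge|M|$.

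I expect the main obstacle to be precisely this charging argument: making the case analysis clean about \emph{which} edge of a pendant path an optimal $M$ uses, and verifying that after we are forced to include all $L_i$ leaf edges the remaining $M$-edges charged to side $i$ do not become adjacent to those leaf edges. The core property (no vertex adjacent to two leaves) and the distance-$\ge 4$ separation are the two levers that make this work, so the write-up will isolate a small lemma: ``in a tree core, if $F\subseteq L_i$ are leaves pairwise at distance $\ge 4$ and $N$ is an induced matching none of whose edges is a leaf edge of a vertex in $F$ at distance $<2$, then $N\cup\{\text{leaf edges of }F\}$ is an induced matching.'' With that in hand, both the construction of the leafed matchings and the inequality follow, and the algorithmic claim is the tree DP described above.
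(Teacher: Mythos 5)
Your construction of $R_1,R_2$ (force the leaf edges of $L_i$ into $R_i$, which is legal because leaves in $L_i$ are pairwise at distance $\ge 4$, and then extend with a rooted tree DP with the three states ``unmatched / matched to parent / matched to a child'') is the same as the paper's, and that part is fine. The gap is in the inequality $|R_1|+|R_2|\ge|M|$.

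You propose to assign each edge $e\in M$ to side $1$ or side $2$ (by whether the subtree under $e$ contains an ``unused'' $L_1$-leaf) and then argue that the side-$i$ edges of $M$, \emph{together with all the leaf edges of $L_i$}, form an induced matching, so that $|R_i|$ dominates that count. But that combined set need not be an induced matching, and the distance-$\ge 4$ hypothesis cannot save it: that hypothesis bounds distances \emph{between two leaves in the same $L_i$}, not between an $M$-edge and an $L_i$-leaf. A single $M$-edge $uv$ can easily lie at distance $0$, $1$, or $2$ from an $L_1$-leaf $\ell$ (e.g.\ $v$ is the neighbor of the parent of $\ell$), in which case $uv$ and the forced leaf edge $\ell\ell'$ violate the induced-matching condition no matter which side you assigned $uv$ to. So the isolated sub-lemma you want to rely on has a hypothesis (``$N$ is far from the leaf edges of $F$'') that your assignment rule does not actually guarantee, and the step ``summing over $i$ recovers every edge of $M$'' fails for precisely those $M$-edges that sit too close to leaves on \emph{both} sides.

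The paper's route around this is different in exactly the problematic spot: instead of partitioning $M$, it sets $M_i := (M \setminus Q_i) \cup B_i$, where $B_i$ is the set of leaf edges of $L_i$ and $Q_i$ is the set of $M$-edges that conflict with $B_i$, so $M_i$ is an induced matching \emph{by construction} (conflicting $M$-edges are discarded, not reassigned). The inequality $|M_1|+|M_2|\ge|M|$ is then proved by an injective charging: every $M$-edge that survives in some $M_i$ is charged to its own copy, and every $M$-edge $uv\in Q_1\cap Q_2$ is charged to a nearby leaf edge $\ell_i\ell_i'\in B_i$ for one of the two sides; the core property (no vertex has two leaf children) is what guarantees a usable side exists, and the induced-matching property of $M$ is what rules out two $M$-edges being charged to the same leaf edge. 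If you want to salvage your write-up, replace your ``assign-and-union'' step with this discard-and-charge argument; the rest of your plan (forced leaf edges, three-state tree DP, linear time) matches the paper.
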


We are now ready to define the structure that leads us to the intractable instances.

\subsubsection{Split orientation: The structure of interest}
We define a \emph{split orientation} of an undirected graph $G$ to be a set $M$ that consists of ordered pairs of adjacent vertices of $G$.
That is, for each $(u,v) \in M$ we have $uv \in E(G)$; and for each pair of arcs $(u,v),(x,y) \in M$, that are not necessarily vertex disjoint, we have the condition that $uy \not\in E(G)$ and $xv \not\in E(G)$.
Note that these constraints imply that no two arcs in $M$ can share their heads or tails\footnote{For an arc $a=(u,v)$ we call $u$ the head of $a$ and $v$ the tail.}; that is, we cannot have both arcs $(u,v)$ and $(u,y)$ (or $(u,v)$ and $(x,v)$) in $M$.

We will see later that a split orientation of a tree  describes the number of ``sufficiently distinct'' rootings.
Towards that, we will first establish how to determine split orientations via its close relation to maximum induced matchings in a suitable defined product graph. 

We define the {\it direct product of two graphs} $G_1$ and $G_2$, denoted by $G_1\times G_2$, as follows: The vertex set $V(G_1 \times G_2) = V(G_1) \times V(G_2)$ and a pair of vertices $(a_1, b_1) , (a_2, b_2)$ is an edge in $G_1\times G_2$ if $a_1a_2 \in E(G_1)$ and $b_1b_2\in E(G_2)$.

The next result establishes the relationship between split orientations and induced matchings in a suitably defined product graph. 

\begin{lemma}\label{lem:bijection}\app
    For a given graph $G$, the family of its possible split orientations is bijective with the family of possible induced matchings of $G \times e$, where $e$ is a single edge.
\end{lemma}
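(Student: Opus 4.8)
\textbf{Proof plan for \Cref{lem:bijection}.}
The plan is to exhibit an explicit bijection and verify it preserves the two structures in both directions. Fix a graph $G$ and let $e = ab$ be the single edge of the two-vertex graph $e$, so that $V(G \times e) = V(G) \times \{a,b\}$ and $(u,a)(v,b)$ is an edge of $G \times e$ exactly when $uv \in E(G)$ (note that $(u,a)(v,a)$ is never an edge since $a$ is not adjacent to $a$ in $e$, and likewise for the $b$-copy). Thus $G \times e$ is precisely the ``tensor/bipartite double cover restricted to one edge'' — its edge set is in natural bijection with \emph{ordered} pairs $(u,v)$ with $uv \in E(G)$, via the map sending the arc $(u,v)$ to the edge $(u,a)(v,b)$. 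I would take this edge-level correspondence as the candidate bijection between split orientations of $G$ and matchings of $G \times e$, and the bulk of the proof is checking that it restricts to a bijection between split orientations and \emph{induced} matchings.

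The key steps, in order, are: (1) Set up the edge-level bijection $\Psi$ above and observe it is a bijection from arcs of $G$ to edges of $G \times e$. (2) Show that a set $M$ of arcs is a valid \emph{matching} in $G \times e$ iff no two arcs of $M$ share a head and no two share a tail: two edges $(u,a)(v,b)$ and $(x,a)(y,b)$ share a vertex iff $(u,a)=(x,a)$ (i.e.\ $u = x$, shared heads) or $(v,b)=(y,b)$ (i.e.\ $v = y$, shared tails); the head of one cannot coincide with the tail of another since heads live in the $a$-copy and tails in the $b$-copy. This matches exactly the footnote remark that a split orientation forbids shared heads and shared tails. (3) Show the matching $\Psi(M)$ is \emph{induced} iff $M$ additionally satisfies the split-orientation cross-conditions $uy \notin E(G)$ and $xv \notin E(G)$ for all $(u,v),(x,y) \in M$. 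Here the point is that an edge of $G \times e$ ``between'' the two matching edges $(u,a)(v,b)$ and $(x,a)(y,b)$ must join an $a$-vertex to a $b$-vertex; the two such candidate edges are $(u,a)(y,b)$, present iff $uy \in E(G)$, and $(x,a)(v,b)$, present iff $xv \in E(G)$. So $\Psi(M)$ has no chord iff both are absent, which is exactly the split-orientation condition. (4) Conclude that $\Psi$ restricts to a bijection between the family of split orientations of $G$ and the family of induced matchings of $G \times e$, and note in passing that $\Psi^{-1}$ is equally explicit, so this is effective.

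The step I expect to need the most care is (3), the induced-matching direction — specifically making sure the case analysis of ``which edges of $G \times e$ can have both endpoints in $V(\Psi(M))$'' is exhaustive. One must check that an edge with both endpoints among $\{(u,a),(v,b),(x,a),(y,b)\}$ cannot be the degenerate case where it lies inside a single arc's image (that's just $\Psi$ of that arc, which is in $M$ and is fine), and that edges joining two $a$-copies or two $b$-copies simply don't exist in $G \times e$; after that only the two cross-edges remain, and one should be careful that the conditions must hold for \emph{all} ordered pairs $(u,v),(x,y)\in M$ including when they are equal or overlap in one coordinate, so that loops/degenerate overlaps are handled — but these degenerate instances reduce to trivially-true adjacency checks. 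The rest is routine.
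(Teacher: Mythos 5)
Your proposal is correct and uses essentially the same idea as the paper: exploit the canonical bijection between arcs of $G$ and edges of $G \times e$ (the paper calls it $f \colon E(G\times e)\to A(G)$, you call its inverse $\Psi$), and verify that the no-cross-edge condition on the tensor-product side corresponds precisely to the split-orientation conditions on the arc side. Your extra step of separating ``matching'' (no shared heads/tails) from ``induced'' (the cross-conditions) is a slightly cleaner decomposition, but since the cross-conditions already subsume the head/tail constraint, this is a presentational refinement rather than a different route.
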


Next, we show how to employ split orientation to create a ``choice gadget'' (for the hardness reduction), but we need to handle one set of vertices separately, the leaves.
Notice that one can prevent a vertex being a root of a tree in two different ways.
Either connect its neighbor to a vertex that is guaranteed to be a neighbor of a root, or connect two of its neighbors with a 0-edge by \Cref{obs:niceEFXObservations} (2). 
We may approach leaves of a tree in two different ways, either we prevent them from being roots by the former method (as the latter needs two distinct neighbors), which also forbids all their second neighbors, or we force them all to be possible roots -- we do the latter.

\begin{figure}[ht]
    \centering
    \hfill
    \begin{subfigure}{0.48\textwidth}
        \centering
        \includegraphics[page=1,scale=1.1]{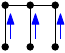}
        \subcaption{the maximum leafed split orientation}%
        \label{fig:t112a}
    \end{subfigure}
    \hfill
    \begin{subfigure}{0.48\textwidth}
        \centering
        \includegraphics[page=2,scale=1.1]{t112.pdf}
        \subcaption{the maximum split orientation}%
        \label{fig:t112b}
    \end{subfigure}
    \hfill
    \caption{
        A graph where its maximal leafed split orientation is smaller than its maximum split orientation.
        Split orientations are shown as arrows.
    }%
    \label{fig:t112}
\end{figure}

The idea behind split orientation is to represent roots as the arc tails and their ``private'' neighborhoods as arc head.
To implement the above approach to tackling leaves we aim to ensure that the split orientation contains for each leaf an arc that has its tail in the leaf. That perspective leads us to a ``leafed"-version of a split orientation: 
    A \emph{leafed split orientation} is a split orientation $A$ of $T$ such that $\{(u,v) \mid u,v \in V(T), uv \in E(T), \text{$u$ is a leaf of\/ $T$}\} \subseteq A$.

We note that this may not yield a split orientation of maximum cardinality, as shown in \Cref{fig:t112}.
Any tree that is not a core has no leafed split orientation because including both leaves as tails of arcs makes the two arcs break the split orientation property.
On the other hand, we can now show that any core tree has a leafed split orientation, and it can found in linear time.

\begin{lemma}\label{lem:max_leaf_so}
    A core with maximum leafed split orientation of size $k \ge 3$ has at least $k+1$ and at most $5k-1$ vertices.
    For a fixed core $T$ we can find its maximum leafed split orientation in linear time.
\end{lemma}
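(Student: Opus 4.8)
\textbf{Proof plan for \Cref{lem:max_leaf_so}.}
The plan is to prove the size bounds by relating a maximum leafed split orientation of a core $T$ to induced matchings of $T$ that saturate the leaves, and then to obtain the algorithm from \Cref{lem:leafedmimisniceandfast}. First I would observe, as already noted in the text, that since $T$ is a core no leafed split orientation exists unless $T$ is a core, and for a core we must exhibit one; so the existence part and the algorithmic part should come hand-in-hand from \Cref{lem:leafedmimisniceandfast}. Concretely, the leaves of $T$ can be two-colored by a partition $L_1, L_2$ so that leaves of the same color are pairwise at distance $\ge 4$ — this is possible because in a tree two leaves at distance $\le 3$ share (essentially) a common support structure, so a greedy/parity argument on the leaf-set bounds the conflict graph's structure enough to 2-color it. Then \Cref{lem:leafedmimisniceandfast} hands us leafed induced matchings $R_1, R_2$ saturating $L_1, L_2$ respectively, computable in $\mathcal O(|T|)$ time, with $|R_1| + |R_2| \ge |M|$ for a maximum induced matching $M$. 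Orienting the edges of $R_1 \cup R_2$ appropriately (each matching edge incident to a leaf oriented with tail at the leaf, the rest oriented consistently) should yield a leafed split orientation, and via the bijection of \Cref{lem:bijection} between split orientations of $T$ and induced matchings of $T \times e$, together with \Cref{lem:mim_size_relation}, we can transfer the quantitative bounds.

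For the upper bound $5k-1$: a leafed split orientation $A$ of size $k$ corresponds (restricting to one copy in the product, or directly) to an induced matching of $T$ of size $k$ that saturates all leaves, hence to an induced matching of size exactly $k$. By \Cref{lem:mim_size_relation} a tree core with maximum induced matching $m$ has size at most $5m-1$; since $T$ has an induced matching of size $k$ its maximum induced matching is $\ge k$, but we need the size bound in terms of $k$ itself, so I would instead argue that $T$'s maximum induced matching is \emph{exactly} $k$ here is not automatic — rather, I would bound $|T|$ directly: the matching $R_1 \cup R_2$ (deduplicated) saturates every leaf, and a core tree in which every leaf is saturated by an induced matching of total size $k$ cannot have more than $5k-1$ vertices, by the same edge-counting argument underlying \Cref{lem:mim_size_relation} (each matching edge, being induced and at distance $\ge 2$ from the next, can "account for" at most $5$ vertices of the core: its two endpoints plus at most three vertices on the paths linking it to the rest, with the $-1$ coming from double-counting at a central vertex). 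For the lower bound $k+1$: a core with a leafed split orientation of size $k$ has at least $k$ disjoint ordered pairs with distinct heads and distinct tails, so at least... here I'd note the $k$ arcs have $k$ distinct tails and $k$ distinct heads but heads and tails may overlap across arcs; however the split-orientation non-adjacency constraints force the underlying $2k$ vertex-slots to span at least $k+1$ distinct vertices (they cannot all collapse onto $k$ vertices without creating a forbidden adjacency or a repeated head/tail), and a short case check on a core tree — which by definition has no vertex adjacent to two leaves — pushes this to $\ge k+1$; combining with the fact that $k \ge 3$ rules out the small degenerate trees.

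For the algorithmic claim, the plan is: run the linear-time procedure of \Cref{lem:leafedmimisniceandfast} to get $R_1$ and $R_2$; take $R = R_1 \cup R_2$; convert $R$ into a leafed split orientation by orienting leaf-incident edges with tail at the leaf and propagating a consistent orientation to the rest (any induced matching that saturates all leaves admits such an orientation, since the non-adjacency of distinct matching edges is exactly the split-orientation constraint once directions are fixed). One then argues this $R$ (or the orientation derived from it) is in fact of \emph{maximum} cardinality among leafed split orientations: by \Cref{lem:bijection} and \Cref{lem:leafedmimisniceandfast}'s inequality $|R_1| + |R_2| \ge |M|$, any leafed split orientation corresponds to an induced matching saturating the leaves, which can be partitioned across $L_1, L_2$ and is therefore no larger than $|R_1| + |R_2|$.

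The main obstacle I expect is making the "orientation of an induced matching $\leftrightarrow$ split orientation" correspondence fully rigorous in the leafed setting — in particular showing that every \emph{maximum} leafed split orientation really does come from an induced matching of $T$ saturating the leaves (the $uy \notin E$, $xv \notin E$ constraints must be shown to force the arcs, when all heads/tails are distinct, to be exactly an induced matching; when they are not all distinct one has to argue a maximum leafed split orientation can be taken with distinct heads and tails, i.e. to be a genuine matching, without loss of generality). The numeric bounds themselves reduce cleanly to \Cref{lem:mim_size_relation} once that correspondence is pinned down, and the linear-time guarantee is inherited from \Cref{lem:leafedmimisniceandfast}; the bulk of the remaining work is the bijection bookkeeping and the small-core parity/case analysis for the $k+1$ lower bound.
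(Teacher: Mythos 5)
Your proposal correctly identifies the relevant lemmas (\Cref{lem:bijection}, \Cref{lem:leafedmimisniceandfast}, \Cref{lem:mim_size_relation}) and aims them in the right general direction, but it misses the structural fact that makes the paper's argument actually go through, and as a result several of your intermediate claims are false or unsubstantiated.

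The key observation in the paper is that since the core $T$ is bipartite, the product $T \times e$ is the \emph{disjoint union} of two copies $T_0, T_1$ of $T$. Under the bijection $f$ of \Cref{lem:bijection}, an induced matching of $T \times e$ therefore decomposes into an induced matching of $T_0$ plus an induced matching of $T_1$, and the two pieces map back to \emph{two} (generally overlapping) induced matchings of $T$. Crucially, a split orientation does \emph{not} project to a single induced matching of $T$: the arcs $(u,v)$ and $(v,u)$ can coexist in a split orientation (neither the $uy\notin E$ nor $xv\notin E$ condition is violated), and more generally the heads and tails of different arcs may land on a common vertex. So your statement that a leafed split orientation of size $k$ ``corresponds\ldots to an induced matching of $T$ of size $k$ that saturates all leaves'' is simply false, and the ``orient $R_1 \cup R_2$ consistently'' step you propose as the construction is undefined --- $R_1 \cup R_2$ need not be an induced matching of $T$, and there is no single induced matching of $T$ to orient. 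This is exactly the obstacle you flag at the end as ``the main obstacle I expect,'' and without the product-graph decomposition you do not have the tools to resolve it.

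The second gap is the partition $L_1 \uplus L_2$ of the leaves. You describe it as a greedy/parity two-coloring chosen to make same-class leaves pairwise at distance $\ge 4$. In the paper this partition is not a free design choice: the leafed requirement forces each leaf arc $(\ell, v)$ into the split orientation, and under $f^{-1}$ the corresponding edge $\ell_0 v_1$ lands in $T_0$ or $T_1$ depending on the parity of $d_r(\ell)$. So $L_1$ and $L_2$ are exactly the even- and odd-distance leaves from a fixed root, and the ``distance $\ge 4$'' property then follows because two same-parity leaves are at even distance and distance $2$ would give a vertex with two leaf neighbors, contradicting that $T$ is a core. This parity-forced partition is also what guarantees that $R_1$ and $R_2$ can be taken \emph{independently} in the two disjoint copies, which is what \Cref{lem:leafedmimisniceandfast} needs and what makes the whole construction yield a genuine split orientation. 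A greedy coloring would not interface correctly with the bijection. Your lower-bound sketch ($k$ arcs with distinct heads and distinct tails force $\ge k+1$ vertices for $k \ge 3$) is in the same spirit as the paper's argument, and your use of \Cref{lem:mim_size_relation} for the upper bound is on track once the correspondence is fixed, but as written the central translation step does not hold.
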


\begin{proof} We break the analysis into smaller claims which we will prove separately. We begin with the following.

\begin{claim}\label{clm:core-has-split-orientation}\app
    A core always has some leafed split orientation.
\end{claim}

    To compute $A$ a leafed split orientation of maximum cardinality, we consider the bijection $f \colon E(T \times e) \to A(T)$ between arcs and the edges of $T \times e$ from \Cref{lem:bijection}.
    Recall that the elements of $T \times e$ are denoted by $u_0$ and $u_1$, where $u \in V(T)$. Observe that $T \times e$ consists of two copies $T_0,T_1$ of $T$ if and only if $T$ is bipartite, which is true for our case.
    More precisely, let us fix an arbitrary vertex $r \in V(T)$ and let $d_r(u)$ for $u \in V(T)$ denote parity of the distance from $r$ to $u$.
    We define $T_0 = (T \times e)[\{u_q \mid u \in V(T), q \equiv d_r(u) \mod 2\}]$ and $T_1 = (T \times e)[\{u_q \mid u \in V(T), q \equiv d_r(u) + 1 \mod 2\}]$. That is, for each $u \in V(T)$ we have $u_0 \in T_0$ and $u_1 \in T_1$ if the distance from $r$ to $u$ is even, and $u_0 \in T_1$ and $u_1 \in T_0$ if the distance from $r$ to $u$ is odd. Let $L_1 = \{u \mid u_0 \in T_0\}$ and $L_2 = \{u \mid u_1 \in T_0\}$. We note that $L_1 \uplus L_2$ form a partition of the leaves of $T$. 
    
    \begin{claim}\label{clm:leaves-are-at-distance-4}
    \app 
        Each pair of leaves from the same part $L_1$ or $L_2$ are at a distance at least 4.
    \end{claim}
    
    Now we are ready to bound the size of the maximum leafed split orientation.
    Towards that we invoke \Cref{lem:leafedmimisniceandfast} with the partition $L_1\uplus L_2$ of the leaves of $T$ which computes maximum induced matchings $R_1$ and $R_2$ that saturate $L_1$ and $L_2$, respectively in linear time. For each $i\in [2]$, let $g_i$ be a map from vertices of $T$ to respective vertices of $T_i$. 

    \begin{claim}\label{clm:maximum-leafed-orientation}
    \app
        Function  $f(g_1(R_1) \cup g_2(R_2))$ yields a maximum leafed split orientation of $T$; and when $|A|\ge 3$, we have $|A|+1 \leq |T| \leq 5|A|-1$.
    \end{claim}

    The approach we used to devise $A$ uses simple bijections and a linear procedure of \Cref{lem:leafedmimisniceandfast} so we can find $A$ in linear time. This completes the proof of the lemma.
\end{proof}

We can now describe the structure within a core whose existence leads to intractability. 

\begin{lemma}\label{lem:whard_options}
    A core $T$ with leafed split orientation $A$ of size at least $k$ can have added 0-edges in such a way that a solution to \EFXO has exactly $k$ root vertices $R=r_1,\dots,r_k$ such that no other vertex can be a root and each $r_i$ has at least one neighbor $s_i$ that is not a neighbor of any other root.
\end{lemma}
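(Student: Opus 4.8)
\textbf{Proof plan for \Cref{lem:whard_options}.}
The plan is to use the leafed split orientation $A = \{(r_i,s_i) : i\in[k']\}$ (where $k' \ge k$, and we may simply restrict to $k$ of its arcs if needed) as a blueprint for which vertices we \emph{want} to be eligible roots and which vertices serve as their ``private'' neighbours. The key idea is: the tails $r_1,\dots,r_k$ will be the only vertices that a nice EFX-orientation is allowed to pick as roots of $T$, and the heads $s_1,\dots,s_k$ will be the promised private neighbours. By \Cref{obs:niceEFXObservations}(2), to forbid a vertex $u$ from being a root, it suffices to ensure that some $1$-neighbour of $u$ (i.e.\ some vertex in $N_1(u)$, which here means a $T$-neighbour of $u$) has a $0$-edge forced onto it — concretely, we add a $0$-edge between two distinct $T$-neighbours of $u$, or between a $T$-neighbour of $u$ and a vertex that is guaranteed to have a $0$-item. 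So the construction has two parts: (i) add $0$-edges that kill every non-tail vertex as a candidate root, and (ii) verify that each tail $r_i$ genuinely survives and that $s_i$ is private to $r_i$.

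First I would make precise the "kill" operation. For a non-leaf vertex $u$ of $T$ that is not one of the $r_i$: since $T$ is a core and $u$ is not a leaf, $u$ has at least two neighbours in $T$; pick two of them, say $p,q$, and add the $0$-edge $pq$. Then by \Cref{obs:niceEFXObservations}(2), $T$ cannot be rooted at $u$. For a \emph{leaf} $\ell$ of $T$ that is not some $r_i$: a leaf has only one $T$-neighbour, so we cannot use the two-neighbour trick; instead we use the first method mentioned before \Cref{fig:t112} — attach, via a $0$-edge, the unique neighbour $p$ of $\ell$ to a vertex $z$ that is guaranteed to receive a $0$-item in every nice EFX-orientation (for instance $z$ can be taken to be a private neighbour $s_j$ of some other tail $r_j$, since in a nice EFX-orientation all $0$-edges incident to $s_j$ are oriented away from $s_j$ by \Cref{obs:niceEFXObservations}(1), forcing $z$'s own $0$-edges, including the new one, away from $z$, hence the $0$-edge is oriented toward $p$, so $p$ gets a $0$-item and $\ell$ cannot be a root). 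One must of course check such a $z$ exists, which it does as long as $A$ is non-empty; the leaves that \emph{are} tails of arcs of $A$ remain untouched and stay eligible.

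Next I would check that each $r_i$ indeed remains a valid root and that $s_i$ is private. The defining property of a split orientation is exactly that for arcs $(r_i,s_i),(r_j,s_j) \in A$ we have $r_i s_j \notin E(T)$ and $r_j s_i \notin E(T)$; in particular $s_i \ne s_j$ for $i \ne j$ (no two arcs share a head), and $s_i$ is not adjacent in $T$ to any other tail $r_j$. Hence in the rooting that picks $\{r_1,\dots,r_k\}$, the set $R = \bigcup_i N_1(r_i)$ contains $s_i$, and $s_i \in N_1(r_j)$ only for $j = i$ — so $s_i$ is a neighbour of $r_i$ and of no other root, giving the "private neighbour" claim. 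I would also need to argue that adding all these $0$-edges does not accidentally make the rooting $\{r_1,\dots,r_k\}$ infeasible; for that we pick the two-neighbour $0$-edges carefully so that they avoid the vertices in $\{s_1,\dots,s_k\} \cup \{r_1,\dots,r_k\}$ whenever possible, and when a $0$-edge endpoint must lie in $N_1(r_i)$ we use the split-orientation inequalities to ensure it is not $s_i$ and not a neighbour creating a conflict inside $R$ — essentially, $R$ restricted to the new $0$-edges stays independent. Finally, for the forward direction (every solution picks exactly these roots) one observes that by \Cref{prop:structure} some vertex of $T$ must be rooted, all non-tail vertices have been forbidden, so the root is one of the $r_i$; and since a tree has exactly one root in a nice orientation, exactly one $r_i$ is chosen — but the statement asks that a solution \emph{can} have all $k$ roots realized simultaneously across $k$ disjoint copies of such gadgets in the final hardness construction, which is where \Cref{thm:wone_complex_components} will plug this in; here it suffices that each $r_i$ is individually a feasible root and that the $r_i$'s are the \emph{only} feasible roots.

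\textbf{Main obstacle.} The delicate point is the interaction between the newly added $0$-edges and the independence condition of \Cref{prop:structure}: when we add a $0$-edge $pq$ to kill a vertex $u$, both $p$ and $q$ may themselves lie in $N_1(r_i)$ for some surviving root $r_i$, and if both endpoints of a forced $0$-edge land in $R$ the rooting dies. Handling this requires a careful choice of which pair of $T$-neighbours of each killed vertex to connect, exploiting that the tails $r_i$ form an ``independent-ish'' set under the split-orientation constraints, together with the leaf-handling via an external $0$-item vertex. Establishing that such a consistent choice always exists — i.e.\ that the $0$-edges can be added to kill precisely the non-tails while leaving $\{r_1,\dots,r_k\}$ a legal rooting — is the crux of the proof; the size bounds $k+1 \le |T| \le 5k-1$ from \Cref{lem:max_leaf_so} are what guarantee the gadget is not too large and will be used when counting in \Cref{thm:wone_complex_components}, but they are not themselves the hard part.
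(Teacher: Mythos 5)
Your proposal correctly identifies the two-neighbour $0$-edge as the killing mechanism and correctly uses the split-orientation inequalities for privacy, but it contains a genuine gap in two places.

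\textbf{The leaf case.} You propose restricting $A$ to $k$ of its arcs, which can remove leaves of $T$ from the set of tails, and then you try to kill such a leaf $\ell$ by attaching its unique $T$-neighbour $p$ via a new $0$-edge to a vertex $z = s_j$, arguing that in every nice EFX-orientation the $0$-edges incident to $s_j$ are oriented away from $s_j$. But that is not what \Cref{obs:niceEFXObservations}(1) gives you: it forces $0$-edges away from $s_j$ only \emph{when $r_j$ is the chosen root of $T$}. If instead $\ell$ itself is chosen as root, then $p$ is the root-neighbour, the $0$-edge $ps_j$ is oriented \emph{toward} $s_j$, $p$ receives no $0$-item, and nothing forbids rooting at $\ell$. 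So the external-attachment gadget does not in fact kill $\ell$. The paper sidesteps this entirely by \emph{not} restricting $A$: it takes $R$ to be the tails of \emph{all} arcs of $A$ (so $k = |A|$). Since $A$ is a \emph{leafed} split orientation, by definition every leaf of $T$ is a tail of $A$, so every vertex in $V(T)\setminus R$ is a non-leaf with at least two distinct $T$-neighbours, and the simple two-neighbour $0$-edge suffices for every killed vertex. There are no leaf non-tails, and the entire second half of your construction is unnecessary. You noticed the relevant fact (``the leaves that \emph{are} tails of arcs of $A$ remain untouched and stay eligible'') but did not push it to the conclusion that, with $A$ used in full, \emph{all} leaves are tails.

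\textbf{The ``main obstacle''.} What you flag as the crux — that a $0$-edge $pq$ added to kill $u$ might have both $p$ and $q$ land in $N_1(r_i)$ for some surviving root $r_i$ — cannot occur, and the reason is elementary: $p$ and $q$ are distinct $T$-neighbours of $u$, and if both were also $T$-neighbours of $r_i$ (with $r_i \neq u$) then $u\!-\!p\!-\!r_i\!-\!q\!-\!u$ would be a $4$-cycle in the tree $T$; and if $r_i = p$ then $pq \in E(T)$ would create a triangle $p\!-\!u\!-\!q\!-\!p$. Hence every added $0$-edge has at most one endpoint in $N_1(r_i)$, independence of $N_1(r_i)$ is preserved, and every $r_i$ stays feasible. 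No ``careful choice'' of which two neighbours to connect is needed.
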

\begin{proof}
    Let us set $r_i,s_i$ to be the endpoints of the $i$-th arc of the split orientation, i.e., $(r_i,s_i) \in A$.
    To prevent non-roots from being selected for each $u \in V(T) \setminus R$ add a 0-edge between two of its neighbors.
    As $A$ is a leafed split orientation $R$ covers all the leaves of $T$ so vertices in $V(T) \setminus R$ are non-leaf vertices and the 0-edge can be added between two distinct neighbors.
    Adding such an edge prevents exactly $u$ from being selected as a root.
    Last observe that due to the properties of split orientation we have $r_is_j \not\in E(G)$ for each $i \ne j$, proving that $s_i$ is neighboring only the root $r_i$ for each $i \in [k]$.
\end{proof}

By \emph{editing 0-edges} of $G$ to obtain $G'$ we mean that $G'_1=G_1$ but $G'_0$ may differ from $G_0$ arbitrarily.
A set $\mathcal C$ of \EFXO instances is \emph{closed under editing 0-edges} if for every $C \in \mathcal C$ the set $\mathcal C$ contains every instance $C'$ that can be obtained from $C$ by editing 0-edges.
Consequently, we are now well-positioned to present the hardness result. 

\begin{restatable}{theorem}{complexcomponents}\label{thm:wone_complex_components}\app
    For every $k$, let $\mathcal C_k$ be a set of \EFXO instances that is closed under editing 0-edges and for any $n_0 \in \mathbb N$ contains an instance in $\mathcal C_k$ such that $G_1$ contains $k$ trees with core size at least $5 n_0-1$. Then, 
    \EFXO on instances from $\mathcal C_k$ is 
    $W[1]$-hard 
parameterized by $k$.
\end{restatable}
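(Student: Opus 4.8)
\textbf{Proof plan for \Cref{thm:wone_complex_components}.}
The plan is to give a parameterized reduction from \textsc{Multicolored Independent Set} (MIS), parameterized by the number $k$ of color classes, which is $W[1]$-hard by \cite{cygan2015parameterized}. Given an MIS instance $(G',k,V_1,\dots,V_k)$, I will construct an \EFXO instance that lies in $\mathcal C_k$ and whose yes/no answer matches that of the MIS instance. The key point is that, by the hypothesis on $\mathcal C_k$, for any $n_0$ we may pick an instance $C\in\mathcal C_k$ whose $G_1$ already contains exactly $k$ trees $T_1,\dots,T_k$ with core size at least $5n_0-1$; and since $\mathcal C_k$ is closed under editing 0-edges, we are free to add whatever 0-edges we like (both inside these trees and to new/other vertices) without leaving $\mathcal C_k$. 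So the reduction never needs to build the 1-graph from scratch — it only edits 0-edges of an instance handed to us.

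The main steps, in order. First, choose $n_0$ large enough that $5n_0-1$ exceeds $\max_i |V_i|$; then by \Cref{lem:max_leaf_so} each core $T_i$ of size at least $5n_0-1$ admits a leafed split orientation $A_i$ of size at least $n_0>\max_i|V_i|$ (using the bound $|T|\le 5|A|-1$, i.e. $|A|\ge(|T|+1)/5$). Second, invoke \Cref{lem:whard_options} on each $T_i$ with this $A_i$: we add 0-edges inside $T_i$ so that, in any nice EFX-orientation (WLOG nice by \Cref{cor:alwaysnice}, after preprocessing via \Cref{lem:simple_reductions}), the only possible roots of $T_i$ are $|V_i|$ designated vertices $r^i_1,\dots,r^i_{|V_i|}$ (we only use $|V_i|$ of the at-least-$n_0$ available arcs), and each $r^i_j$ has a private neighbor $s^i_j$ adjacent to no other root. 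We identify ``root $T_i$ at $r^i_j$'' with ``pick the $j$-th vertex of $V_i$''. Third, encode the edges of $G'$: for every edge between the $j$-th vertex of $V_i$ and the $j'$-th vertex of $V_{i'}$, add a 0-edge between the private neighbors $s^i_j$ and $s^{i'}_{j'}$. By \Cref{prop:structure}, a nice EFX-orientation exists iff the trees can be rooted so that the union $R$ of the 1-neighborhoods of the roots is independent in $G_0$; since $s^i_j\in N_1(r^i_j)$ and these private neighbors are pairwise non-adjacent except where we inserted edges, $R$ is independent in $G_0$ exactly when the chosen vertices form an independent set in $G'$ — and the per-tree gadgets force exactly one choice per color class. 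Fourth, verify the reduction is an FPT (indeed polynomial) reduction and that the parameter of the output, the number of $\bigcore$ trees, is still $k$ (we did not touch $G_1$, and the newly added 0-edges create no new 1-components).

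The subtlety I expect to be the main obstacle is the interaction between the inserted 0-edges and the root-restriction gadgets of \Cref{lem:whard_options}: when we add the cross-class 0-edges $s^i_j s^{i'}_{j'}$, we must make sure these do not create a vertex with a 0-edge to two neighbors of a would-be root (which \Cref{obs:niceEFXObservations}(2) would then forbid as a root for the wrong reason), nor otherwise collapse the intended $|V_i|$ root choices. The private-neighbor property from \Cref{lem:whard_options} ($s^i_j$ is adjacent in $G_1$ to $r^i_j$ only) is precisely what I will lean on here: each $s^i_j$ has a single 1-neighbor, so a 0-edge incident to $s^i_j$ can only affect the rootability of $r^i_j$, and it does so exactly in the intended way — it blocks $r^i_j$ precisely when the root at the other endpoint's tree is also chosen adjacently. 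I will also need to double-check the boundary arithmetic of \Cref{lem:max_leaf_so} (the $k\ge 3$ hypothesis and the $|A|\ge 3$ caveat) so that $n_0$ is taken large enough for the lemma to apply cleanly, and confirm that after adding all 0-edges the instance is still preprocessed (no isolated-in-$G_1$ vertices appear and no 1-component gains a cycle), so that \Cref{prop:structure} is directly applicable.
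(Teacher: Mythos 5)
Your overall reduction template — pick an instance of $\mathcal C_k$ with $k$ big cores, run \Cref{lem:core_procedure}, extract leafed split orientations via \Cref{lem:max_leaf_so}, use \Cref{lem:whard_options} to carve out a set of admissible roots with private 1-neighbors, and then encode MIS edges as 0-edges between the private neighbors — matches the paper's proof step for step, and your closing remarks about verifying preprocessing and the $|A|\ge 3$ threshold are reasonable sanity checks. The one place you diverge is in how you reconcile $|V_i|$ with the number of available arcs, and this is exactly where your proposal has a gap.

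You say you invoke \Cref{lem:whard_options} so that ``the only possible roots of $T_i$ are $|V_i|$ designated vertices \dots (we only use $|V_i|$ of the at-least-$n_0$ available arcs).'' But the disabling mechanism inside \Cref{lem:whard_options} is to add, for every vertex $u$ outside the chosen root set $R$, a 0-edge between two distinct \emph{1-neighbors} of $u$; this requires $u$ to be a non-leaf vertex, and the lemma's proof justifies this precisely because a leafed split orientation $A$ has an arc tailed at every leaf, hence taking $R$ to be \emph{all} tails of $A$ covers every leaf. If you only take $|V_i|$ of the arcs and $|V_i|$ is smaller than the number of leaves of the core (which can easily happen: a core of size $5n_0-1$ can have up to $2n_0$ leaves, while $|V_i|\le n_0$), then some leaves fall into $V(T)\setminus R$ and you cannot disable them with the two-neighbor 0-edge trick, nor obviously with the ``connect to a guaranteed root-neighbor'' trick, since no vertex is guaranteed to be a root-neighbor before a rooting is fixed. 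So the subset of arcs you keep may leave spurious roots alive, and \Cref{prop:structure} will admit EFX-orientations that do not correspond to independent sets in $G'$. The paper avoids this by \emph{not} shrinking the root set: it keeps all $|A_i|$ roots and, for each of the $|A_i|-|V_i|$ surplus roots, wires its private neighbor $s_{i,j}$ ($j>|V_i|$) with the same 0-edges as $s_{i,|V_i|}$, so that choosing a surplus root is semantically identical to choosing $r_{i,|V_i|}$. Adopting that ``mimicking'' device would close the gap in your proposal; without it, the ``exactly $|V_i|$ possible roots'' step does not go through.
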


\section{Concluding Remarks}

We conclude by noting that as pointed out in \cite{christodoulou_fair_2023}, major problems in algorithmic game theory, such as the complexity of Nash equilbria \cite{ChenDengTeng09/journal,Daskalakis09/journal} and truthful scheduling (Nisan-Ronen
conjecture)~\cite{NisanRonen/journal,Christodoulou23,Christodoulou22/FOCS} did not lose their characteristic complexity when studied in the restricted graph setting. If anything, the valuable insights from the graphical setting may have helped in finally resolving them in their full generality. We posit in the same vein that using insights and ideas from parameterized graph algorithms can only flesh out the nuances of the graphical setting even further and can only take us closer to the general problem. 

\newpage
\bibliographystyle{ACM-Reference-Format} 
\bibliography{main}


\newpage


\appendix

\section{Appendix}

\subsection*{Additional related work}\label{additional-rw}
The subject of fair division is vast. We point the reader to the surveys \cite{mishra2023fair/survey} and \cite{amanatidis2023fair/survey}. In addition to the papers referred earlier, we can point to a few others that have made significant contributions to the study of EFX.

\paragraph{Related results on graphical EFX allocations.} Amanatidis et al.~\shortcite{ApproxEFX-A/EC24} study approximate graphical EFX allocations showing that a 2/3-\EFX allocation always exists and can be efficiently computed for agents with additive valuation functions in Christodoulou's multi-graph setting. Zhou et al.~\shortcite{10.24963/ijcai.2024/338} show that \EFXp orientation always exists and can be found in polynomial-time, drawing a contrast with \EFX orientation which may not always exist,~\cite{christodoulou_fair_2023}. Moreover, they initiate work on graphical EFX on {\it chores}, also studied by \cite{HsuKing24/EFX-O-chores}; and {\it mixed manna}, which is a generalization of both. 

\paragraph{Beyond \EFX orientation.} The graphical setting has been extended to study EF1 and EF~\cite{MisraSethia/ADT24} as well. Deligkas et al.~\shortcite{Deligkas/EFX-EF1-orientations} studies primarily the question of EF1 orientation for hypergraphs and explores the parameterized complexity of EFX orientation. They also strengthen the hardness result of \cite{christodoulou_fair_2023} by showing that the \EFX orientation problem NP-complete even on graphs with constant-sized vertex cover and additive symmetric valuations. Furthermore, for multi-graphs, the NP-hardness extends to instances with only a constant number of agents, 8. Chandramouleeswaran et al.~\shortcite{Chandramouleeswaran/FairDivisionVariableSetting} also study EF1 allocation in the multi-graph setting in relation to problem of EF1 restoration.

\paragraph{Charity}
Caragiannis et al. \shortcite{Caragiannis19/EFX-Charity} initiated the research of finding desirable EFX allocations
that satisfy certain properties, such as maximizing the Nash social welfare, where some goods remain unallocated, i.e the {\it charity} bundle. There has been several follow-up work that have reduced the number of items in the charity bundle, ~\cite{Chaudhury21a},\cite{Chaudhury21b} \cite{Akrami22a},\cite{Berendsohn/MFCS17}, culminating in Berger et al.~\shortcite{BergerCohenFeldmanFiat22} showing that for four agents, 
 giving at most one good to charity is enough.

\paragraph{Chores}Work on EFX allocation for chores has caught steam. Recently, \cite{Christoforidis24/EFX-chores} showed that an EFX allocation for chores need not exist under general cost functions. They present a construction
with three agents in which no bounded approximation exists. Moreover, deciding 
if an EFX allocation exists for an instance with three agents and superadditive costs is NP-complete.

\paragraph{Intractibility} Goldberg et al.\shortcite{Goldberg/EFX-PLS} study the intractability of EFX with just two agents, and find that computing an EFX allocation for two identical agents with submodular valuations is PLS-hard. Essentially they show that the problem becomes intractable as the valuations become more general. 

\paragraph{Graphical fair division} This is a fast growing topic in which several different fairness criterion, be it comparison based or share-based have been studied: conflict-free fair division~\cite{Chiarelli2022fair}~\cite{Gupta/journal/Budget-cinflict-free}, connected fair division~\cite{deligkas2021parameterized}, compact fair division~\cite{madathil2023fair}, gerrymandering on planar graphs~\cite{dippel2023gerrymandering} are just a few. Graph based EFX allocations is an addition to this body of work.

\subsection*{Proof of \Cref{obs:strong_envy_conditions}}

    Consider an orientation of $G=(V,E)$ and let $\{X_w\mid w\in V\}$ denote the bundles derived from the orientation. 
 
    Let us first suppose that $u$ strongly envies $v$ and argue that the three conditions hold. 
    Suppose for a contradiction that $(u,v) \not\in E(\OG_1)$. Then either $uv \not\in E(G)$ or $uv \in E(G_0)$ or $(v,u) \in E(\OG_1)$. 
    In the former two cases we have $w_u(X_v)=0$ so $u$ cannot strongly envy $v$'s bundle, therefore, $u$ cannot strongly envy $v$ if they do not share a 1-edge.
    In the last case, $w_u(X_u) \ge 1$ and as maximum $w_u(X_v)$ is $1$ we know that $u$ cannot strongly envy $v$'s bundle. This completes the argument for the first condition.     In fact, the same reasoning applies for the second condition as well. Precisely, when agent $u$ gets an item other than $uv$ that they value as 1, it follows that $w_u(X_u) \ge 1$ while the maximum $w_u(X_v)$ is $1$.  Thus, $u$ cannot strongly envy $v$'s bundle.
    For the final condition, if $v$ gets no item other than $uv$, then $u$ cannot strongly envy $v$ since strong envy requires $w_u(X_v \setminus \{g\}) > w_u(X_u)$ for some item $g$, but as $X_v=\{g\}$ we would need $w_u(\emptyset)>0$, which is a contradiction.

    Now, consider the converse direction and suppose that the three conditions hold.
    Due to the second condition, $w_u(X_u) = 0$.
    By the first condition, $uv \in X_v$ and that $w_u(uv) = 1$ and the third condition ensures that there is some item $vw$ distinct from $uv$ such that $vw \in X_v$.
    Together these imply $w_u(X_v \setminus \{vw\}) \geq 1 > 0 = w_u(X_u)$. So, $u$ strongly envies $v$. \qed

\subsection*{Proof of \Cref{lem:simple_reductions}}
    For \cref{it:isolated}, since $G'$ is a subgraph of $G$, an EFX-orientation for $G$ can be used to find an EFX-orientation for $G'$ by simply removing the edges not in $G'$.
    Removing edges cannot cause strong envy through \Cref{obs:strong_envy_conditions} (1) or (3) and since $u$ is isolated in $G_1$, all of these are zero edges so they will not affect the application of \Cref{obs:strong_envy_conditions} (2).
    Conversely, given an EFX-orientation for $G'$ we can obtain an EFX-orientation for $G$ by orienting all edges incident to $u$, towards $u$.
    Since $u$ is isolated in $G_1$, all of these edges are zero edges so there is no strong envy involving $u$ by \Cref{obs:strong_envy_conditions} (1).  Finally every other agent has exactly the same bundle in our new orientation as in the original orientation of $G'$ so there is no other strong envy.

    For \cref{it:component}, the forward direction is similar to \cref{it:isolated}:
    all deleted edges that are incident to vertices in $G'$ are 0-edges, so will not cause strong envy between their endpoints.
    For the backward direction we construct an orientation for the edges in $G$ but not $G'$ by first orienting the 1-edges of $C$ in an (arbitrary) direction around the cycle to create a directed cycle.
    We can now find a spanning tree of $H$ which contains all the edges of $C$ except one and orient the edges of this tree away from $C$.
    Then we orient all remaining unoriented edges in $H$ arbitrarily,  and orient all 0-edges incident to $V(H)$ in $G$ towards the vertices of $H$. A 0-edge between two vertices in $V(H)$ is oriented arbitrarily.
    As all agents $V(H)$ get at least one 1-item they cannot strongly envy any other agents irrespective of the orientation of the rest of $G'$ by \Cref{obs:strong_envy_conditions} (2).
    Since $H$ is a connected component, for every other agent $v$ not in $V(H)$, there is no 1-edge incident to both $v$ and a vertex in $H$.   So there is no strong envy by \Cref{obs:strong_envy_conditions}~\ref{envy:one_edge}. \qed

\subsection*{Missing details of the proof of \Cref{thm:tractable_on_pfive_free}}
    After the above preprocessing each component of $G_1$ is a tree $T_i$ of diameter at most 3.
    This diameter is realized between two leaves, we have three cases.
    If diameter is 1, then the tree is a single edge ($P_2$) which trivially has at most 2 states.
    If diameter is 2, then we have a star.
    As any star on 4 or more vertices gets reduced to a smaller one by \Cref{lem:core_procedure} we know that this star has 3 vertices, i.e. it is a $P_3$.
    Two of its leaves share a parent so one dominates the other, implying that there are at most 2 states.
    If diameter is 3, then we have two non-leaf vertices $u_1$ and $u_2$ that each cannot be connected (aside from each other) to anything else but leaves.
    By \Cref{lem:core_procedure} each must have at most one leaf, so the tree is a $P_4$.
    Let $v_1$ be a leaf of $u_1$ and let $v_2$ be a leaf of $u_2$.
    Note that $N_1(v_1) \subseteq N_1(u_2)$ and $N_1(v_2) \subseteq N_1(u_1)$ so the two leaves dominate the other vertices, and the number of states is at most two.

    Notice that the states devised for the trees have disjoint neighborhoods so we can reduce each tree $T_i$ of $G_1$ to a single edge in the following way.
    For each $i \in [\ell]$ such that $T_i$ has diameter more than 1 we run the following procedure.
    Let $u$ and $v$ be the two vertices witnessing the two states of $T_i$, let $Z_u$ and $Z_v$ be the set of $G_0$ edges with endpoints in $N_1(u)$ and $N_1(v)$, respectively.
    We remove $V(T_i)$ from the graph, then add $u$ and $v$ as new vertices, add a 1-edge $uv$, we take the set $Z_u$ and put them back to the graph while connecting them to $v$ instead of the neighbors $N_1(u)$, similarly, we put back the edges of $Z_v$ and connect them to $u$ instead of $N_1(v)$.
    As the neighborhoods of the new states are the same as for the states in the original tree, we see that the two new states are equivalent to the prior two states.

    By repeating the above procedure we end up with a matching in $G_1$.
    For each edge of $e_1,\dots,e_n \in G_1$ let us denote endpoints of $e_i$ by $x_i$ and $\neg x_i$, this represents where the edge gets rooted.
    Let us denote now the vertices and variables interchangeably, by $\neg u$ we mean the other vertex in the 1-edge that contains $u$.
    Next, for each zero edge $uv = e \in E(G_0)$ create a clause $C_e = (u \lor v)$, this represent the infeasibility to root respective trees in $\neg u$ and $\neg v$ at the same time.
    Conjunction of all these clauses creates the final 2-CNF formula $\phi = \bigwedge_{e \in E(G_0)} C_e$.

    To show the equivalence, consider a satisfying assignment for the 2-CNF formula $\phi$.
    Let us root each tree $G_1$ in the vertex denoted by the variable that resolves to true.
    This roots every tree of $G_1$ and the union of root neighborhoods is independent as the violating 0-edge $uv$ is represented in $\phi$ by $(u \lor v)$, which would not have been satisfied.
    In the other direction, consider a rooting that corresponds to a solution of the \EFXO instance.
    Then setting the variable $x$ to be true if and only if tree of $x$ was rooted in $x$ results in the fomula to be satisfied as every clause $(u \lor v)$ represents an edge $uv \in E(G_0)$ and it being not satisfied would mean we picked $\neg u$ and $\neg v$ which would violate the characterization of a \EFXO solution from \Cref{prop:structure}. 
    Note that this reduction takes linear time and as 2-SAT has a linear-time algorithm, see e.g.~\cite{ASPVALL1979121}, so we can solve the \EFXO instance in linear time.
\qed

\subsection*{Proof of \Cref{cor:alwaysnice}}
    Each tree in $G_1$ has fewer edges equal to the number of vertices minus one so we know that in any orientation at least one agent gets no 1-item.   There could be multiple such agents, so for each tree, pick one of these agents arbitrarily as the root. Then, reorient each tree (if needed) so that all 1-edges are oriented away from the chosen root. For each agent aside from the roots,  this ensures that the number of 1-items they are allocated is precisely one.
    By \Cref{obs:strong_envy_conditions}~\ref{envy:value_one} the only vertices that could experience strong envy after the tree reorientation are the roots.
    However, prior to the reorientation all the 1-neighbors of any root already had received at least one 1-item, and since there was no strong envy earlier, by \Cref{obs:strong_envy_conditions}~\ref{envy:other_item} these neighbors had been allocated no other items.    So, the reorientation could not have introduced strong envy in the roots.  Hence, we may assume that the result is an EFX-orientation that contains exactly one agent without a 1-item in each tree, i.e., it is a nice EFX-orientation. \qed

    \subsection*{Proof of \Cref{lem:corollaryEOCT}}
    Given $e$ such that $G-e$ is a bipartite graph and compute a bipartition $(X,Y)$ of $G-e$. This can be done in linear time by just computing the spanning forest of $G-e$ and taking the unique (up to flipping sides) bipartition of each tree. 
    
    Moreover, notice that $e$ must have both endpoints in $X$ or in $Y$ as otherwise $G$ is already bipartite. So, assume without loss of generality that both endpoints are in $X$. If $w(e)=1$, then set $A:=X$ and $B:=Y$. Otherwise, set $B:=X$ and $A:=Y$. It is straightforward to check that the bipartition $(A,B)$ satisfies the premise of \Cref{thm:Edge-OCT1-general-condition}. 
\qed

\subsection*{Missing details of \Cref{thm:edgeoct2}}

\subsubsection*{Proof of \Cref{lem:reductionindependence}}
    Assume $u_2$ or $u_4$ get no 1-items, say $u_2$ (argument for $u_4$ is identical).
    Then $u_3$ must get at most one item, otherwise $u_2$ would strongly envy $u_3$.
    Hence $u_3 x_{u,2}$ must be oriented away from $u_3$.
    Since the orientation is good, $x_{u,3}$ is assigned a 0-edge so neither $x_{u,2}$ nor $x_{u,4}$ can be roots of $Q_u$.
    Now $x_{u,2}$ got a 0-edge so if $x_{u,1}$ or $x_{u,3}$ were roots they would be strongly envious of $x_{u,2}$.
    Therefore, $Q_u$ must be rooted at $x_{u,5}$.
    Again, $x_{u,4}$ must get only one item, $x_{u,4} v_3$ is oriented away from $x_{u,4}$, so $v_3$ gets a 0-edge.
    Now if $v_2$ or $v_4$ were chosen as roots of their path they would be strongly envious of $v_3$. \qed

\subsubsection*{Proof of \Cref{clm:goodiffmis}}
    We first assume we have a solution $S$ to our instance of MIS.
    For each $i \in [k]$ let $\{v\} = S \cap V_i$ we choose $v_2$ as the root for $P_i$ in our solution to \EFXO.
    Additionally, for each $v \in V(G')$ we choose $x_{v,5}$ as a root for $Q_v$ if $v \in S$ and $x_{v,1}$ if $v \notin S$.
    Since $S$ is multicolored we choose exactly one root from each $P_i$.
    Since $S$ is an independent set for any edge $uv$ the solution does not contain both $u$ and $v$.
    If $u \in S$ we rooted in $x_{u,5}$ but not in $v_2$ and if $v \in S$ we rooted in $v_2$ and in $x_{u,1}$ because $u \notin S$.
    In both cases the edge $x_{u,4} v_3$ will not cause strong envy.

    Conversely, we assume we have a good EFX-orientation of $G$, and let the set of roots be $S'$.
    For each color $i$ we find the vertex $u$ that satisfies $u_2 \in S'$ or $u_4 \in S'$ and add it to our solution $S$ to MIS.
    Every EFX-orientation has at least one root in each component, and in particular in each $P_i$ so $S$ is colorful.
    Suppose that there exists an edge $uv \in E(G')$ such that $u \in S$ and $v \in S$.
    By construction of $S$ this would mean that $x \in \{u_2,u_4\}$ and $y \in \{v_2,v_4\}$ were chosen as roots.
    But this contradicts \Cref{lem:reductionindependence} for the edge $xy$.
    Therefore, $S$ is an independent set. \qed

\subsection*{Proof of \Cref{red:zero_degrees}}
    We will use the characterization of EFX-orientation from \Cref{prop:structure}.
    Suppose we have applied the operation on $u$ in $G$ to obtain $G'$.
    The operation creates several $P_3$s in $G'_1$.
    Let $L$ be the set of leaves in these $P_3$s and let $M$ denote the  set of middle vertices of the $P_3$s.
    Assuming $G$ has an EFX-orientation then either $u$ is not a neighbor of root or all vertices in $N_0(u)$ are not neighbors of roots.
    In the former case, we root all $P_3$s in vertices of $L$ (chosen arbitrarily), in the latter case we root all $P_3$s in $M$.
    This orientation in $G'$ is EFX because the tree $B$ is bipartite with parts $L \cup \{u\}$ and $M \cup N_0(u)$, and as we root in only one of these sets we may not get two neighbors of roots that are adjacent in $G_0$.

    For the converse, suppose we find an EFX-orientation in $G'$, consider one $P_3$ consisting of leaves $\ell_1$, $\ell_2$, and a vertex $v$.
    If $\ell_1$ or $\ell_2$ is a root of the $P_3$ then $v$ is a neighbor of a root.
    Hence, its parent within $B$ cannot be a neighbor of a root, so that $P_3$ must be also rooted in its leaf.
    Continuing this chain of implications we conclude that if in $B$ any $P_3$ is rooted in $L$ then $u'$ is a neighbor of a root and $v$ cannot be a neighbor of a root.
    If any vertex in $N_0(u)$ is a neighbor of a root then the $P_3$ that contains its 0-neighbor must be rooted in $L$, and therefore, $u$ cannot be a neighbor of a root.
    In $G$ we orient all edges from $N_0(u)$ to $u$ in the same direction as the edge from $u'$ to $u$ in $G'$ was oriented to get an EFX-orientation. \qed

\subsection*{Proof of \Cref{thm:3satnpc}}
Given an instance of \PMSAT we apply \Cref{lem:efxosmallnpc} to obtain an instance of {\EFXO} and then apply \Cref{red:zero_degrees} exhaustively.
More precisely, take the result of \Cref{lem:efxosmallnpc} and observe that for each vertex $u \in V(G)$ its 0-edges form a connected subsequence $P$ of the circular order of curves that represent edges incident to $u$ in the planar embedding.
When replacing these 0-edges with a tree $T$ during \Cref{red:zero_degrees} we embed $T$ such that its leaves are ordered according to $P$ so that the result of the reduction is planar.
This introduces intermediate paths $P_3$ on 1-edges and after performing the reduction of 0-degrees on all vertices, $G_0$ becomes a matching. This implies that the resulting graph has maximum degree 3, concluding the proof of \Cref{thm:3satnpc}.

\subsection*{Proof of \Cref{lem:core_procedure}}
Let $u$ and $v$ be a pair of leaves that witness that a tree $T$ in $G_1$ on at least 4 vertices is not a core, i.e., leaves $u$ and $v$ share a neighbor $p$.
Let $q$ be another vertex adjacent to $p$ distinct from $u$ and $v$.
Let \emph{merge} of $u$ and $v$ be the operation performed as follows: if there is a zero edge $uv$ reconnect it to $ur$,  contract $v$ into $u$, and remove multiple edges to obtain a simple graph. 

Observe that applying the merge operation exhaustively to a tree makes it a core. 
The operation removes a leaf of $T$ (in $G_1$) and reconnects some 0-edges.
We can implement the exhaustive application of this procedure by traversing a tree $T$ and for each non-leaf vertex take all its adjacent leaves and merge all of them to a single leaf, choosing $q$ to be the same other vertex for all the merges (for a special case where we have a star we merge all but one leaf which is selected to be $q$, ending up with $P_3$ after the merges).
Observe that  exhaustively performing the merging operation is a linear-time procedure since for each leaf we perform constant number of operations and every 0-edge is reconnected at most once for each of its endpoints.

Let $G$ be the initial graph and $G'$ be graph $G$ after merging $u$ with $v$.
These vertices are in $G_1$ tree, have a common neighbor $p$.
We argue correctness by showing that $G$ and $G'$ are equivalent instances.

Assume $G$ has a rooting $R$.
Let us create $R'$ that is the same as $R$ but if $v \in R$ then we add $u$ to $R'$.
Towards a contradiction, assume $R'$ is not a rooting of $G'$, then there must be a 0-edge that has both endpoints adjacent to roots.
This cannot be the case for edges that are not adjacent to $u$ in $G'$ as those were not altered and would be adjacent to roots in $G'$ as well.
For the 0-edges adjacent to $u$ we have two cases.
Say $uq$ is the violating edge, this means $p$ is the root, but $uq$ either was in $G$ already, or $vq$ was in $G$, or it was created due to a 0-edge $uv$ which also has both endpoints neighboring $p$, a contradiction.
Similarly, for other edges $uw$ where $w\in T$ the edge must have either been in $G$ already or we had $vw$ in $G$, both still contradict $p$ being a root in $G$.
For the last case, say $uw$ is a violating edge where $w$ is in a tree different from $T$, then $G$ must contain either $uw$ or $vw$, both endpoints are still neighboring roots and result in a contradiction.

For the other direction, assume $G'$ has a rooting $R'$.
We keep the same rooting $R=R'$ for $G$ and claim it has no violating edges.
This is simply because all vertices in $V(G')$ have their neighbors through 0-edges in $G$ as a subsets of the neighbors through 0-edges in $G'$.
Hence, any violating edge in $G$ is also a violating edge in $G'$, proving the claim. \qed

\subsection*{A brief primer on MSO and Courcelle's Theorem}
We consider \emph{Monadic Second Order} (MSO) logic on edge- and vertex-labeled
graphs in terms of their incidence structure, whose universe contains vertices and
edges; the incidence between vertices and edges is represented by a
binary relation. We assume an infinite supply of \emph{individual
variables} $x,x_1,x_2,\dots$ and of \emph{set variables}
$X,X_1,X_2,\dots$. The \emph{atomic formulas} are 
$V(x)$ (``$x$ is a vertex''), $E(y)$ (``$y$ is an edge''), $I(x,y)$ (``vertex $x$
is incident with edge $y$''), $x=y$ (equality),
$P_a(x)$ (``vertex or edge $x$ has label $a$''), and $X(x)$ (``vertex or
edge $x$ is an element of set $X$'').  \emph{MSO formulas} are built up
from atomic formulas using the usual Boolean connectives
$(\lnot,\land,\lor,\Rightarrow,\Leftrightarrow)$, quantification over
individual variables ($\forall x$, $\exists x$), and quantification over
set variables ($\forall X$, $\exists X$).

\emph{Free and bound variables} of a formula are defined in the usual way. To indicate that the set of free individual variables of formula $\Phi$
is $\{x_1, \dots, x_\ell\}$ and the set of free set variables of formula $\Phi$
is $\{X_1, \dots, X_q\}$ we write $\Phi(x_1,\ldots, x_\ell, X_1,
\dots, X_q)$. If $G$ is a graph, $z_1,\ldots, z_\ell\in V(G)\cup E(G)$ and $S_1, \dots, S_q
\subseteq V(G)\cup E(G)$ we write $G \models \Phi(z_1,\ldots, z_\ell, S_1, \dots, S_q)$ to denote that
$\Phi$ holds in $G$ if the variables $x_i$ are interpreted by (i.e., set to be equal to) the vertices or edges $z_i$, for $i\in [\ell]$, and the variables $X_i$ are interpreted by the sets
$S_i$, for $i \in [q]$. 

\subsection*{Proof of \Cref{lem:mim_size_relation}}
As the tree contains at least $m$ independent edges which all need to be connected to each other by another vertex its size is at least $2m+1$, this is a tight lower bound because once-subdivided $m$-leaf stars attain this value, see \Cref{fig:matching_examples_lowerbound}.

    On the other hand, consider a graph on at least 5 vertices and let $u$ be one of its most eccentric leaves, i.e the leaf with the highest eccentricity. Let $v$ be the vertex adjacent to $u$, and let $w$ be the vertex other than $u$ adjacent to $v$.
    Note that $w$ cannot be a leaf as we work with a core, and the choice for $w$ is unique because otherwise $u$ would not be one of the most eccentric vertices.
    We claim that there exists a maximum induced matching that contains the $uv$ edge.
    Suppose this is not the case, then if no vertex $u,v,w$ is in the induced matching then we could add $uv$ to the matching, contradicting its maximality.
    Otherwise, we have a matching edge $vw$ or a matching edge that has one endpoint in $w$.
    In either case, we remove that edge from the matching, and add $uv$ instead.
    The result is still an induced matching since there are no other edges adjacent to $uv$ and size of the matching did not change, completing the claim proof.

    Suppose we apply the procedure described above to find a maximal induced matching of the tree as follows.
    Find $u,v,w$ as described, then set $uv$ to be a matching edge and remove vertices $u,v,w$ from the tree.
    We then invoke \Cref{lem:core_procedure} to reduce the remaining trees to cores so that we can use induction.

    For the base case we consider all graphs with at most 5 vertices and conclude:
        1 vertex tree has no matching edge,
        trees with 2, 3, and 4 vertices have exactly one matching edge,
        and 5 vertex tree is only $P_5$ which has 2 induced matching edges.
    Now we aim to count the maximum number of removed vertices due to the $uv$ matching edge.
    Clearly in each step we removed three vertices $u,v,w$, but additional vertices are removed by \Cref{lem:core_procedure} or by becoming base case components.
    Note that the processing can remove only leaves that were created by removing $u,v,w$, hence, vertices of degree 2 that were incident to $w$.
    We claim that the procedure removes at most one leaf.
    Towards a contradiction, assume that the procedure removed two leaves, let us name them $r_1$ and $r_2$.
    Vertex $r_1$ was removed because there is a twin leaf $r'_1$ and both $r_1$ and $r'_1$ have a common neighbor $p_1$.
    Similarly, $r_2$ was removed because there is a twin leaf $r'_2$ and both have a common neighbor $p_2$.
    Note that both $r'_1$ and $r'_2$ are at distance 3 from $w$, hence $u$ could not have been the most eccentric vertex, a contradiction that proves our claim.
    We count this one vertex reduced by the preprocessing towards $uv$.
    For the vertices that became base cases, by the above argument we cannot have two components with vertices that are further from $w$ than $u$.
    Hence, every component of $G \setminus \{u,v,w\}$ except one must contain at most 2 vertices.
    Note that if the component contains 2 vertices then it can have a matching edge, so removal of those vertices can be counted towards their matching edge and not towards $uv$.
    The bad case is when the component contains only 1 vertex; we remove the vertex and count it towards $uv$.

    \begin{figure}[ht]
        \centering
        \hfill
        \begin{subfigure}{0.38\textwidth}
            \centering
            \includegraphics[page=2,scale=1.1]{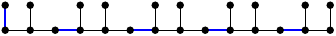}
            \subcaption{Once subdivided star attains the minimum number of vertices for (11) $m=5$.}
            \label{fig:matching_examples_lowerbound}
        \end{subfigure}
        \hfill
        \begin{subfigure}{0.58\textwidth}
            \centering
            \includegraphics[page=1,scale=1.1]{matching_examples.pdf}
            \subcaption{This path with pendant leaves attains the maximum number of vertices (24) for $m=5$.}
            \label{fig:matching_examples_upperbound}
        \end{subfigure}
        \hfill
        \caption{Example graphs that attain extremes with respect to their maximum induced matchings of size $m$.}%
        \label{fig:matching_examples}
    \end{figure}

    As the above procedure removes at most 5 vertices at each step.
    For the base cases we observe that to reach 1 vertex tree we ought to have $P_5$ just before that.
    So for any non-trivial tree the worst base case is the graph on 4 vertices which has a maximum induced matching of size 1.
    All this gives us an upper bound of $5m-1$ on the maximum number of vertices in a tree with maximum induced matching $m$ -- this is tight as the this value is attained on a path $P_{3m+1}$ with vertices $p_1,\dots,p_{3m-1}$ that has $2m$ leaves attached to vertices $p_{3i-2},p_{3i-1}$ for all $i \in [m]$, see \Cref{fig:matching_examples_upperbound}. \qed

    \subsection*{Details of the proof of \Cref{lem:leafedmimisniceandfast}}

    First, observe that if there are $u,v \in L_i$ such that their distance is at most $3$ then any set of edges that contains both $u$ and $v$ cannot be an induced matching as the neighbors of $u$ and $v$ are adjacent to each other.

    On the other hand, if each pair in $L_i$ is at distance at least $4$ then we can take the edges incident to $L$ and add further edges to create a maximum leafed induced matching $R$. 

    We first design a dynamic programming algorithm to compute the leafed matchings and then show the lower bound on its size.
    We root $T$ at an arbitrarily chosen non-leaf

    Let us fix $i \in [2]$.
    We aim to define a table $\D \colon V(T) \times \{\no,\btt,\tpp\} \to \mathbb N \cup \{-\infty\}$ in which we compute the size of maximum leafed induced matchings. More precisely, for the subtree $T_t$ of $T$ rooted at the vertex $t \in V(T)$, the entry $\D(t,j)$ contains the maximum induced matching in $T_t$ that contains all the vertices in $L_i \cap V(T_t)$ and using the following conditions based on the value of $j \in \{\no,\btt,\tpp\}$:
    \begin{itemize}
        \item $\D(t,\no)$ -- Vertex $t$ is not contained in a matching edge.
        \item $\D(t,\btt)$ -- Vertex $t$ is contained in a matching edge as the vertex that is further from the root $r$.
        \item $\D(t,\tpp)$ -- Vertex $t$ is contained in a matching edge as the vertex that is closer to the root $r$.
    \end{itemize}
    The computation of $\D$ differs for leaves and non-leaf vertices.
    \begin{description}
    \item[When $t$ is a leaf:] We set $\D(t,\btt)=1$, $\D(t,\tpp)=-\infty$, and last if $t \in L_i$ we have $\D(t,\no)=-\infty$, otherwise $\D(t,\no)=0$.
    \item[When $t$ is a non-leaf vertex:] The computation of $\D$ follows from the results for its children set $t_1,\dots,t_q$ as follows:
    \begin{equation*}
        \begin{split}
            \D(t,\no)  &= \sum_{j=0}^q \max \{ \D(t_j,\no), \D(t_j,\tpp) \} \\
            \D(t,\btt) &= 1 + \sum_{j=0}^q \D(t_j,\no) \\
            \D(t,\tpp) &= \max_{j=0}^q \big( \D(t_j,\btt)-\D(t_j,\no) \big) + \sum_{j=0}^q \D(t_j,\no)
        \end{split}
    \end{equation*}
    \end{description}
    The size of a maximum leafed induced matching of $T$ is retrieved from the root as $\max_{j \in \{\no,\tpp\}} \D(r,j)$.

    We can prove the correctness formally using a standard inductive argument. We note that if we want to add an edge of a matching we need to first have $\btt$ and its parent $\tpp$, adding 1 to the result and requiring all neighboring vertices to have $\no$; taking maximum ensures we keep the best solution with the given assumptions.
    It is not hard to see that in the above procedure each operation can be computed in the complexity given by the number of children of each node, and hence, using a post-order computation over the tree can be implemented in $\mathcal O(|T|)$.

    The above process results in $R_1$ when $i=1$ and $R_2$ for $i=2$.
    Consider the maximum induced matching $M$ of $T$. 

     \begin{claim}\label{clm:size-of-ind-matching-in-L1-and-L2}
  
        We have $|R_1|+|R_2| \geq |M|$.    
    \end{claim}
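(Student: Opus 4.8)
The plan is to start from the maximum induced matching $M$ and build two induced matchings $M_1,M_2$ of $T$, saturating $L_1$ and $L_2$ respectively, with $|M_1|+|M_2|\ge |M|$; since $R_i$ is an induced matching of maximum size among those that saturate $L_i$, this yields $|R_1|+|R_2|\ge |M_1|+|M_2|\ge |M|$.

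For $u\in L(T)$ write $e_u$ for the unique edge at the leaf $u$ and $p_u$ for its unique neighbour, and for $i\in\{1,2\}$ put $P_i=\{e_u: u\in L_i\}$; since any two leaves of $L_i$ are at distance at least $4$ their neighbours are pairwise non-adjacent, so $P_i$ is an induced matching. Call $f\in M$ \emph{$i$-bad} if $f$ has an endpoint in $N[p_u]$ for some $u\in L_i$, let $D_i\subseteq M$ be the set of $i$-bad edges, and set $M_i=P_i\cup(M\setminus D_i)$. Then $M_i$ is an induced matching: edges of $P_i$ are pairwise compatible; edges of $M\setminus D_i$ are pairwise compatible because $M$ is an induced matching; and any $f\in M\setminus D_i$ is compatible with every $e_u$ ($u\in L_i$) precisely because $f$ is not $i$-bad. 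Moreover $M_i$ saturates $L_i$, and $P_i$ is disjoint from $M\setminus D_i$ (any $e_u\in M$ with $u\in L_i$ is $i$-bad, having the endpoint $u\in N[p_u]$), so $|M_i|=|L_i|+|M|-|D_i|$. Using $L_1\uplus L_2=L(T)$ we obtain $|M_1|+|M_2|=|L(T)|+2|M|-|D_1|-|D_2|$, and since $D_1,D_2\subseteq M$ we have $|D_1|+|D_2|=|D_1\cup D_2|+|D_1\cap D_2|\le |M|+|D_1\cap D_2|$. Hence it suffices to prove $|D_1\cap D_2|\le |L(T)|$.

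For that inequality I would produce an injection $\iota\colon D_1\cap D_2\to L(T)$. Given $f=xy\in D_1\cap D_2$, fix (using $f\in D_1$) a witness $u\in L_1$ with an endpoint of $f$ in $N[p_u]$. If $p_u$ is $M$-matched, then — an $M$-matched vertex being adjacent to no $M$-matched vertex other than its own partner — $f$ must be the $M$-edge incident to $p_u$, and we set $\iota(f)=u$; at most one edge is charged to a given $u$ this way. If $p_u$ is $M$-unmatched, the endpoint of $f$ in $N[p_u]$ is an $M$-matched neighbour $r$ of $p_u$, and $f$ lies entirely inside one component $C$ of $T-p_u$; distinct $M$-matched neighbours of $p_u$ lie in distinct components (the only path in $T$ between two neighbours of $p_u$ passes through $p_u$), and each such component, having at least two vertices, contains a leaf of $T$ other than $u$; we set $\iota(f)$ to be a canonically chosen such leaf of $C$ — for definiteness, the leaf of $C$ farthest from $p_u$, ties broken by the input order. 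For a fixed $u$ the edges of $M$ meeting $N[p_u]$ are pairwise distinct and sit in distinct components, so receive distinct leaves.

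What remains — and is, I expect, the main obstacle — is to verify that $\iota$ is \emph{globally} injective, i.e.\ that charges from two different witnesses $u\ne u'$ never collide. The tools I would marshal are: the core property (no vertex has two leaf-neighbours, hence distinct leaves have distinct neighbours and lie at pairwise distance $\ge 3$); the fact that all witnesses are taken in $L_1$, whose $4$-separation makes the sets $N[p_u]$ ($u\in L_1$) essentially pairwise disjoint; and the structural description of tree cores from \Cref{lem:mim_size_relation} and \Cref{lem:leafedmimisniceandfast}, which — together with the farthest-leaf choice — should pin the target leaf down tightly enough to rule out the collision cases. By contrast, the construction of $M_1,M_2$, the reduction to $|D_1\cap D_2|\le|L(T)|$, and the matched/unmatched dichotomy are all routine.
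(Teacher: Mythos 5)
Your framework — build $M_i=P_i\cup(M\setminus D_i)$, verify it is an induced matching saturating $L_i$, compute $|M_1|+|M_2|=|L|+2|M|-|D_1|-|D_2|$, then observe that $|D_1|+|D_2|\le|M|+|D_1\cap D_2|$ and so reduce the claim to $|D_1\cap D_2|\le|L(T)|$ — is sound, and the algebra is correct. But the reduction simply relocates the entire difficulty into that last inequality, and your proposed injection $\iota$ is not a proof of it: you explicitly stop at ``what remains \ldots\ is to verify that $\iota$ is globally injective,'' and that verification is the whole content of the bound. In particular your injection makes two kinds of free choices (which witness $u\in L_1$ to fix for a given $f\in D_1\cap D_2$, and which leaf of which component of $T-p_u$ to charge) without any argument that choices made for distinct edges $f\ne f'$ cannot collide; the ``farthest leaf, ties broken by input order'' convention controls the choice locally for a single $u$ but says nothing about two distinct witnesses $u\ne u'$ whose regions $N[p_u]$ and $N[p_{u'}]$ both touch leaves of the same branch. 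Invoking ``the structural description of tree cores'' as something that ``should pin the target leaf down tightly enough'' is a hope, not an argument. So this is a genuine gap, not a routine detail.

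For comparison, the paper avoids the $|D_1\cap D_2|\le|L|$ detour entirely. It also builds the same $M_1,M_2$ (its $B_i$ is your $P_i$ and its $Q_i$ is your $D_i$), but it then proves $|M_1|+|M_2|\ge|M|$ directly by exhibiting, for every $e\in M$, a distinct associated edge of $M_1\cup M_2$: edges of $M$ surviving in some $M_i$ go to themselves, and for each $uv\in Q_1\cap Q_2$ (with $v$ the parent of $u$ under the chosen root $r$, and $w$ the parent of $v$) it locates a leaf edge $\ell_i\ell_i'\in B_i$ at distance at most $2$, argues via the core property that the ``bad'' placement of $\ell_i$ as a child of $w$ can occur for at most one $i\in\{1,2\}$, and checks that two distinct $M$-edges cannot be associated to the same $\ell_i\ell_i'$ without contradicting that $M$ is an induced matching. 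That rooting/parent structure is precisely the global bookkeeping device your $\iota$ is missing. If you want to rescue your route, you would need an analogous global anchor — for instance, rooting $T$ and charging $f\in D_1\cap D_2$ to a leaf determined canonically by the highest endpoint of $f$ rather than by a freely chosen witness — and then prove collisions force two $M$-edges to be adjacent, contradicting the induced-matching property.
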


\begin{proof}
    Let $B_1$ be the set of all edges incident to $L_1$, and $B_2$ be the edges incident to $L_2$.
    Let $Q_1$ be edges of $M$ adjacent to $B_1$, and $Q_2$ be edges adjacent to $B_2$.
    Let $M_1$ and $M_2$ be two leafed induced matchings created as follows for each $i \in \{1,2\}$.
    Set $M_i = M \cup B_i \setminus Q_i$.
    Note that this creates an induced matching because we removed all edges of $M$ that would be incident to the added edges, and additionally all vertices in $L_i$ are mutually at distance at least 4 so edges in $B_i$ are mutually non-adjacent.
    As $M_i$ contains all leaves $L_i$ it is a leafed induced matching.
    Set $R_i$ is the maximum induced matching with leaves $L_i$ so we have $|R_i| \ge |M_i|$.
    We claim that $|M_1|+|M_2| \ge |M|$ and prove this claim by showing that for every $e \in M$ we can associate a unique edge from $M_1 \cup M_2$.
    Every edge $e \in M$ such that $e \in M_1$ or $e \in M_2$, we associate it with the respective copy in $M_1$ or $M_2$.
    The remaining edges were removed in both copies so they are contained in $Q_1 \cap Q_2$.
    Consider an edge $uv \in (Q_1 \cap Q_2)$.
    As $uv \in Q_1$ there is a leaf $\ell_1 \in L_1$ with an edge $\ell_1\ell'_1 \in B_1$ at distance at most $2$ from $u$ or $v$.
    Similarly we have an edge $\ell_2\ell'_2 \in B_2$.
    Recall $T$ is rooted in the vertex $r$, let w.l.o.g. $v$ be the parent of $u$.
    Let $w$ be the parent of $v$.
    For $i \in \{1,2\}$ for the edge $\ell_i\ell'_i$ We have the following two cases.
    \begin{itemize}
        \item If $\ell_i$ is a child of $v$ or $\ell_i$ is a child of $u$ or $\ell_i$ is a grandchild of $u$ -- we associate $uv$ with $\ell_i\ell'_i$.
        \item The remaining case where $\ell_i$ is a child of $w$ can happen only for one value of $i$ as if it happened twice then $w$ would have two leaf children which contradicts $T$ being a core. Hence, for some $i$ the first case always happens.
    \end{itemize}
    For every edge $uv \in M$ we associate it to some $\ell_i\ell'_i$ such that either $\ell_i\ell'_i$ is overlapping $uv$ or $v$ is the parent of $\ell'_i$.
    We conclude with seeing that if two distinct edges were associated to the same $\ell_i\ell'_i$ then they must have been adjacent or overlapping, which contradicts the induced matching property of $M$.
\end{proof}
This completes the proof of the lemma.
\qed

\subsection*{Proof of \Cref{lem:bijection}}
Let us denote endpoints of $e$ as 0 and 1 and product vertex created from $u \in V(G)$ and 0 be $u_0$, similarly let $u_1$ be the vertex for product of $u$ with 1.
We define the bijection as $f \colon E(G \times e) \to A(G)$, where $A(G) = \{(u,v) \mid uv\in E(G)\}$ is the set of all split orientations of the edges in $G$.
Note that $|E(G \times e)| = 2 \cdot |E(G)| = |A(G)|$.

We define $f(u_0v_1) = (u,v)$ for each $u_0v_1 \in E(G \times e)$; as each edge in $E(G \times e)$ is between $V(G) \times 0$ and $V(G) \times 1$ the function $f$ is well defined, and $f$ is a bijection because every element $(u,v) \in A(G)$ has a pre-image, defined by the (undirected) $u_0v_1$. 

We prove that $f$ is a bijection between split orientation and an induced matching in two steps.
First, we assume that $S$ is an induced matching of $G \times e$. Hence, $S\subseteq E(G\times e)$. 
Let $M = f(S)=\{f(e) \mid e \in S\}$. We will prove that $M$ is a split orientation of $G$.

For the sake of contradiction, we assume that some $(u,v),(x,y) \in M$ contradicts $M$ being a split orientation.
Either $uy \in E(G)$ or $xv \in E(G)$, assume without loss of generality that we have the latter.
These arcs can be mapped back to the matching as $f^{-1}((u,v))=u_0v_1$ and $f^{-1}((x,y))=x_0y_1$.
As $xv \in E(G)$ we can get $f^{-1}((x,v))=x_0v_1$ which is an edge in $G \times e$.
But $x_0v_1$ is an edge that connects $u_0v_1$ with $x_0y_1$ which are edges of the induced matching in $G \times e$, a contradiction.

For the second step, assume that $S$ is not an induced matching of $G \times e$.
It contains a pair of edges witnessing it is not induced -- $u_0v_1$ and $x_0 y_1$ such that (without loss of generality) $v_1$ is connected to $x_0$.
Then, $f(u_0v_1) = (u,v)$ and $f(x_0 y_1) = (x, y)$ such that $f(v_1 x_0)=(x,v)$, hence, edge $xv \in E(G)$ and $f(S)$ is not a split orientation. Hence, the lemma is proved. \qed

\subsection*{Details of the proof of \Cref{lem:max_leaf_so}}

\subsubsection*{Proof of \Cref{clm:core-has-split-orientation}}
Let $T$ denote a core. Let $A_\mathrm{min} = \{(u,v) \mid u \in V(T) \land \mathrm{leaf}(u)\}$.
We prove that $A_\mathrm{min}$ satisfies the split orientation properties.
Towards a contradiction, assume that arcs $(u,v)$ and $(w,q)$, such that $u$ and $w$ are leaves in $T$, witness that $A_\mathrm{min}$ breaks a split orientation property.
Then either $uq \in E(T)$ or $wv \in E(T)$ but both $u$ and $w$ are leaves, in either case we have $v=q$.
As both leaves $u,w$ are adjacent to $v=q$ this vertex would be adjacent to two leaves which contradicts $T$ being a core.
The set $A_\mathrm{min}$ satisfies the split orientation properties and contains all \emph{leaf arcs} so $T$ has a leafed split orientation $A_\mathrm{min}$.
Note that by definition any leafed split orientation of $T$ contains $A_\mathrm{min}$.
By the above argument we see that a tree has a leafed split orientation if and only if it is a core.
\qed

\subsubsection*{Proof of \Cref{clm:leaves-are-at-distance-4}}

As the leaves in $L_1$ come from vertices $u_0 \in T_0$ we know that their distance from $r$ is even. Hence, for any two leaves of $L_1$, their distance from each other is also even.
It clearly cannot be 0, and in case it is 2 we would have 2 leaves adjacent to the same parent which contradicts $T$ being a core. Thus, the claim holds for $L_1$. Analogously, we can argue the same for two leaves $L_2$ with the modification that distances from $r$ is odd for both, so distance between each other is even. Hence, the claim is proved. \qed
        
\subsubsection*{Proof of \Cref{clm:maximum-leafed-orientation}}

Due to the bijection $f$ (by \Cref{lem:bijection}) it suffices to show that $g_1(R_1) \cup g_2(R_2)$ is a maximum leafed induced matching.
This is exactly what \Cref{lem:leafedmimisniceandfast} computes.
As $g_1$ and $g_2$ are straight-forward bijections we end up with the fact that $g_1(R_1)$ (resp. $g_2(R_2)$) is a maximum induced matching of $T_1$ (resp. $T_2$) that contains all of the leaves in $g_1(L_1)$ (resp. $g_2(L_2)$).

We first show that $|T| \leq 5|A|-1 $.
As $A=f(g_1(R_1) \cup g_2(R_2))$ is constructed through bijections we have $|A|=|R_1|+|R_2|$.
\Cref{lem:leafedmimisniceandfast} implies that $|R_1|+|R_2| \ge |M|$, where $M$ is a maximum induced matching in $T$.
By \Cref{lem:mim_size_relation} we know that $|T| \le 5|M|-1$.
From these three results we can conclude that $|T| \le  5|M|-1 \le 5(|R_1|+|R_2|)-1 = 5|A|-1$.

For the other inequality, we note that each arc in $A$ has a head and a tail and by the definition of split orientation no two arcs can have heads (or tails) on the same vertex. Thus, it follows that $|T| \ge |A|$.
However, if $|A| \ge 3$ then $|T|\ge 3$ and there is an arc with one non-leaf endpoint.
This endpoint, say head (tail has a symmetric argument), cannot be adjacent to a tail vertex different from its own.
So the total number of tail vertices is strictly smaller than $|T|$, and $|A|+1 \leq |T|$, follows consequently. 
\qed

\subsection*{Proof of \Cref{thm:wone_complex_components}}

We establish $W[1]$-hardness via a reduction from \textsc{Multicolored Independent Set} (MIS).
  
We aim to produce a graph $G \in \mathcal C_k$ that comprises of 0-edges (which define $G_0$) and 1-edges (which define $G_1$) and models the MIS instance on graph $H$.
Let $\mathcal I \in \mathcal C_k$ be the instance that contains at least $k$ trees $T_1,\dots,T_k$ in its 1-edges such that for each $i \in [k]$ the tree $T_i$ has a core of size at least $5 n_0-1$
As $\mathcal C_k$ is closed under editing 0-edges let us begin by having $G_1$ equal to the 1-edges of $\mathcal I$, then we describe how to add 0-edges to $G_0$ to arrive at $G$ in the end; for which $G \in \mathcal C_k$ because it has identical 1-edges to $\mathcal I$.

Let $n_0 = \max_i |V_i|$ and apply \Cref{lem:core_procedure} to retrieve core $C_i$ for each $T_i$.
Due to \Cref{lem:max_leaf_so} the maximum leafed split orientation $A_i$ has size at least $n_0$.
For simplicity, assume for a moment that $|V_i|=|A_i|=n_0$ for each $i \in [k]$.
To each core $C_i$ we add edges as per \Cref{lem:whard_options} to fix roots $r_{i,1},\dots,r_{i,n_0}$ and their private neighborhoods $s_{i,1},\dots,s_{i,n_0}$.
Rooting $T_i$ at $r_{i,j}$ is meant to represent selecting $j$-th vertex of color $V_i$ in the MIS instance.
We prevent concurrent selection of roots that represent neighboring vertices of MIS by connecting their private neighborhoods as follows.

For every edge $uv \in V(H)$ in the MIS instance, let $i,i',j,j'$ be integers such that $u$ is the $j$-th vertex of $V_i$, and $v$ is the $j'$-th vertex of $V_{i'}$.
To $G_0$, we add a 0-edge from $s_{i,j}$ to $s_{i',j'}$.

To show equivalence in one direction, assume that we have a yes-instance of MIS with a solution $S$.
For each $i \in [k]$, let $v_{i,j}$ be the $j$-th vertex of $V_i$ such that $v_{i,j} \in S$, and let us root $C_i$ in $r_{i,j}$.
Towards a contradiction, assume that this is not a valid rooting.
Then there exists a pair $r_{i,j}$, $r_{i',j'}$ such that their neighborhoods do not form an independent set.
As both $r_{i,j}$ and $r_{i',j'}$ have an independent neighborhoods of their own, the violating 0-edge must go from one neighborhood to the other.
The trees $T_i$ and $T_{i'}$ are connected only with 0-edges between private neighborhood vertices and both $r_{i,j}$ and $r_{i',j'}$ are adjacent only to $s_{i,j}$ and $s_{i',j'}$ from this set, so there is a 0-edge $s_{i,j}s_{i',j'}$.
This edge was created due to an edge $v_{i,j}v_{i',j'} \in E(H)$ which witnesses that $S$ is not a solution, a contradiction.

In the other direction, assume that we have a rooting of the $T_i$s as a solution to the \EFXO instance.
By the construction the only roots that are feasible (have no 0-edges between their neighbors) are those returned from \Cref{lem:whard_options}.
For any $i,i' \in [k]$ for the two selected roots $r_{i,j},r_{i',j'}$ there is no 0-edge $s_{i,j}s_{i',j'}$ so pulling these vertices back to MIS by selecting $v_{i,j}$ which is the $j$-th vertex of $V_i$ and $v_{i',j'}$ which is the $j'$-th vertex of $V_{i'}$, there is no edge $v_{i,j}v_{i',j'}$ in $H$.

For the general case where we have $|V_i| \le n_0 \le |A_i|$ for each $i \in [k]$ we want the extra roots to mimic the last root.
We take the last $|A_i|-|V_i|$ roots of $A_i$ and connect their private neighborhoods to wherever the private neighborhood of is connected $r_{i,|V_i|}$.
More specifically, we process these extra roots one by one for each $i \in [k]$, this for example results in an edge $r_{i,|V_i|}r_{i',|V_{j'}|}$ creating a complete bipartite graph between the respective extra roots.
As the private neighborhoods of these extra roots are exactly the same as the private neighborhood of the root they are mimicking any solution that roots the tree in them can be altered to root in $r_{i,|V_i|}$ so the proof still holds in the general case. \qed

\end{document}